\newcommand\al{\alpha}
\newcommand\be{\beta}
\newcommand\ga{\gamma}
\newcommand\Ga{\Gamma}
\newcommand\de{\delta}
\newcommand\z{\zeta}
\newcommand\Om{\Omega}
\newcommand\la{\lambda}
\newcommand\La{{\Lambda}}
\newcommand{\eps}{\varepsilon}
\newcommand\si{\sigma}
\renewcommand\th{\theta}
\newcommand\R{\mathbb R}
\newcommand\C{\mathbb C}
\newcommand\Z{\mathbb Z}
\newcommand\Y{\mathbb Y}
\newcommand\PP{\mathbb P}
\renewcommand\L{\mathbb L}
\newcommand\MM{\mathbb M}
\renewcommand\P{\mathscr P}
\newcommand\const{\operatorname{const}}
\newcommand\Sym{\operatorname{Sym}}
\newcommand\RTab{\operatorname{RTab}}
\renewcommand\Re{\operatorname{Re}}
\newcommand\abcd{{a,b,c,d}}
\newcommand\alde{{\al,\be,\ga,\de}}
\newcommand\zw{{z,z',w,w'}}
\newcommand\wt{\widetilde}
\newcommand\one{\mathbf1}
\newcommand\LaN{\La^N_{N-1}}
\newcommand\ccdot{\,\cdot\,}
\newcommand\wh{\widehat}
\newtheorem{theorem}{Theorem}[section]
\newtheorem{proposition}[theorem]{Proposition}
\newtheorem{lemma}[theorem]{Lemma}
\newtheorem{corollary}[theorem]{Corollary}
\theoremstyle{definition}
\newtheorem{definition}[theorem]{Definition}
\newtheorem{remark}[theorem]{Remark}
\numberwithin{equation}{section}
\begin{document}

\title[]{Macdonald-level extension of beta ensembles and large-$N$ limit transition}

\author{Grigori Olshanski${}^{1,2,3}$}
\address{${}^1$Institute for Information Transmission Problems of the Russian Academy of Sciences, Moscow, Russia.\\ ${}^2$Skolkovo Institute of Science and Technology, Moscow, Russia.\\
${}^3$ National Research University Higher School of Economics, Moscow, Russia.\\{\rm email: olsh2007@gmail.com}}

\date{}

\begin{abstract}
We introduce and study a family of $(q,t)$-deformed discrete $N$-particle beta ensembles, where $q$ and $t$ are the parameters of Macdonald polynomials. The main result is the existence of a large-$N$ limit transition leading to random point processes with infinitely many particles. 
\end{abstract}

\keywords{Beta-ensembles; large-$N$ limit transition; Macdonald polynomials; multivariable interpolation polynomials; $q$-Selberg integral; big $q$-Jacobi polynomials; Koornwinder polynomials}

\maketitle

\tableofcontents

\section{Introduction}

We study a family of discrete $N$-particle ensembles depending on the Macdonald parameters $(q,t)$ and some additional parameters. The main result is the existence of a large-$N$ limit transition leading to random point processes with infinitely many particles. 

By an \emph{$N$-particle ensemble} we mean a stochastic system of $N$ distinct points, called \emph{particles}, on the real line. Random matrix theory stimulated the extensive study of \emph{beta-ensembles}. Recall that an $N$-particle beta-ensemble is defined by a probability distribution on $N$-tuples $X=(x_1>\dots>x_N)\in\R^N$, of the form
$$
M_N(dX)=\frac1{Z_N}\,V_\be(X) \prod_{i=1}^N W_N(x_i)dx_i, 
$$
where 
\begin{equation}\label{eq1.V_beta}
V_\be(X):= \prod_{1\le i<j\le N} (x_i-x_j)^\be,  
\end{equation}
$\be>0$ is a parameter, and $W_N(x)$ (the weight function) is a nonnegative function on $\R$. The term $V_\be(X)$ is responsible for the repulsive interaction between particles  --- in the sense that the density of $M_N$ goes to zero as particles approach each other.  

For the three values $\be=1,2,4$, the product $V_\be(X)$ has a simple interpretation: it comes from the radial part of the Lebesgue measure on the space of $N\times N$ Hermitian matrices over the fields $\R$, $\C$, and the skew field $\mathbb H$, respectively. These three values of $\be$ are also special because in this case the correlation functions of the $N$-particle ensembles have determinantal (for $\be=2$) or Pfaffian (for $\be=1,4$) structure, which makes it possible to develop powerful exact methods lacking for general $\be>0$. 

Of primary interest in random matrix theory is the asymptotical behavior of beta-ensembles as $N\to\infty$. In particular, one is interested in the study of point processes with infinitely many particles resulting from large-$N$ limit transitions in beta-ensembles.  There exists a huge literature on this subject for the special values $\be=1,2,4$, see e.g. the monographs Mehta \cite{Mehta}, Deift-Gioev \cite{DG}, Forrester \cite{F-2010}, Anderson-Guionnet-Zeitouni \cite{AGZ}. The case of general $\be>0$ is substantially more complicated, a breakthrough in its study occurred after the advent of the Dumitriu--Edelman tridiagonal model \cite{DE}.

Borodin, Gorin, and Guionnet proposed in \cite{BGG-2017} the following definition of \emph{discrete} $N$-particle ensembles: these are given by discrete probability distributions on $N$-tuples of integers $\nu=(\nu_1\ge\dots\ge\nu_N)\in\Z^N$, of the form 
\begin{equation}\label{eq1.Vdiscr1}
M_N(\nu)=\frac1{Z_N} V^{discr}_\tau(\mathbf n) \prod_{i=1}^N W_N(n_i),
\end{equation}
where
\begin{equation}\label{eq1.Vdiscr2}
\tau:=\tfrac12\be>0, \quad n_i:=\nu_i+(N-i)\tau, \quad \mathbf n:=(n_1,\dots,n_N),
\end{equation}
and
\begin{equation}\label{eq1.Vdiscr}
V^{discr}_\tau(\mathbf n):=\prod_{1\le i<j\le N}\frac{\Ga(n_i-n_j+1)\Ga(n_i-n_j+\tau)}{\Ga(n_i-n_j)\Ga(n_i-n_j+1-\tau)}.
\end{equation}
See the end of subsection 1.2  in \cite{BGG-2017} for a justification of the choice of \eqref{eq1.Vdiscr} as a discrete version of \eqref{eq1.V_beta}.  

In the discrete model, the particle positions are the $n_i$'s. If $\tau$ is a positive integer, then all particles range over the lattice $\Z\subset\R$, but for general $\tau>0$, each particle lives on its own lattice, shifted with respect to $\Z$.

The theory of beta-ensembles is intimately related to the Jack symmetric polynomials. In particular, the term  \eqref{eq1.V_beta}, responsible for the two-particle interaction, has the same analytic form as the orthogonality measure of  the Jack polynomials on the $N$-dimensional torus (Macdonald \cite[ch. VI, (10.36)]{Mac-1995}). 

The main novelty of the present paper is that we establish the existence of a large-$N$ limiting point process  in a model belonging to the higher level, that of the Macdonald polynomials. In particular, the two-particle interaction in our model depends on the two Macdonald parameters $(q,t)$, and its form is similar to that of the orthogonality measure for the Macdonald polynomials on the torus (Macdonald \cite[ch. VI, sect.9]{Mac-1995}). 

Our model may be viewed as a Macdonald-level extension of certain beta-ensembles (continuous and discrete) that arise in asymptotic representation theory. Those ensembles are also related to various Selberg-type integrals. See Borodin-Olshanski \cite{BO-2001}, Neretin \cite{Ner-2003}, Olshanski \cite{Ols-2003a}, \cite{Ols-2003b}, Borodin-Olshanski \cite{BO-2005}, Assiotis-Najnudel \cite{AN}.

The techniques of the paper are essentially algebraic and based on the interplay between several families of multivariable symmetric polynomials: Macdonald polynomials, interpolation polynomials, Koornwinder and big $q$-Jacobi polynomials.  

We proceed to a more detailed description of the contents of the paper. 

\subsection{The model}

We fix two Macdonald parameters $q$ and $t$ inside the open unit interval $(0,1)$. Throughout the paper we use the notation $\tau:=\log_q t$. In the introduction, we will assume that $\tau$ is a positive integer, which makes it possible to  simplify the exposition. However, in the body of the paper we drop this constraint and work with arbitrary $q,t\in(0,1)$.     

\subsubsection{Quantization of the real line: the two-sided $q$-lattice $\L$}\label{sect1.1.1}

Let us fix two extra parameters, $\zeta_+>0$ and $\zeta_-<0$, and consider the subset
\begin{equation}
\L:=\L_-\sqcup \L_+\subset\R, \qquad 
\L_\pm:=\zeta_\pm q^\Z,
\end{equation}
where 
$$
q^\Z:=\{q^n: n\in\Z\}.
$$
We call $\L$ the \emph{two-sided $q$-lattice} and we regard it as a $q$-version (`quantization')  of the real line.

By a \emph{configuration} on $\L$ we mean a subset $X\subset\L$. The points of $X$ and those of its complement  $\L\setminus X$ are interpreted as \emph{particles} and \emph{holes}, respectively. A configuration $X$ is said to be \emph{$\tau$-sparse} if any two neighboring particles are separated by at leat $\tau-1$ holes (this constraint disappears for $\tau=1$). 

\subsubsection{The Macdonald-level particle-particle interaction}

Recall that we temporarily assume that $\tau\in\Z_{\ge1}$ (where $\Z_{\ge1}:=\{1,2,3\dots\}$). It is known (Macdonald \cite[ch. VI, sect. 9]{Mac-1995}) that the Macdonald polynomials with the parameters $(q,q^\tau)$ are orthogonal on the torus 
$$
\mathbb T^N:=\{u=(u_1,\dots,u_N): \; |u_1|=\dots=|u_N|=1\}\subset (\C^*)^N
$$ 
with respect to the scalar product 
$$
(f,g):=\int_{\mathbb T^N}f(u)\overline{g(u)}\Delta(u;q,t) m(du),
$$
where 
\begin{equation}\label{eq1.Delta}
\Delta(u;q,t):=\prod_{1\le i\ne j\le N}\prod_{r=0}^{\tau-1}(1-u_j u_i^{-1} q^r)=\prod_{1\le i\ne j\le N}\prod_{r=0}^{\tau-1}|u_i-u_j q^r|, \quad u\in\mathbb T^N, 
\end{equation}
and $m(du)$ denotes the invariant probability measure on $\mathbb T^N$. Note that $\Delta(u;q,t)\ge0$ for $u\in\mathbb T^N$. 

By analogy with \eqref{eq1.Delta} we set, for an arbitrary $N$-tuple $X=(x_1>\dots>x_N)$ of real numbers, 
\begin{equation}\label{eq1.E}
V_{q,t}(X):=\left|\prod_{1\le i\ne j\le N}\prod_{r=0}^{\tau-1}(x_i-x_j q^r)\right|
=(-1)^{N(N-1)\tau/2} \prod_{1\le i\ne j\le N}\prod_{r=0}^{\tau-1}(x_i-x_j q^r).
\end{equation} 
If $\tau=1$, then the right-hand side does not depend on $q$ and turns into the squared Vandermonde. Note that   
$$
\lim_{q\to1} V_{q,q^\tau}(X)=\prod\limits_{1\le i<j\le N}(x_i-x_j)^{2\tau}, 
$$
so $\tau$ plays the role of $\be/2$. 

Assume now that  $x_1,\dots,x_N\in\L$, so that $X$ is an $N$-particle configuration on $\L$. Then $V_{q,t}(X)$ automatically vanishes when $X$ is not $\tau$-sparse, and is strictly positive whenever $X$ is $\tau$-sparse.

\subsubsection{Hypergeometric $N$-particle ensembles on $\L$}

Throughout the paper we use the standard notation
$$
(u;q)_\infty:=\prod_{n=0}^\infty(1-uq^n).
$$

Let $(\alde)$ be a quadruple of nonzero complex parameters (further conditions on them are specified below).  For each $N=1,2,\dots$ we define a function on $\L$ by
\begin{equation}\label{eq1.A}
W_N^\alde(x):=(1-q)|x|\frac{(\al x;q)_\infty(\be x;q)_\infty}{(\ga t^{1-N}x;q)_\infty(\de t^{1-N}x;q)_\infty}, \quad x\in\L.
\end{equation}

We say that the quadruple $(\alde)$ is \emph{admissible} if the following two conditions hold: 

(1) for any $N\in\Z_{\ge1}$ and any $x\in\L$, the denominator on the right-hand side of \eqref{eq1.A} does not vanish and the whole expression is nonnegative;

(2) let $\Om_N$ denote the set of $\tau$-sparse $N$-particle configurations on $\L$; we require that 
\begin{equation}
Z_N^\alde:=\sum_{X\in\Om_N}V_{q,t}(X)\prod_{x\in X} W_N^\alde(x) <\infty, \qquad N=1,2,\dots\,.
\end{equation} 

Given an admissible quadruple $(\alde)$, we may introduce, for each $N\in\Z_{\ge1}$, a probability measure $M_N^\alde$ on $\Om_N$ with the weights
\begin{equation}\label{eq1.ens}
M_N^\alde(X):=(Z_N^\alde)^{-1}\, V_{q,t}(X)\prod_{x\in X} W_N^\alde(x), \qquad X\in\Om_N.
\end{equation}

The measures $M_N^\alde$ and the corresponding $N$-particle ensembles deserve a special name; let us call them \emph{hypergeometric} measures/ensembles.  

We will not give a detailed description of the whole family of admissible quadruples of parameters but rather single out the following subfamilies, which we call  the principal, complementary, and degenerate series. This terminology is adopted by formal analogy with the nomenclature of representation theory.

\smallskip

$\bullet$ The \emph{principal series} is defined by the constraints
\begin{equation*}
\al=\bar\be\in\C\setminus\R, \quad \ga=\bar\de\in\C\setminus\R, \quad \al\be<\ga\de q.
\end{equation*}
Note that in this case $M_N^\alde(X)>0$ for each $X\in\Om_N$.  

$\bullet$ The \emph{complementary series} is defined likewise: we keep the condition $\al\be<\ga\de q$ and we still want both the numerator and the expression \eqref{eq1.A} for the function $W^\alde_N(x)$ to be strictly positive for all $x\in\L$, so that $M_N^\alde(X)>0$ for each $X\in\Om_N$. The difference is that at least one of the pairs $(\al,\be)$, $(\ga,\de)$ is allowed to be real. For instance, instead of the condition $\al=\bar\be\in\C\setminus\R$ one could require 
\begin{equation}\label{eq1.gammadelta}
\text{$\al,\be\in(\zeta_+^{-1} q^{m+1}, \zeta_+^{-1}q^m)$\; or\; $\al,\be\in(\zeta_-^{-1} q^m, \zeta_-^{-1}q^{m+1})$\; for some $m\in\Z$.}
\end{equation}

$\bullet$ The \emph{degenerate series} is defined by 
\begin{equation}\label{eq1.C}
\al\in\zeta_+^{-1} q^\Z, \quad \be\in\zeta_-^{-1} q^\Z,  \quad \ga=\bar\de\in\C\setminus\R. 
\end{equation}
In particular, $\be<0<\al$.  In contrast to the principal/complementary series, we have now  $M_N^\alde(X)=0$ whenever $X$ is not contained in the closed interval $[\be^{-1}q, \al^{-1}q]$. 

\begin{remark}\label{rem1.A}
The following $N$-fold $q$-integral was suggested by Askey \cite{Ask-SIAM} as a $q$-version of Selberg's beta integral \cite{ForrWar}: 
\begin{equation}\label{eq1.Askey}
\int_{b^{-1}q}^{a^{-1}q}\dots \int_{b^{-1}q}^{a^{-1}q} V_{q,t}(X)\prod_{i=1}^N\frac{(ax_i;q)_\infty(bx_i;q)_\infty}{(cx_i;q)_\infty(dx_i;q)_\infty} d_q x_i, \qquad t\in q^{\Z_{\ge1}}, \quad a>0, \; b<0
\end{equation}
(we only changed Askey's notation to make it closer to our notation). Askey \cite[Conjecture 8, p. 948]{Ask-SIAM} conjectured an explicit expression for the value of this integral, later proved by Evans \cite{Evans}.  The integrand in \eqref{eq1.Askey} is nothing else than our degenerate series hypergeometric measure (up to normalization), and Evans' evaluation of the integral \eqref{eq1.Askey} just yields an explicit expression for our normalization constant $Z^{q,t;\alde}_N$. About the case of general parameters $q,t\in(0.1)$ see Remark \ref{rem5.A}. 
\end{remark}

\begin{remark}
In a limit transition, the hypergeometric ensembles \eqref{eq1.ens} degenerate into certain discrete beta ensembles of type \eqref{eq1.Vdiscr1}, considered in the author's paper \cite{Ols-2003b}; for more detail, see section \ref{sect7.2} below. There is also another limit transition, which leads to  certain continuous beta ensembles of type \eqref{eq1.V_beta}, considered in Assiotis--Najnudel \cite{AN}; about this, see section \ref{sect7.3}.
\end{remark}

\subsection{The main result: large-$N$ transition}

First, we need to introduce an appropriate space of configurations containing all $\Om_N$'s. This is done in the following way.

We say that a particle configuration $X\subset \L$ is \emph{bounded} if it is contained in a bounded interval of $\R$. Note that the boundedness condition does not prevent $X$ from being infinite, because particles may accumulate to $0$. 

We denote by $\wt\Om$ the set of all bounded $\tau$-sparse configurations on $\L$. It is equipped with a topology making it a locally compact separable space. The space $\wt\Om$ has a stratification,
$$
\wt\Om=\Om_\infty\sqcup \bigsqcup_{N=0}^\infty\Om_N,
$$
where $\Om_N\subset\wt\Om$ is the set of $N$-particle configurations and $\Om_\infty\subset\wt\Om$ consists of configurations with infinitely many particles. Both $\Om_\infty$ and its complement $\bigsqcup_{N=0}^\infty\Om_N$ are dense in $\wt\Om$.

Let $(\alde)$ be a fixed admissible quadruple of parameters. Since  all sets $\Om_N$ are contained in the space $\wt\Om$, we may regard each measure $M_N^\alde$ as a probability measure on $\wt\Om$. Let $\P(\wt\Om)$ be the space of probability Borel measures on $\wt\Om$ equipped with the weak topology. 

Our main result is Theorem \ref{thm6.B}. We state here its  version for the case of $\tau\in\Z_{\ge1}$. 

\begin{theorem}\label{thm1.A}
As $N\to\infty$, the measures $M_N^\alde$ converge, in the weak topology of $\P(\wt\Om)$, to a probability measure $M_\infty^\alde$, supported by $\Om_\infty$. 
\end{theorem}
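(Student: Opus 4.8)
The plan is to establish relative compactness of the family $\{M_N^\alde\}$ in $\P(\wt\Om)$ (tightness) and then to pin down the limit by computing the large-$N$ behaviour of a determining family of observables; weak convergence will follow from the two together. First I would unpack the topology of $\wt\Om$. Since $\L$ accumulates only at $0$, for every $R>0$ and every $\eps>0$ the set $\L\cap\{x:\eps\le|x|\le R\}$ is finite, so a bounded $\tau$-sparse configuration can be infinite only by piling up near $0$. Consequently the subsets $K_R:=\{X\in\wt\Om: X\subset[-R,R]\}$ should be compact, and accumulation at $0$ causes no loss of compactness (a diagonal extraction stabilizes the configuration on $\{|x|\ge q^n\}$ as $n$ grows, and $\tau$-sparseness is a closed condition). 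Thus tightness reduces to the single uniform tail estimate
$$
\sup_N M_N^\alde\big(\{X: X\not\subset[-R,R]\}\big)\xrightarrow[R\to\infty]{}0,
$$
i.e. to uniform control of the extreme particles.

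To prove this estimate I would exploit the explicit decay of the weight $W_N^\alde(x)$ in \eqref{eq1.A} as $|x|\to\infty$ along $\L$, which comes from the ratio of $q$-Pochhammer symbols, together with a lower bound on the normalization $Z_N^\alde$. For the degenerate series the latter is available in closed form from the Askey--Evans $q$-Selberg evaluation (Remark~\ref{rem1.A}), and for general admissible parameters from the evaluation of $Z_N^\alde$ carried out in the body of the paper; combining these bounds one controls the probability of a particle outside $[-R,R]$ uniformly in $N$, giving $\sup_N M_N^\alde(K_R^c)\to0$ and hence relative compactness of $\{M_N^\alde\}$.

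For the convergence of observables, a convenient determining class on $\wt\Om$ consists of bounded continuous cylinder functions depending on $X$ only through $X\cap(\L\setminus(-\eps,\eps))$; equivalently one tracks the correlation functions $\rho_N^{(k)}(\xi_1,\dots,\xi_k)=M_N^\alde(\{X\supseteq\{\xi_1,\dots,\xi_k\}\})$ for $\xi_1,\dots,\xi_k\in\L$. The algebraic core of the argument is to express suitable $M_N^\alde$-averages in closed form via the interplay of Macdonald, interpolation, and big $q$-Jacobi/Koornwinder polynomials: the $M_N^\alde$-average of an interpolation polynomial should have an explicit product form, and these polynomials are a rich enough family to recover the correlation functions. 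Passing to $N\to\infty$ in these exact formulas --- using the stability of interpolation polynomials under the embeddings $N\mapsto N+1$ and the effect of the $t^{1-N}$-scaling in $W_N^\alde$ --- yields limits $\rho_\infty^{(k)}$. I then check that the limiting correlations are those of a point process for which, for every $k$, the number of particles in $(-\eps,\eps)$ exceeds $k$ with probability tending to $1$, which forces $M_\infty^\alde$ to be supported on $\Om_\infty$. Tightness together with convergence on this determining class identifies a unique limit and upgrades to weak convergence in $\P(\wt\Om)$.

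The main obstacle I expect is the algebraic heart of the third step: obtaining the exact averages and controlling their $N\to\infty$ limit. In particular, the scaling $t^{1-N}$ (with $t^{1-N}\to\infty$) means the one-particle and correlation structure genuinely deforms with $N$, and the delicate point is to show that these deformations converge rather than collapse. A secondary difficulty is confirming that the limit truly lives on $\Om_\infty$ --- that mass is neither prematurely ``swallowed by $0$'' at finite $N$ nor retained on finite configurations in the limit. This support statement rests on the lower bound for the occupation of neighborhoods of $0$, which must again come from the asymptotics of the exact formulas rather than from any soft compactness argument.
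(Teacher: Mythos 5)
Your overall scheme (tightness of $\{M_N^\alde\}$ in $\P(\wt\Om)$ plus convergence of a determining family of observables) is a legitimate alternative to the paper's route, which is entirely different: the paper imports from the companion work a chain of stochastic links $\LaN:\Om_N\dasharrow\Om_{N-1}$ whose boundary is identified with $\Om_\infty$, together with an abstract theorem saying that any coherent system converges weakly to its boundary measure; the whole proof then reduces to verifying the single identity $M^\alde_N\LaN=M^\alde_{N-1}$, which is done by combining the intertwining property \eqref{eq1.Macd} of the links with Macdonald polynomials, the orthogonality of the degenerate-series measures with respect to the multivariate big $q$-Jacobi polynomials, and (crucially) the $N$-independence of the coefficients $\pi(\la,\nu;q,t;\alde)$ in the expansion \eqref{eq3.stability}; the principal series is then handled by analytic continuation in $(\al,\be)$. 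Your plan bypasses the links entirely, which is in principle possible, but the two steps you yourself flag as the hard ones are exactly where the proof lives, and they are not supplied.

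Concretely: (1) The ``exact averages'' step is a placeholder. What makes $\langle M_N^\alde,\,P_{\nu\mid N}/(t^N;q,t)_\nu\rangle$ both computable and convergent is precisely the stability of the coefficients in \eqref{eq3.stability} (equivalently of the inverse expansion of Macdonald polynomials in big $q$-Jacobi polynomials), together with $(t^N;q,t)_\nu\to1$; the stability of the interpolation polynomials under $x_N=t^{N-1}$, which is what you invoke, is not the relevant input. Without isolating this lemma the identification of the limit does not get off the ground, and even granting it, passing from averages of symmetric polynomials to correlation functions or to cylinder observables requires a Stone--Weierstrass/moment argument that works only on compact supports (degenerate series); for the principal series you would additionally need uniform integrability and moment determinacy, which you do not address. (2) The statement that $M_\infty^\alde$ is supported on $\Om_\infty$ cannot be extracted from tightness: since $N_\eps(X):=|X\cap\{|x|\ge\eps\}|$ is a continuous integer-valued functional, one must show that $\lim_N M_N^\alde(N_\eps\le k)\to0$ as $\eps\to0$ for every $k$, i.e.\ rule out that all but finitely many particles collapse to $0$ so fast that the limiting configuration is finite (or empty). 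You acknowledge this but give no mechanism; in the paper it is automatic because the boundary of the chain \eqref{eq1.F} is identified with $\Om_\infty$ in the companion paper, a substantial result in its own right. As written, the proposal is a plausible outline with the two central steps missing.
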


In Theorem \ref{thm6.B} we prove a similar claim for arbitrary $q,t\in(0,1)$, without the constraint $t\in q^{\Z_{\ge1}}$.  Then the picture becomes more complicated: each particle now lives on its own $q$-lattice (of the form $\zeta_\pm t^m q^\Z$, where $m$ is a fixed nonnegative integer depending on the particle's label),  the definition of $V_{q,t}(X)$ given in \eqref{eq1.E} is replaced by a more sophisticated one, etc. 

\subsection{Method of proof}

\subsubsection{Projective chains and boundaries}\label{sect1.3.1}
Here we briefly describe the general formalism, within which we prove the existence of the desired large-$N$ limit. 

As above, $\P(\ccdot)$ denotes the set of probability measures on a measurable space. A \emph{Markov kernel} between two measurable spaces, $S$ and $S'$, is a measurable map $L: S\to \P(S')$ (Meyer \cite{Mey-1966}). We represent $L$ by a dashed arrow $S\dashrightarrow S'$ and treat $L$ as a `generalized map' from $S$ to $S'$. Note that it induces a conventional map $\P(S)\to\P(S')$.

In the particular case when $S$ and $S'$ are finite or countable sets, a Markov kernel $L: S\dasharrow S'$ is given by a stochastic matrix $[L(s,s')]$ of the format $S\times S'$.

By a \emph{projective chain} we mean an infinite sequence $S_1, S_2, S_3, \dots$ of finite or countable sets linked by stochastic matrices $L^N_{N-1}: S_N\dasharrow S_{N-1}$: 
\begin{equation}\label{eq1.K}
S_1\stackrel{L^2_1}{\dashleftarrow}S_2\stackrel{L^3_2}{\dashleftarrow}\cdots\stackrel{L^{N-1}_{N-2}}{\dashleftarrow}S_{N-1}\stackrel{L^N_{N-1}}{\dashleftarrow} S_N \stackrel{L^{N+1}_N}{\dashleftarrow}\cdots\,.
\end{equation}
For the matrices $L^N_{N-1}$, we use the term \emph{stochastic links}.  

For any projective chain one can define, in a canonical way, its \emph{boundary} (Winkler \cite[ch. 4]{Win-1985}). It is a Borel space $S_\infty$, which comes with Markov kernels $\La^\infty_N: S_\infty\dashrightarrow S_N$ satisfying the relations $L^\infty_N L^N_{N-1}=L^\infty_{N-1}$ (a composition of Markov kernels is read from left to right). The boundary $S_\infty$ can be interpreted as the inverse limit of \eqref{eq1.K} in the category-theoretical sense: the corresponding category is formed by standard Borel spaces (as objects) and Markov kernels (as morphisms), see Winkler \cite[ch. 4]{Win-1985}. 

By a \emph{coherent system of measures} we mean a sequence $\{M_N\in\P(S_N): N\in\Z_{\ge1}\}$, such that  
\begin{equation}\label{eq1.D}
\sum_{s\in S_N}M_N(s) L^N_{N-1}(s,s') =M_{N-1}(s'), \qquad  N\ge2, \quad s'\in S_{N-1}.
\end{equation}
This relation can be abbreviate as $M_N L^N_{N-1}=M_{N-1}$, where we treat measures as row vectors. We call \eqref{eq1.D} the \emph{coherency relation}. 

The characteristic property of the boundary is that there is a bijective correspondence $\{M_N\}\leftrightarrow M_\infty$ between coherent systems $\{M_N\}$ and probability measures $M_\infty\in\P(S_\infty)$, given by the relations $M_N=M_\infty L^\infty_N$. We call $M_\infty$ the \emph{boundary measure} of $\{M_N\}$. 

This can be rephrased as the identification
$$
\varprojlim \P(S_N)=\P(S_\infty),
$$
where the maps $\P(S_N)\to \P(S_{N-1})$ are defined by the stochastic links.  

For more details about this formalism see Dynkin \cite{Dyn-1971}, \cite{Dyn-1978}, Winkler \cite{Win-1985}, Olshanski \cite{Ols-2003a}, Borodin--Olshanski \cite{BO-2005}, \cite{BO-Book}.

\subsubsection{Preliminary results}\label{sect1.3.2}

We use the following three theorems from the companion paper \cite{Ols-MacdonaldOne}. 

Theorem A constructs a family of stochastic links $\LaN: \Om_N\dasharrow \Om_{N-1}$ depending on $(q,t)$ and $(\zeta_+,\zeta_-)$. They admit an explicit description, but more important is their characterization in terms of their action on Macdonald polynomials, which looks as follows. 

Let $\Sym(N)$ denote the algebra of symmetric polynomials in $N$ variables, realized as a space of functions on $\Om_N$. The stochastic matrix $\LaN$ determines a linear map 
$$
\Sym(N-1) \to \Sym(N), \qquad F\mapsto \LaN F,
$$
and it turns out that 
\begin{equation}\label{eq1.Macd}
\LaN \wt P_{\nu\mid N-1}=\wt P_{\nu\mid N} \qquad \text{for any partition $\nu$ of length at most $N-1$},
\end{equation}
where $\wt P_{\nu\mid N-1}$ denotes the slightly renormalized $(N-1)$-variate symmetric Macdonald polynomial indexed by $\nu$, and $\wt P_{\nu\mid N}$ is the similar polynomial in $N$ variables and with \emph{the same} index $\nu$.   

Theorem B identifies the boundary of the projective chain 
\begin{equation}\label{eq1.F}
\Om_1\stackrel{\La^2_1}{\dashleftarrow}\Om_2\stackrel{\La^3_2}{\dashleftarrow}\cdots\stackrel{\La^{N-1}_{N-2}}{\dashleftarrow}\Om_{N-1}\stackrel{\La^N_{N-1}}{\dashleftarrow} \Om_N \stackrel{L^{N+1}_N}{\dashleftarrow}\cdots\,.
\end{equation}
with the space $\Om_\infty$. 

Finally, Theorem C asserts that  if $\{M_N\}$ is an arbitrary coherent system for the chain \eqref{eq1.F}, then the measures $M_N$ weakly converge to their boundary measure $M_\infty$. 

\subsubsection{Main ideas of the proof}\label{sect1.3.3}

By virtue of the results outlined in subsections \ref{sect1.3.1} and \ref{sect1.3.2}, to prove Theorem \ref{thm6.B} it suffices to show that the hypergeometric measures satisfy the coherency relation with the links $\LaN$:
\begin{equation}\label{eq1.coherency}
M^\alde_N \LaN=M^\alde_{N-1}.
\end{equation}

With all terms written  explicitly, \eqref{eq1.coherency} looks as an intricate hypergeometric identity. However, we do not attempt to check it directly, our argument relies on \eqref{eq1.Macd} and some properties of the measures $M^\alde_N$, which are also linked with Macdonald polynomials. 

The first idea is to treat the measures of the principal (or complementary) series as an analytic continuation of the degenerate series measures. Using this trick, one can obtain the coherency relation for the principal/complementary series  from the one for the degenerate series. This enables us to focus on the case of the degenerate series measures.  

Next, we use the fact that the degenerate series measures serve as the orthogonality measures for a system of multivariate orthogonal polynomials, the big $q$-Jacobi polynomials (Stokman \cite{St}). 

As a consequence, our task reduces to showing that the big $q$-Jacobi orthogonality measures with suitably adjusted parameters are consistent with the stochastic links  $\LaN$. This claim, in turn, is obtained by combining the following two results:

(1) the consistency of the links $\LaN$ with the Macdonald polynomials, expressed by \eqref{eq1.Macd}; 

(2) a stability property of the expansion of the big $q$-Jacobi polynomials on Macdonald polynomials: after a little renormalization of the polynomials, the coefficients of the expansion do not depend on $N$.  

The latter fact is the core of the proof.

\subsection{Comments and open questions} 

1. In the particular case of equal Macdonald parameters, $t=q$, the claim of Theorem \ref{thm1.A} was proved earlier by Gorin and the author in \cite{GO-2016}. The techniques of the present paper are different from those of \cite{GO-2016}. 

2. The limit measures $M_\infty^\alde\in\P(\Om_\infty)$ afforded by Theorem \ref{thm6.B} give rise to a new family of point processes with infinitely many particles.  

3. In the case $t=q$, the $N$-particle hypergeometric ensembles are discrete orthogonal polynomial ensembles. As a consequence, their correlation functions have determinantal structure. This property is inherited by the limit point process defined by $M_\infty^\alde$, and the limit correlation kernel was explicitly computed in \cite{GO-2016}. 

4. Another special case is that of $t=q^2$. Then the $N$-particle ensembles are Pfaffian ensembles. It would be interesting to compute the corresponding Pfaffian correlation kernels and study their large-$N$ limits, as it was done in \cite{GO-2016} for the determinantal case. 

5. Further results about the limit process with $t=q$ were obtained by Cuenca, Gorin and the author in \cite{CGO}. In particular, it was shown there that the measure $M^\alde_\infty$ is \emph{diffuse} (that is, has no atoms). It seems plausible that this property holds for general values of $(q,t)$.

6. In \cite{CGO}, we studied the \emph{tail process} which is responsible for the asymptotics of the `$t=q$' limit process near $0$. This is a point  process  on the disjoint union $\Z\sqcup \Z$, which is \emph{stationary} in the sense that it is invariant under simultaneous shifts along the two copies of the lattice $\Z$. It seems plausible that a similar tail process exists for $t\in q^{\Z_{\ge1}}$, and it would be interesting to construct it. Perhaps this can be done for $t=q^2$ by a suitable extension of the techniques from \cite{CGO}. But for arbitrary $t\in q^{\Z_{\ge1}}$ one needs to find a new approach.  

7. With each limit measure $M^\alde_\infty$ from the degenerate series one can associate an inhomogeneous basis in the algebra of symmetric functions. The elements of this basis, called \emph{big $q$-Jacobi symmetric functions}, are orthogonal with respect to $M^\alde_\infty$ (Theorem \ref{thm7.A}). This is an example of an \emph{infinite-variate} version of symmetric  orthogonal polynomials; other examples are discussed in \cite{CO}. 

8. Borodin and Corwin \cite{BC} introduced and studied \emph{Macdonald processes} --- a family of probability measures on sequences of partitions, directly related to Macdonald symmetric functions with two parameters $(q,t)$. It is unclear whether it is possible to relate our hypergeometric ensembles to (a particular case of) Macdonald processes (cf. \cite{BO-2013}).  

9. A general model of discrete beta ensembles with two Macdonald parameters $(q,t)$ was considered by Borodin, Gorin, and Guionnet in \cite[sect. 4]{BGG-2017}. Their definition \cite[(82)]{BGG-2017} of the particle-particle interaction is similar to \eqref{eq1.Vdiscr}, but with Gamma functions replaced by $q$-Gamma functions. Our expression for the particle-particle interaction (\eqref{eq1.E} and \eqref{eq5.D}) agrees with  \cite[(82)]{BGG-2017} for configurations $X$ on the right semi-axis $\R_{>0}$.  Dimitrov and Knizel \cite{DK} studied  general $(q,t)$-deformed discrete beta ensembles in which the particle-particle interaction corresponds to a higher level of the hypergeometric hierarchy. However, the problem settings and the results of \cite{BGG-2017} and \cite{DK} are very different from what is being done in the present paper.

\subsection{Organization of the paper} 

The goal of sections \ref{sect2} and \ref{sect3} is to prove the key stability property for the expansion of the $N$-variate big $q$-Jacobi polynomials on the Macdonald polynomials (see subsection \ref{sect1.3.3}). 

Section \ref{sect2} collects necessary results about the $A$-type interpolation Macdonald polynomials. Section \ref{sect3} contains the main computation. We introduce the big $q$-Jacobi polynomials as a degeneration of the Koornwinder polynomials. From this we derive the desired expansion in two steps: first, we expand the big $q$-Jacobi polynomials on the interpolation polynomials, and then the latter are expanded on the Macdonald polynomials. The final result, which we need for the proof of the main theorem, is Corollary \ref{cor3.A}. 

In sections \ref{sect4} and \ref{sect5} we introduce the hypergeometric ensembles and prove some necessary estimates. We do it first for $N=1$ (in section \ref{sect4}) and then move to the $N$-dimensional case (in section \ref{sect5}). In the end of section \ref{sect5} we discuss, without proofs, two degenerations of the hypergeometric measures giving a descent from the Macdonald level to the Jack level. 

After this preparation we prove, in section \ref{sect6}, the main result. 

The last section \ref{sect7} contains a few concluding remarks without proofs. Here we briefly discuss the following topics:
 
$\bullet$ a lifting of $N$-variate big $q$-Jacobi polynomials to the algebra of symmetric functions (this extends the construction of  \cite{Ols-FAA-2017}) and 
 
$\bullet$ the degeneration of the hypergeometric $N$-particle ensembles to discrete and continuous beta ensembles.

\section{Multivariable interpolation polynomials}\label{sect2}

\subsection{Basic notation}\label{sect2.1}

We set $\Z_{\ge1}:=\{1,2,3,\dots\}$ and $\Z_{\ge0}:=\{0,1,2,\dots\}$. We denote by $\Y$  the set of partitions, and we identify partitions with their Young diagrams, as in \cite{Mac-1995}. Elements of $\Y$ will be denoted by the lowercase Greek letters $\la,\mu,\nu$. Next, we set:

\smallskip

$\bullet$ $\ell(\la):=\min\{i: \la_{i+1}=0\}$ is the length of the partition $\la$;

$\bullet$ $|\la|:=\sum_i\la_i$ is the size of $\la$ (i.e. the number of boxes in the corresponding Young diagram);

$\bullet$ $n(\la):=\sum_i(i-1)\la_i$;

$\bullet$ $\la\supseteq\mu$ or $\mu\subseteq\la$ means that diagram $\la$ contains diagram $\mu$;

$\bullet$ $\la'$ is the transpose of diagram $\la$;

$\bullet$ $(i,j)\in\la$ means that the box $(i,j)$ with the row number $i$ and the column number $j$ is contained in $\la$;

$\bullet$ $q$ and $t$ are two parameters; unless otherwise stated, we assume $0<q<1$, $0<t<1$;  

$\bullet$ $\Y(N)\subset \Y$ is the subset of partitions whose length is at most $N$;

$\bullet$  $\Sym(N)$ is the algebra of $N$-variable symmetric polynomials over the base field $\R$ or $\C$ (depending on the context); 

$\bullet$ for an arbitrary number $u\in\C$,
\begin{gather}
(u;q)_n:=\prod_{i=1}^n(1-uq^{i-1}), \quad n\in\Z_{\ge0}; \qquad (u;q)_\infty:=\prod_{i=1}^\infty(1-uq^{i-1}); \notag\\
(u;q,t)_\nu:=\prod_{i=1}^{\ell(\nu)}(u t^{1-i};q)_{\nu_i}, \quad \nu\in\Y; \notag
\end{gather}

$\bullet$ $P_{\nu\mid N}(\ccdot;q,t)$ is the $N$-variable symmetric Macdonald polynomial indexed by partition $\nu\in\Y(N)$, $N=1,2,\dots$; 

$\bullet$ for $\la\in\Y(N)$, 
$$
X_N(\la):=(q^{-\la_1},\, q^{-\la_2}t,\, q^{-\la_3}t^2,\,\dots\, q^{-\la_N}t^{N-1})\in\R^N.
$$

\subsection{Symmetric interpolation polynomials}

These polynomials are a far-reaching generalization of the univariate polynomials
$$
(x-1)(x-q^{-1})\dots(x-q^{1-n}), \qquad n=1,2,\dots,
$$
which are a special instance of the Newton interpolation polynomials. The theory of symmetric interpolation polynomials of several variables is due to Knop, Okounkov, and Sahi. Here is a brief overview of a number of their results that we need, presented in a slightly reformulated form.  

For proofs, see Knop \cite{Knop-CMH}, Sahi \cite{Sahi-IMRN}, Okounkov \cite{Ok-MRL}, \cite{Ok-CM}, \cite{Ok-AAM}. Note also the earlier paper Sahi \cite{Sahi-PM} which is used in the proof of the existence of the interpolation polynomials. 

In the special case $t=q$ the Macdonald polynomials turn into the Schur polynomials and all results stated below can be proved directly in an elementary way (see \cite{Ols-FAA-2017}). 

Our notation for the interpolation polynomials is 
$$
I_{\mu\mid N}(\ccdot;q,t)=I_{\mu\mid N}(x_1,\dots,x_N;q,t), \qquad \mu\in\Y(N).
$$
In the notation of Okounkov \cite{Ok-MRL}, \cite{Ok-CM},
\begin{equation}\label{eq2.B}
I_{\mu\mid N}(x_1,\dots,x_N;q,t)=P^*_\mu(x_1, x_2 t^{-1},\dots,x_N t^{1-N}; q^{-1}, t^{-1}), \qquad \mu\in\Y(N).
\end{equation}

The polynomials $I_{\mu\mid N}(\ccdot;q,t)$ are characterized (within normalization) by the following properties: they are symmetric and 
\smallskip

$\bullet$ $I_{\mu\mid N}(\ccdot;q,t)$ has degree $|\mu|$;

$\bullet$ $I_{\mu\mid N}(X_N(\la))=0$ for all $\la\in\Y(N)$ such that $|\la|\le|\mu|$ and $\la\ne\mu$;

$\bullet$ $I_{\mu\mid N}(X_N(\mu))\ne0$.

\smallskip

We call these polynomials the \emph{$N$-variate interpolation polynomials}. Their normalization is indicated below. 
Further properties are the following. 

\smallskip

$\bullet$ \emph{Extra vanishing property}: $I_{\mu\mid N}(X_N(\la);q,t)$ vanishes whenever the diagram $\la$ does not contain the diagram $\mu$.

$\bullet$ \emph{Connection with Macdonald polynomials}: the top homogeneous component of $I_{\mu\mid N}(\ccdot;q,t)$ is proportional to the Macdonald polynomial $P_{\mu\mid N}(\ccdot;q,t)$. 

The normalization is defined so that the top homogeneous component of $I_{\mu\mid N}(\ccdot;q,t)$ is exactly $P_{\mu\mid N}(\ccdot;q,t)$. 

$\bullet$ \emph{Combinatorial formula}:
\begin{equation}\label{eq2.A}
I_{\mu\mid N}(x_1,\dots,x_N;q,t)=\sum_{T\in\RTab(\mu,N)}\psi_T(q,t)\prod_{(i,j)\in\mu}(x_{T(i,j)}-q^{1-j}t^{T(i,j)+i-2}),
\end{equation}
where $\RTab(\mu,N)$ is the set of reverse semistandard tableaux of shape $\mu$ and with values in $\{1,\dots,N\}$ (here `reverse' means that the entries of $T$ weakly decrease along the rows and strongly decrease down the columns) and $\psi_T(q,t)$ is the weight of $T$ entering the combinatorial formula for Macdonald polynomials, which reads in our notation as
$$
P_{\mu\mid N}(x_1,\dots,x_N;q,t)=\sum_{T\in\RTab(\mu,N)}\psi_T(q,t)\prod_{(i,j)\in\mu}x_{T(i,j)}
$$
(cf. \cite[Section VI.7]{Mac-1995}). (In order to translate \eqref{eq2.A} into the language of conventional tableaux it suffices to replace the exponent of $t$ with $N-T(i,j)+i-1$.)

$\bullet$ \emph{Expansion on Macdonald polynomials}: see \eqref{eq2.D} below. 

$\bullet$ \emph{Stability}: If $\mu\in\Y(N)$, then
$$
I_{\mu\mid N}(x_1,\dots,x_N)\big|_{x_N=t^{N-1}}=\begin{cases} I_{\mu\mid N-1}(x_1,\dots,x_{N-1}), & \mu\in\Y(N-1), \\
0, & \text{otherwise}.
\end{cases}
$$
As shown in \cite{Ols-2019}, the stability property makes it possible to define interpolation \emph{symmetric functions} ---  large-$N$ limits of the polynomials $I_{\mu\mid N}(\ccdot;q,t)$ in the algebra of symmetric functions, but in the present paper we do not use this result. 

\subsection{Expansion of interpolation polynomials in the basis of Macdonald polynomials.} 
In formula \eqref{eq2.C} below it is convenient to temporarily use Okounkov's notation (see \eqref{eq2.B}). Then the stability property reads as
$$
P^*_\mu(y_1,\dots,y_{N-1},1;q,t)=P^*_{\mu}(y_1,\dots,y_{N-1};q,t)
$$
with the understanding that $P^*_\mu(y_1,\dots,y_N;q,t)\equiv0$ if $\ell(\mu)>N$. Due to this property, the quantity $P^*_\mu(y_1,y_2,\dots;q,t)$ is well defined for any $\mu\in\Y$ and any infinite sequence $(y_1,y_2,\dots)$ with finitely many terms distinct from $1$.  

For arbitrary $\mu,\,\nu\in\Y$, we set
$$
q^\mu:=(q^{\mu_1},q^{\mu_2},\dots)
$$
and
\begin{equation}\label{eq2.C}
\si(\mu,\nu;q,t):=\frac{(-1)^{|\mu|}t^{2n(\mu)}q^{-n(\mu')}}{(-1)^{|\nu|}t^{2n(\nu)} q^{-n(\nu')}}\,\frac{P^*_\nu(q^\mu;q,t)}{P^*_\nu(q^\nu;q,t)}\, \frac{\prod\limits_{(i,j)\in\nu}(1-q^{\nu_i-j}t^{\nu'_j-i+1})}{\prod\limits_{(i,j)\in\mu}(1-q^{\mu_i-j}t^{\mu'_j-i+1})}.
\end{equation}

Note that 
\begin{equation}\label{eq2.E}
\text{$\si(\mu,\mu;q,t)=1$ and $\si(\mu,\nu;q,t)=0$ unless $\mu\supseteq\nu$}, 
\end{equation}
where the second equality follows from the extra vanishing property for the polynomials $P^*_\nu(\ccdot; q,t)$.  

\begin{proposition}
For $\mu\in\Y(N)$, the following expansion holds
\begin{equation}\label{eq2.D}
\frac{I_{\mu\mid N}(\ccdot;q,t)}{(t^N;q,t)_\mu}=\sum_{\nu\subseteq\mu}\si(\mu,\nu;q,t)\frac{P_{\nu\mid N}(\ccdot;q,t)}{(t^N;q,t)_\nu}.
\end{equation}
Or, equivalently,
\begin{equation}\label{eq2.F}
I_{\mu\mid N}(\ccdot;q,t)=\sum_{\nu\subseteq\mu}\frac{(t^N;q,t)_\mu}{(t^N;q,t)_\nu}\si(\mu,\nu;q,t)P_{\nu\mid N}(\ccdot;q,t).
\end{equation}
\end{proposition}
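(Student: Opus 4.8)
The plan is to show that the transition matrix from the interpolation basis $\{I_{\mu\mid N}(\ccdot;q,t)\}$ to the Macdonald basis $\{P_{\nu\mid N}(\ccdot;q,t)\}$ is triangular with respect to the containment order on $\Y(N)$, and then to identify its entries with the coefficients $\si(\mu,\nu;q,t)$. First I would record the triangular structure. Since $I_{\mu\mid N}(\ccdot;q,t)$ has degree $|\mu|$ and, by the normalization fixed above, its top homogeneous component equals $P_{\mu\mid N}(\ccdot;q,t)$, expanding $I_{\mu\mid N}$ in the graded basis $\{P_{\nu\mid N}\}_{\nu\in\Y(N)}$ yields
\[
I_{\mu\mid N}(\ccdot;q,t) = P_{\mu\mid N}(\ccdot;q,t) + \sum_{\nu:\,|\nu|<|\mu|} c_{\mu\nu}\,P_{\nu\mid N}(\ccdot;q,t)
\]
for some scalars $c_{\mu\nu}$. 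Thus \eqref{eq2.F} is equivalent to the assertions $c_{\mu\mu}=1$ (clear from the normalization) and $c_{\mu\nu}=\dfrac{(t^N;q,t)_\mu}{(t^N;q,t)_\nu}\,\si(\mu,\nu;q,t)$ for $\nu\neq\mu$; in particular $c_{\mu\nu}=0$ unless $\nu\subseteq\mu$, in accordance with \eqref{eq2.E}.

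The heart of the matter is to extract the $c_{\mu\nu}$. Here I would pass, via the dictionary \eqref{eq2.B}, from the interpolation polynomials to Okounkov's shifted Macdonald polynomials $P^*_\mu(\ccdot;q^{-1},t^{-1})$, and invoke the binomial formula of Okounkov \cite{Ok-MRL}, \cite{Ok-CM}. That formula governs the transition between the shifted and ordinary Macdonald bases through the generalized binomial coefficients
\[
\binom{\mu}{\nu}_{q,t}:=\frac{P^*_\nu(q^\mu;q,t)}{P^*_\nu(q^\nu;q,t)},
\]
which is precisely the middle factor of $\si(\mu,\nu;q,t)$ in \eqref{eq2.C}; the extra-vanishing property of $P^*_\nu$ forces it to vanish unless $\nu\subseteq\mu$, accounting for the support of the sum (this is exactly the second half of \eqref{eq2.E}).

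The remaining two factors of \eqref{eq2.C} are normalization bookkeeping. The product $\prod_{(i,j)\in\nu}(1-q^{\nu_i-j}t^{\nu'_j-i+1})\big/\prod_{(i,j)\in\mu}(1-q^{\mu_i-j}t^{\mu'_j-i+1})$ is the ratio $c_\nu/c_\mu$ of the normalization constants $c_\la=\prod_{(i,j)\in\la}(1-q^{\la_i-j}t^{\la'_j-i+1})$ of \cite[ch. VI]{Mac-1995} (here $q^{\la_i-j}t^{\la'_j-i+1}=q^{a(s)}t^{l(s)+1}$ in arm–leg coordinates), while the prefactor $(-1)^{|\mu|-|\nu|}t^{2(n(\mu)-n(\nu))}q^{-(n(\mu')-n(\nu'))}$ records the combined effect of the parameter inversion $(q,t)\mapsto(q^{-1},t^{-1})$ and of the rescaling of the variables by powers of $t$ in \eqref{eq2.B} on the leading coefficients. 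Finally, the factors $(t^N;q,t)_\mu$ and $(t^N;q,t)_\nu$ are the principal specializations needed to pass between the two equivalent normalizations \eqref{eq2.D} and \eqref{eq2.F}; their role is precisely to make the coefficients $\si(\mu,\nu;q,t)$ of the normalized expansion independent of $N$.

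The main obstacle is this normalization bookkeeping rather than any conceptual difficulty: triangularity and the appearance of $\binom{\mu}{\nu}_{q,t}$ are immediate from the cited results, but assembling the exact prefactor of \eqref{eq2.C} requires careful tracking of (i) how the substitution and parameter inversion of \eqref{eq2.B} act on leading terms and on principal specializations, and (ii) the explicit product evaluations of $P^*_\nu(q^\nu;q,t)$ and of the relevant specializations of $P_{\nu\mid N}$. As a safeguard I would verify the diagonal case $\nu=\mu$ directly: each of the three factors of \eqref{eq2.C} must then collapse to $1$, which pins down the signs and the powers of $t$ and $q$ in the prefactor and confirms the conventions, consistently with \eqref{eq2.E}.
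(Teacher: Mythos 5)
Your proposal follows essentially the same route as the paper: both reduce the statement to Okounkov's binomial formula (1.12) in \cite{Ok-MRL} via the dictionary \eqref{eq2.B}, with the generalized binomial coefficient $P^*_\nu(q^\mu;q,t)/P^*_\nu(q^\nu;q,t)$ supplying the middle factor of \eqref{eq2.C} and the extra vanishing property restricting the sum to $\nu\subseteq\mu$. The only part you leave as ``bookkeeping'' is carried out in the paper by combining Okounkov's formula (1.9) for $P^*_\mu(0^N;q^{-1},t^{-1})$ with the principal specialization formula (6.11$'$) of \cite[Chapter VI]{Mac-1995}, which is exactly the computation you correctly anticipate.
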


\begin{proof}
This is a reformulation of formula (1.12) in Okounkov \cite{Ok-MRL}. Indeed, that formula is written as 
$$
\frac{P^*_\mu(y_1,\dots,y_N;q,t)}{P^*_\mu(0^N;q,t)}=\sum_{\nu\subseteq\mu} \frac{P^*_\nu(q^{-\mu};q^{-1},t^{-1})}{P^*_\nu(q^{-\nu};q^{-1},t^{-1})}\,\frac{P_{\nu\mid N}(y_1,y_2 t^{-1},\dots y_N t^{1-N};q,t)}{P^*_\nu(0^N;q,t)},
$$
where $0^N:=(0,\dots,0)$ ($N$ times). We replace here $(q,t)$ by $(q^{-1}, t^{-1})$, then set $y_i=x_i t^{i-1}$, and  use the fact that the Macdonald polynomials are invariant under the transformation $(q,t)\to (q^{-1},t^{-1})$. Taking into account \eqref{eq2.B}, this gives us  
$$
\frac{I_{\mu\mid N}(x_1,\dots,x_N;q,t)}{P^*_\mu(0^N;q^{-1},t^{-1})}=\sum_{\nu\subseteq\mu} \frac{P^*_\nu(q^\mu; q,t)}{P^*_\nu(q^\nu;q,t)}\,\frac{P_{\nu\mid N}(x_1,\dots x_N;q,t)}{P^*_\nu(0^N;q^{-1},t^{-1})}.
$$ 

Next, by virtue of formula (1.9) in \cite{Ok-MRL},
$$
P^*_\mu(0^N; q^{-1},t^{-1})=(-1)^{|\mu|}t^{n(\mu)}q^{-n(\mu')}P_{\mu\mid N}(1,t,\dots,t^{N-1};q,t),
$$
and, by virtue of the principal specialization formula, see  (6.11$'$) in \cite[Chapter VI]{Mac-1995},
$$
P_{\mu\mid N}(1,t,\dots,t^{N-1};q,t)=t^{n(\mu)}\frac{(t^N;q,t)_\mu}{\prod\limits_{(i,j)\in\mu}(1-q^{\mu_i-j}t^{\mu'_j-i+1})},
$$
so that
$$
P^*_\mu(0^N; q^{-1},t^{-1})=\frac{(-1)^{|\mu|}t^{2n(\mu)}q^{-n(\mu')} (t^N;q,t)_\mu}{\prod\limits_{(i,j)\in\mu}(1-q^{\mu_i-j}t^{\mu'_j-i+1})}.
$$

This finally gives \eqref{eq2.D}.   
\end{proof}

\section{Multivariable big $q$-Jacobi polynomials}\label{sect3}

\subsection{$BC$-type symmetric interpolation polynomials $\bar P^{*(N)}_\mu(\ccdot; q,t, s)$}
Here we collect a number of results from Okounkov \cite{Ok-TG-1998} and Rains \cite{Rains}. We follow the notation of \cite{Rains}.

A Laurent polynomial in $N$ variables $z_1,\dots,z_N$ is said to be \emph{$BC_N$-invariant} if it is invariant with respect to the permutations of variables $z_i$ and inversions $z_i\to z^{-1}_i$. Equivalently,  $BC_N$-invariant polynomials may  be treated as symmetric polynomials in the variables $z_1+z^{-1}_1$, \dots, $z_N+z^{-1}_N$. 

Let $q$ and $t$ be the standard Macdonald parameters, and $s$ be an extra parameter. For each $\mu\in\Y(N)$ there exists a $BC_N$-invariant Laurent polynomial $\bar P^{*(N)}_\mu(z_1,\dots,z_N; q,t, s)$ characterized by the following properties.

\smallskip

$\bullet$ \emph{Condition on degree}: $\bar P^{*(N)}_\mu(\ccdot;q,t, s)$ has degree $|\mu|$ with respect to the variables $z_1+z^{-1}_1$, \dots, $z_N+z^{-1}_N$.

$\bullet$ \emph{Vanishing property}: Given $\la\in\Y(N)$, set  
\begin{equation}\label{eq3.A}
\wh X_N(\la):=(q^{\la_1}t^{N-1}s, \,q^{\la_2}t^{N-2}s, \,\dots,\, q^{\la_N}s).
\end{equation} 
Then
$$
\bar P^{*(N)}_\mu(\wh X_N(\la); q,t,s)=0 \quad \text{if $\la\ne\mu$ and $|\la|\le|\mu|$},
$$ 
and
$$
\bar P^{*(N)}_\mu(\wh X_N(\mu); q,t,s)\ne0.
$$

$\bullet$ \emph{Normalization}: the monomial $z_1^{\la_1}\dots z_N^{\la_N}$ enters $\bar P^{*(N)}_\mu(\ccdot;q,t, s)$ with coefficient equal to $1$.

\smallskip

The polynomials $\bar P^{*(N)}_\mu(\ccdot;q,t, s)$ are called the \emph{$BC$-type interpolation polynomials}.  

The vanishing property in fact holds in a stronger version:

$\bullet$ \emph{Extra vanishing property}:  $\bar P^{*(N)}_\mu(\wh X_N(\la); q,t,s)=0$ whenever the diagram $\la$ does not contain the diagram $\mu$.

\smallskip

One more important result is the following.

\smallskip

$\bullet$ \emph{Combinatorial formula}:
\begin{multline}\label{eq3.B}
\bar P^{*(N)}_\mu(z_1,\dots,z_N;q,t,s)=\sum_{T\in\RTab(\mu,N)}\psi_T(q,t)\\
\times\prod_{(i,j)\in\mu}\left(z_{T(i,j)}+\dfrac1{z_{T(i,j)}}-q^{j-1}t^{N-i+1-T(i,j)}s- \dfrac1{q^{j-1}t^{N-i+1-T(i,j)}s}\right).
\end{multline}

All these results were first established by Okounkov \cite{Ok-TG-1998}. Later on, in \cite{Ok-TG-2003}, he found another proof of the combinatorial formula. Rains \cite{Rains} proposed an alternative approach to the $BC$-type interpolation polynomials; it relies on the existence of a certain partial $q$-difference operator, which acts both on the arguments $z_i$ and the parameter $s$, and has the interpolation polynomials as eigenfunctions. See Section 3 in \cite{Rains}. 

The notations in these three papers are different. In particular, \cite{Ok-TG-1998} deals with `shifted symmetric' polynomials. The symmetric polynomials $\bar P^{*(N)}_\mu(z_1,\dots,z_N;q,t,s)$ from \cite{Rains} are related to shifted symmetric polynomials $P^*_\mu(\ccdot;q,t,s)$ from \cite{Ok-TG-1998} as follows:
$$
\bar P^{*(N)}_\mu(z_1,\dots,z_N;q,t,s)=(t^{N-1}s)^{|\mu|}P^*_\mu(z_1 t^{1-N}s^{-1}, z_2 t^{2-N} s^{-1},\dots, z_N s^{-1};q,t,s).
$$

\subsection{Koornwinder polynomials} 
The \emph{Koornwinder polynomials} form a six-pa\-ra\-meter family of bases in  the space of $BC_N$-invariant Laurent polynomials. These polynomials, introduced by Koornwinder \cite{Koo},  are a multidimensional analogue of the Askey--Wilson polynomials. For our purpose, the most convenient reference is Rains' paper \cite{Rains}, and we keep to his notation. 

The six parameters are the Macdonald parameters $q$ and $t$, and four extra parameters $t_0,t_1,t_2,t_3$. The corresponding Koornwinder polynomials are denoted as 
$K_\la^{(N)}(\ccdot;q,t;t_0,t_1,t_2,t_3)$, where the index $\la$ ranges over $\Y(N)$. See \cite[Section 5, Definition 2]{Rains}. 

A fundamental result is Okounkov's binomial formula \cite{Ok-TG-1998} which describes the expansion of the Koornwinder polynomials in the $BC$-type interpolation polynomials. We will use it in the proof of Proposition \ref{prop3.A} below. 

\subsection{Multivariable big $q$-Jacobi polynomials $\varphi_{\la\mid N}(\ccdot; q, t; a,b,c,d)$}
The classical (i.e. univariate) big $q$-Jacobi polynomials are described in Andrews--Askey in \cite{AA-1985}. They are expressed through the basic hypergeometric series ${}_3\phi_2$ and can be obtained as a degeneration of the Askey--Wilson polynomials (the latter are expressed through ${}_4\phi_3$), see Koekoek--Swarttouw  \cite{KS}, Koornwinder \cite{Koo-2014}.

The multivariable big $q$-Jacobi polynomials were constructed by Stokman \cite{St}. In Stokman--Koornwinder \cite{StK} it was shown that they can be obtained from the Koornwinder polynomials by a limit transition, similar to that in the one-dimensional case.  

For our purpose it is more convenient to introduce the multivariable big $q$-Jacobi polynomials directly as a degeneration of Koornwinder polynomials, relying on the works of Okounkov \cite{Ok-TG-1998} and Rains \cite{Rains}. The result is stated in the two propositions below.

\begin{proposition}\label{prop3.A}
Let $N=1,2,\dots$ and\/ $\la\in\Y(N)$. Next, let $(a,b,c,d)$ be a quadruple of parameters such that
$$
b<0<a, \qquad c=\bar d\in\C\setminus\R.
$$
There exists a limit 
\begin{multline}\label{eq3.C}
\varphi_{\la\mid N}(x_1,\dots,x_N; q, t; a,b,c,d)\\
:=\lim_{\eps\to0}\eps^{|\la|}K^{(N)}_\la\left(x_1\eps^{-1},\dots,x_N\eps^{-1}; q,t; c\eps, d\eps, \frac q{a\eps}, \frac q{b\eps}\right).
\end{multline}
\end{proposition}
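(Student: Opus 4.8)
The plan is to use Okounkov's binomial formula to reorganize the Koornwinder polynomial into the basis of $BC$-type interpolation polynomials, in which the scaling $z_i\mapsto x_i\eps^{-1}$ is transparent, and then to pass to the limit termwise in the resulting finite sum. One cannot profitably pass to the limit directly in, say, the monomial expansion of $K_\la^{(N)}$, because the individual coefficients involve the four parameters $t_0,t_1,t_2,t_3$ in a way that need not have a naive limit; the role of the binomial formula is precisely to isolate the divergent behaviour into scalar coefficients. Thus I would first write
$$
K^{(N)}_\la(z;q,t;t_0,t_1,t_2,t_3)=\sum_{\mu\subseteq\la}c_{\la\mu}(q,t;t_0,t_1,t_2,t_3)\,\bar P^{*(N)}_\mu(z;q,t,s),
$$
where the sum is finite, the interpolation parameter $s$ is a definite function of $(t_0,t_1,t_2,t_3)$, and the coefficients $c_{\la\mu}$ are given explicitly (through ratios of values $\bar P^{*(N)}_\mu(\wh X_N(\la))/\bar P^{*(N)}_\mu(\wh X_N(\mu))$ at spectral points, times an overall normalization). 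Substituting $z_i=x_i\eps^{-1}$ and $(t_0,t_1,t_2,t_3)=(c\eps,d\eps,q/(a\eps),q/(b\eps))$, I would split
$$
\eps^{|\la|}K^{(N)}_\la(x\eps^{-1};\dots)=\sum_{\mu\subseteq\la}\Bigl(\eps^{|\la|-|\mu|}c_{\la\mu}\Bigr)\Bigl(\eps^{|\mu|}\bar P^{*(N)}_\mu(x\eps^{-1};q,t,s)\Bigr)
$$
and prove that each bracketed factor converges as $\eps\to0$; since the sum is finite, existence of the limit \eqref{eq3.C} follows at once.

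The second bracket is controlled directly by the combinatorial formula \eqref{eq3.B}. Under the substitution the parameter $s$ scales as a fixed power of $\eps$ (indeed $s^2=t_0t_1t_2t_3/q=qcd/(ab)$ is already independent of $\eps$), so in each of the $|\mu|$ factors one has $z_{T(i,j)}=x_{T(i,j)}\eps^{-1}$ of exact order $\eps^{-1}$, while $z_{T(i,j)}^{-1}=O(\eps)$ and the remaining terms $q^{j-1}t^{N-i+1-T(i,j)}s+(q^{j-1}t^{N-i+1-T(i,j)}s)^{-1}$ stay bounded. Hence every factor, after multiplication by $\eps$, is a convergent expression linear in $x_{T(i,j)}$, so $\eps^{|\mu|}\bar P^{*(N)}_\mu(x\eps^{-1};q,t,s)$ converges termwise over $T\in\RTab(\mu,N)$ to a symmetric polynomial in $x_1,\dots,x_N$ of degree $|\mu|$ (the leading, highest-degree part of $\bar P^{*(N)}_\mu$ being exactly $P_{\mu\mid N}$ in the variables $z_i+z_i^{-1}$, this limit is $P_{\mu\mid N}(x;q,t)$). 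This step is essentially mechanical.

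The crux, and the step I expect to be the main obstacle, is the convergence of the first bracket $\eps^{|\la|-|\mu|}c_{\la\mu}$. The coefficients $c_{\la\mu}$ are products of $q$-shifted factorials in the six parameters, and under the substitution the six pairwise products fall into three regimes: $t_0t_1=cd\eps^2\to0$, $t_2t_3=q^2(ab)^{-1}\eps^{-2}\to\infty$, and the four mixed products $t_0t_2,t_0t_3,t_1t_2,t_1t_3$ tending to finite nonzero constants, while each single parameter is of exact order $\eps^{\pm1}$. The bookkeeping I must carry out is to compute the net power of $\eps$ contributed by every factor entering $c_{\la\mu}$, and to check that the vanishing factors (from $t_0,t_1\to0$) and the diverging factors (from $t_2,t_3\to\infty$) combine to produce exactly the power $\eps^{|\mu|-|\la|}$ needed to balance the prefactor, with no spurious pole or zero for the admissible range $b<0<a$, $c=\bar d\notin\R$. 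The degree grading — $K^{(N)}_\la$ has degree $|\la|$ and $\bar P^{*(N)}_\mu$ degree $|\mu|$ in the variables $z_i+z_i^{-1}$, each acquiring a factor $\eps^{-1}$ under $z_i\mapsto x_i\eps^{-1}$ — explains why $\eps^{|\la|-|\mu|}$ is the correct power to extract and reduces the task to showing that $\eps^{|\la|-|\mu|}c_{\la\mu}$ genuinely converges rather than merely staying bounded; this convergence is what the explicit product formulas for $c_{\la\mu}$ must confirm.

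Granting both convergences, the right-hand side is a finite sum of convergent terms, so the limit \eqref{eq3.C} exists and defines a symmetric polynomial $\varphi_{\la\mid N}(x_1,\dots,x_N;q,t;a,b,c,d)$ of degree $|\la|$, whose expansion coefficients on the Macdonald polynomials $P_{\mu\mid N}$ are precisely $\lim_{\eps\to0}\eps^{|\la|-|\mu|}c_{\la\mu}$ — the data needed for the subsequent proposition. I expect the conceptual content to be carried by the binomial formula together with the degree/power-counting argument, leaving as the only real labour the explicit asymptotic evaluation of the $q$-shifted-factorial products defining $c_{\la\mu}$.
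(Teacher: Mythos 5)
Your overall strategy --- expand $K^{(N)}_\la$ via Okounkov's binomial formula in the $BC$-type interpolation polynomials, then pass to the limit termwise in the finite sum, controlling the scalar coefficients by power counting and the interpolation polynomials by the combinatorial formula --- is exactly the route the paper takes. However, there is a concrete error in your treatment of the second bracket. In the binomial formula \eqref{eq3.H} the interpolation polynomials on the right-hand side carry the parameter $s=t_0$ itself, not $\wh t_0=\sqrt{t_0t_1t_2t_3/q}$; the latter enters only the binomial coefficient. Under your substitution $t_0=c\eps\to0$, so in each factor of the combinatorial formula \eqref{eq3.B} the term $-\bigl(q^{j-1}t^{N-i+1-T(i,j)}s\bigr)^{-1}$ is of exact order $\eps^{-1}$ and does \emph{not} stay bounded: it competes with $z_{T(i,j)}=x_{T(i,j)}\eps^{-1}$. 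Consequently $\eps$ times each factor converges to $x_{T(i,j)}$ \emph{minus a nonzero constant}, and the limit of $\eps^{|\mu|}\bar P^{*(N)}_\mu(x\eps^{-1};q,t,c\eps)$ is a rescaled \emph{interpolation} polynomial, $(ct^{N-1})^{-|\mu|}I_{\mu\mid N}(ct^{N-1}x_1,\dots,ct^{N-1}x_N;q,t)$ (cf.\ \eqref{eq3.J}), not the Macdonald polynomial $P_{\mu\mid N}(x;q,t)$.

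This slip does not destroy the existence claim of Proposition \ref{prop3.A}: each factor times $\eps$ still converges, so your finite-sum argument goes through once the false boundedness claim is repaired, and your power count for the first bracket (which the paper carries out explicitly, finding the coefficient equal to $\eps^{|\mu|-|\la|}$ times a convergent quantity) is the right thing to do. But it does invalidate the last sentence of your proposal: the limits $\lim_{\eps\to0}\eps^{|\la|-|\mu|}c_{\la\mu}$ are the coefficients of $\varphi_{\la\mid N}$ on the interpolation polynomials $I_{\mu\mid N}$, not on the Macdonald polynomials. To reach the Macdonald expansion needed later (Corollary \ref{cor3.A}) the paper requires a second, separate step --- the expansion \eqref{eq2.D} of $I_{\mu\mid N}$ on the $P_{\nu\mid N}$ --- which introduces additional $N$-dependent factors $(t^N;q,t)_\mu/(t^N;q,t)_\nu$ whose careful bookkeeping is precisely what yields the crucial $N$-independence (stability) of the final coefficients $\pi(\la,\nu;q,t;\alde)$.
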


Here the limit is understood as convergence in a large enough finite-dimensional subspace of the space $\C[z_1^{\pm1},\dots,z_N^{\pm1}]$ of Laurent polynomials (for instance, one can take the linear span of monomials $z_1^{\ell_1}\dots z^{\ell_N}$ such that $|\ell_i|\le\la_1$ for each $i=1,\dots,N$). In the limit regime indicated in \eqref{eq3.C}, all monomials entering the Koornwinder polynomials and containing negative powers are killed and so Laurent polynomials turn into ordinary polynomials. 

Proposition \ref{prop3.A} agrees with item (1) of \cite[Theorem 5.1]{StK}, only the notation of the papers \cite{St}, \cite{StK} is different from ours. The correspondence between our parameters $(\abcd)$ and those in \cite{St}, \cite{StK} (which we rename to $(A,B,C,D)$ in order to avoid confusion) is the following:
\begin{equation}\label{eq3.Stokman}
\begin{gathered}
A=ca^{-1}, \quad B=db^{-1}, \quad C=a^{-1}q, \quad D=-b^{-1}q,\\
a=q/C, \quad b=-q/D, \quad c=Aq/C, \quad d=-Bq/D.
\end{gathered}
\end{equation}

We prove Proposition \ref{prop3.A} together with the following result, where we make $(a,b,c,d)$ depending on $N$. 

\begin{proposition}\label{prop3.B}
Fix a quadruple $(\alde)$ such that
$$
\be<0<\al, \qquad \ga=\bar \de\in\C\setminus\R.
$$
For $N=1,2,\dots$ and $\la\in\Y(N)$ the following expansion holds
\begin{multline}\label{eq3.D}
\varphi_{\la\mid N}(x_1,\dots,x_N; q, t; \al,\be,\ga t^{1-N}, \de t^{1-N})\\
=\sum_{\mu\subseteq\la}\frac{(t^N;q,t)_\la}{(t^N;q,t)_\mu}\rho(\la,\mu;q,t;\alde)\frac{I_{\mu\mid N}(\ga x_1,\dots,\ga x_N; q,t)}{\ga^{|\mu|}},
\end{multline}
where the coefficients $\rho(\la,\mu;q,t;\alde)$ do not depend on $N$ and are given by formula \eqref{eq3.G} below.
\end{proposition}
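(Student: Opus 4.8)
The plan is to obtain \eqref{eq3.D} by degenerating \emph{Okounkov's binomial formula} \cite{Ok-TG-1998}, which expands a Koornwinder polynomial in the $BC$-type interpolation polynomials $\bar P^{*(N)}_\mu$ and is therefore the natural source of an expansion of this kind. In Rains' normalization this formula reads
\[
K^{(N)}_\la(z;q,t;t_0,t_1,t_2,t_3)=\sum_{\mu\subseteq\la} b_{\la\mu}\,\bar P^{*(N)}_\mu(z;q,t,s),
\]
where the base point is governed by the specialization $s=t_0$ and the coefficients $b_{\la\mu}=b_{\la\mu}(q,t;t_0,t_1,t_2,t_3)$ are explicit products of $q$-shifted factorials in the four parameters. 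Into this identity I would feed the degeneration regime of \eqref{eq3.C}: set $z_i=x_i\eps^{-1}$ and $(t_0,t_1,t_2,t_3)=(c\eps,d\eps,q/(a\eps),q/(b\eps))$ with $(a,b,c,d)=(\al,\be,\ga t^{1-N},\de t^{1-N})$, multiply by $\eps^{|\la|}$, and let $\eps\to0$. By Proposition \ref{prop3.A} the left-hand side tends to $\varphi_{\la\mid N}$, so it remains to treat the right-hand side summand by summand, writing $\eps^{|\la|}=\eps^{|\mu|}\cdot\eps^{|\la|-|\mu|}$ and attaching $\eps^{|\mu|}$ to the interpolation polynomial and $\eps^{|\la|-|\mu|}$ to the coefficient.

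The first and cleaner half is the degeneration of the interpolation polynomials. Using the combinatorial formula \eqref{eq3.B} with $s=t_0=c\eps=\ga t^{1-N}\eps$, the factor attached to a box $(i,j)\in\mu$ is
\[
x_{T(i,j)}\eps^{-1}+\frac{\eps}{x_{T(i,j)}}-q^{j-1}t^{N-i+1-T(i,j)}c\eps-\frac{\eps^{-1}}{q^{j-1}t^{N-i+1-T(i,j)}c},
\]
and after multiplication by $\eps$ the two middle terms vanish in the limit, leaving $x_{T(i,j)}-q^{1-j}t^{T(i,j)+i-2}/\ga$; note that the choice $c=\ga t^{1-N}$ is exactly what cancels the $N$-dependent power of $t$. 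Taking the product over the $|\mu|$ boxes and comparing with the combinatorial formula \eqref{eq2.A} for the $A$-type polynomials, I obtain
\[
\lim_{\eps\to0}\eps^{|\mu|}\bar P^{*(N)}_\mu(x_1\eps^{-1},\dots,x_N\eps^{-1};q,t,c\eps)=\frac{I_{\mu\mid N}(\ga x_1,\dots,\ga x_N;q,t)}{\ga^{|\mu|}},
\]
which is precisely the interpolation-polynomial factor occurring in \eqref{eq3.D}. This also explains why $\ga$ (rather than $\de$) sets the scale: the binomial formula singles out $t_0$, and in the chosen parameterization $t_0$ carries $\ga$.

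The remaining half, and the real obstacle, is the coefficient $\rho(\la,\mu;q,t;\alde):=\frac{(t^N;q,t)_\mu}{(t^N;q,t)_\la}\lim_{\eps\to0}\eps^{|\la|-|\mu|}b_{\la\mu}$. One has to check that this limit exists and is finite for every $\mu\subseteq\la$; granting this, the existence of the full limit in \eqref{eq3.C} also follows, which is how Proposition \ref{prop3.A} comes out as a byproduct. Inserting the explicit product for $b_{\la\mu}$, grouping its $q$-shifted factorials by their leading powers of $\eps$, and reading off the coefficient of $\eps^{-(|\la|-|\mu|)}$ is a lengthy but routine computation that yields the closed form recorded in \eqref{eq3.G}. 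The difficulty is the \emph{$N$-independence} of $\rho$: the coefficients $b_{\la\mu}$ depend on $N$ through factors of the form $(t^N\cdots;q)_\bullet$, so a priori $\rho$ does too. The crux is to show that, after the degeneration, all this $N$-dependence concentrates in a single ratio that is absorbed exactly by the prefactor $(t^N;q,t)_\la/(t^N;q,t)_\mu$ which I have deliberately extracted in the definition of $\rho$. This is the \emph{stability property} announced in subsection \ref{sect1.3.3}; verifying it amounts to matching each $t^N$-bearing factor of $b_{\la\mu}$ against the principal-specialization factors $(t^N;q,t)_\la$ and $(t^N;q,t)_\mu$, a cancellation made possible by the fact that both families are governed by the same row-and-column statistics of the diagrams $\la$ and $\mu$.
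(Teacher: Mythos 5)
Your proposal is correct and follows essentially the same route as the paper: degenerate Okounkov's binomial formula (in Rains' normalization) under the substitution $(t_0,t_1,t_2,t_3)=(\ga t^{1-N}\eps,\de t^{1-N}\eps,q/(\al\eps),q/(\be\eps))$, identify the limit of $\eps^{|\mu|}\bar P^{*(N)}_\mu$ via the combinatorial formula as $\ga^{-|\mu|}I_{\mu\mid N}(\ga x;q,t)$, and extract the $N$-dependence of the coefficients into the prefactor $(t^N;q,t)_\la/(t^N;q,t)_\mu$. The coefficient computation you defer as ``routine'' is in fact short in the paper because the specialization makes $t^{N-1}\wh t_0=\sqrt{\ga\de q/(\al\be)}$ and $t^{N-1}t_0t_2$, $t^{N-1}t_0t_3$ manifestly $N$-independent (with the binomial coefficient's $N$-independence supplied by Rains' Lemma 3.6), leaving only the $(t^N;q,t)_\la/(t^N;q,t)_\mu$ ratio exactly as you predicted.
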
 

In the special case $t=q$ a similar claim was proved in \cite[Proposition 3.2]{Ols-FAA-2017} (then the coefficients are given by a determinantal formula). 

\begin{proof}[Proof of the propositions]
We begin with  Okounkov's binomial formula \cite[Theorem 7.1]{Ok-TG-1998}, which we write in Rains' notation (see Theorem 5.1 in \cite{Rains} and Definition 2 before it):
\begin{multline}\label{eq3.H}
K^{(N)}_\la(z_1,\dots,z_N;q,t;t_0,t_1,t_2,t_3)\\
=\sum_{\mu\subseteq\la}\begin{bmatrix}\la\\ \mu\end{bmatrix}_{q,t,t^{N-1}\wh t_0}\dfrac{k^0_\la(q,t,t^N;t_0:t_1,t_2,t_3)}{k^0_\mu(q,t,t^N;t_0:t_1,t_2,t_3)} \bar P^{*(N)}_\mu(z_1,\dots,z_N;q,t,t_0).
\end{multline}

Let us explain the notation. The first term on the right-hand side is defined by
\begin{equation}\label{eq3.E}
\begin{bmatrix}\la\\ \mu\end{bmatrix}_{q,t,s}:=\frac{\bar P^{*(N)}_\mu(\wh X_N(\la);q,t,t^{1-N}s)}{\bar P^{*(N)}_\mu(\wh X_N(\mu);q,t,t^{1-N}s)},
\end{equation}
where $\wh X_N(\ccdot)$ was defined in \eqref{eq3.A} (see the beginning of section 4 in \cite{Rains}). The quantity \eqref{eq3.E} is a kind of binomial coefficient. As is seen from \eqref{eq3.A} and the combinatorial formula \eqref{eq3.B}, this `binomial coefficient' is invariant under the change $s\mapsto -s$ (because under this transformation both the numerator and the denominator in \eqref{eq3.E} are multiplied by $(-1)^{|\mu|}$). Another important remark is that for $q,t,s$ fixed, \eqref{eq3.E} does not depend on $N$, provided that $N\ge\ell(\la)$: indeed, this holds both for the numerator and denominator in \eqref{eq3.E}, as is seen from \cite[Lemma 3.6]{Rains}.

The quantity $\wh t_0$ is defined by
\begin{equation}\label{eq3.F}
\wh t_0:=\sqrt{t_0t_1t_2t_3/q},
\end{equation}
where the choice of the root does not matter, because \eqref{eq3.E} is invariant under $s\mapsto -s$.

Next, 
\begin{multline*}
k^0_\la(q,t,t^N;t_0:t_1,t_2,t_3)\\
:=(t^{N-1} t_0)^{-|\la|}t^{n(\la)}\dfrac{(t^N;q,t)_\la (t^{N-1}t_0t_1;q,t)_\la (t^{N-1}t_0t_2;q,t)_\la (t^{N-1}t_0t_3; q,t)_\la}{C^-_\la(t;q,t)C^+_\la (t^{2N-2}\wh t_0\,\!^2;q,t)},
\end{multline*}
where (see \cite[Section 2]{Rains})
\begin{equation}\label{eq3.M1}
C^-_\la(z;q,t):=\prod_{(i,j)\in\la}(1-zq^{\la_i-j}t^{\la'_j-i})
\end{equation}
and
\begin{equation}\label{eq3.M2}
C^+_\la(z;q,t):=\prod_{(i,j)\in\la}(1-zq^{\la_i+j-1}t^{2-\la'_j-i}).
\end{equation}

Now we specialize
$$
t_0=\ga t^{1-N}\eps, \quad t_1=\de t^{1-N}\eps, \quad t_2=\frac{q}{\al\eps}, \quad t_3=\frac{q}{\be\eps}.
$$
Then
\begin{gather*}
t^{N-1}\wh t_0=\sqrt{\ga\de q/(\al\be)},\\
t^{N-1}t_0=\ga\eps,\\
(t^{N-1}t_0t_1;q,t)_\la=(\ga\de t^{1-N}\eps^2;q,t)_\la=1+O(\eps^2),\\
(t^{N-1}t_0t_2;q,t)_\la=(\ga q/\al;q,t)_\la,\\
(t^{N-1}t_0 t_3;q,t)_\la=(\ga q/\be;q,t)_\la,\\
C^+_\la(t^{2N-2}\wh t_0\,\!^2;q,t)=C^+_\la(\ga\de q/(\al\be);q,t).
\end{gather*}

Therefore, the coefficient of $\bar P^{*(N)}_\mu(z_1,\dots,z_N;q,t,t_0)$ in \eqref{eq3.H} can be written as
\begin{equation}\label{eq3.I}
\frac{(t^N;q,t)_\la}{(t^N;q,t)_\mu}(1+O(\eps^2))\eps^{|\mu|-|\la|}\rho(\la,\mu;q,t;\alde),
\end{equation}
where
\begin{multline}\label{eq3.G}
\rho(\la,\mu;q,t;\alde):=\ga^{|\mu|-|\la|}t^{n(\la)-n(\mu)}\begin{bmatrix}\la\\ \mu\end{bmatrix}_{q,t,\sqrt{\ga\de q/(\al\be)}}\\
\times\frac{(\ga q/\al;q,t)_\la (\ga q/\be;q,t)_\la}{(\ga q/\al;q,t)_\mu (\ga q/\be;q,t)_\mu}\, \frac{\prod\limits_{(i,j)\in\mu}(1-q^{\mu_i-j}t^{\mu'_j-i+1})}{\prod\limits_{(i,j)\in\la}(1-q^{\la_i-j}t^{\la'_j-i+1})}\, \dfrac{\prod\limits_{(i,j)\in\mu}\left(1-\frac{\ga\de}{\al\be}q^{\mu_i+j}t^{2-\mu'_j-i}\right)}{\prod\limits_{(i,j)\in\la}\left(1-\frac{\ga\de}{\al\be}q^{\la_i+j}t^{2-\la'_j-i}\right)}.
\end{multline}

Note that this expression is well defined. Indeed, due to the constraints on $(\alde)$, the numbers  $\ga/\al$ and $\ga/\be$ are not real, so all the factors in the first fraction are nonzero. Next, for any box $(i,j)\in\la$ one has $\la_i-j\ge0$ and $\la'_j-i\ge0$, so that $q^{\la_i-j}t^{\la'_j-i+1}<1$, and hence the denominator of the second fraction is nonzero. Finally, since $\frac{\ga\de}{\al\be}<0$, the same holds for the third fraction, too. 

Now we examine the limit behavior of $\bar P^{*(N)}_\mu(z_1,\dots,z_N;q,t,t_0)$ specialized at 
$$
z_1:=x_1\eps^{-1}, \; \dots,\; z_N:=x_N\eps^{-1}, \qquad t_0:=\ga t^{1-N}\eps.
$$
Look at the combinatorial formula \eqref{eq3.B}. After this specialization, the $(i,j)$-factor in the second line of \eqref{eq3.B} takes the form
\begin{multline*}
x_{T(i,j)}\eps^{-1}+\dfrac{\eps}{x_{T(i,j)}} -q^{j-1}t^{-T(i,j)-i+2}\ga\eps-\dfrac{\ga^{-1}\eps^{-1}}{q^{j-1}t^{-T(i,j)-i+2}}\\
=\ga^{-1}\eps^{-1}(\ga x_{T(i,j)}-q^{1-j}t^{T(i,j)+i-2}+O(\eps^2)).
\end{multline*}
Comparing this to the combinatorial formula \eqref{eq2.A} we see that
\begin{equation}\label{eq3.J}
\lim_{\eps\to0}\eps^{|\mu|}\bar P^{*(N)}_\mu(x_1\eps^{-1},\dots,x_N\eps^{-1};q,t,t^{1-N}\ga \eps)=\ga^{-|\mu|}I_{\mu\mid N}(\ga x_1,\dots,\ga x_N;q,t).
\end{equation}

Collecting \eqref{eq3.H}, \eqref{eq3.I}, and \eqref{eq3.J} we conclude that 
\begin{multline*}
\lim_{\eps\to0}\eps^{|\la|}K^{(N)}_\la\left(x_1\eps^{-1},\dots,x_N\eps^{-1}; q,t; \ga t^{1-N}\eps, \de t^{1-N}\eps, \frac q{\al\eps}, \frac q{\be\eps}\right)\\
=\sum_{\mu\in\Y(N)}\frac{(t^N;q,t)_\la}{(t^N;q,t)_\mu} \rho(\la,\mu;q,t;\alde)\frac{I_{\mu\mid N}(\ga x_1,\dots,\ga x_N;q,t)}{ \ga^{|\mu|}}.
\end{multline*}

This completes the proof of \eqref{eq3.C} and \eqref{eq3.D}.
\end{proof}

\begin{corollary}\label{cor3.A}
The following expansion holds{\rm:}
\begin{multline}\label{eq3.L}
\varphi_{\la\mid N}(x_1,\dots,x_N;q,t;\al,\be,\ga t^{1-N},\de t^{1-N})
=\sum_{\nu\subseteq\la}\frac{(t^N;q,t)_\la}{(t^N;q,t)_\nu}\\
\times\pi(\la,\nu;q,t;\alde) P_{\nu\mid N}(x_1,\dots,x_N;q,t),
\end{multline}
where the coefficients $\pi(\la,\nu;q,t;\alde)$  do not depend on $N$ and are given by  
\begin{equation}\label{eq3.L1}
\pi(\la,\nu;q,t;\alde)=\sum_{\mu:\, \nu\subseteq\mu\subseteq\la} \rho(\la,\mu;q,t;\alde)\si(\mu,\nu;q,t)\ga^{|\nu|-|\mu|},
\end{equation}
where, in turn, the coefficients on the right-hand side of \eqref{eq3.L1} are given by 
\eqref{eq3.G} and \eqref{eq2.C}. 
\end{corollary}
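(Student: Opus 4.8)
The plan is to obtain Corollary~\ref{cor3.A} by composing the two expansions already in hand: the expansion of $\varphi_{\la\mid N}$ on the interpolation polynomials given by Proposition~\ref{prop3.B} (formula \eqref{eq3.D}), and the expansion of the interpolation polynomials on the Macdonald polynomials given by \eqref{eq2.F}. No genuinely new identity is needed; the content of the corollary is the bookkeeping of multiplying the two transition matrices together and checking that the resulting coefficients carry no dependence on $N$.

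First I would substitute \eqref{eq2.F}, taken at the rescaled argument $(\ga x_1,\dots,\ga x_N)$, into \eqref{eq3.D}. The decisive simplification is the homogeneity of the Macdonald polynomials: since $P_{\nu\mid N}(\ccdot;q,t)$ is homogeneous of degree $|\nu|$, one has $P_{\nu\mid N}(\ga x_1,\dots,\ga x_N;q,t)=\ga^{|\nu|}P_{\nu\mid N}(x_1,\dots,x_N;q,t)$, so that
$$
\frac{I_{\mu\mid N}(\ga x_1,\dots,\ga x_N;q,t)}{\ga^{|\mu|}}=\sum_{\nu\subseteq\mu}\frac{(t^N;q,t)_\mu}{(t^N;q,t)_\nu}\,\si(\mu,\nu;q,t)\,\ga^{|\nu|-|\mu|}\,P_{\nu\mid N}(x_1,\dots,x_N;q,t).
$$
Inserting this into \eqref{eq3.D}, the factors $(t^N;q,t)_\mu$ cancel between the $\rho$-coefficient of \eqref{eq3.D} and the $\si$-coefficient above, leaving a double sum over pairs $\nu\subseteq\mu\subseteq\la$ whose summand is $\tfrac{(t^N;q,t)_\la}{(t^N;q,t)_\nu}\,\rho(\la,\mu)\,\si(\mu,\nu)\,\ga^{|\nu|-|\mu|}\,P_{\nu\mid N}$. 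Interchanging the (finite) order of summation, so that $\nu\subseteq\la$ runs on the outside and the intermediate $\mu$ with $\nu\subseteq\mu\subseteq\la$ runs on the inside, yields exactly \eqref{eq3.L} with $\pi(\la,\nu;q,t;\alde)$ defined by \eqref{eq3.L1}.

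Finally, the $N$-independence of $\pi$ is immediate: by Proposition~\ref{prop3.B} the coefficients $\rho(\la,\mu;q,t;\alde)$ do not depend on $N$, the quantities $\si(\mu,\nu;q,t)$ are manifestly $N$-independent from their definition \eqref{eq2.C}, and the extra factor $\ga^{|\nu|-|\mu|}$ carries no $N$. I do not anticipate a real obstacle; the only points requiring a little care are the homogeneity rescaling, which is what converts the $\ga$-shifted interpolation basis into the unshifted Macdonald basis and produces the clean power $\ga^{|\nu|-|\mu|}$, and the cancellation of the $(t^N;q,t)_\mu$ factors. Both are routine, and the reindexing of the finite double sum needs no justification beyond finiteness.
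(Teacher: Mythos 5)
Your proposal is correct and follows essentially the same route as the paper: substitute the Macdonald expansion \eqref{eq2.F} of the interpolation polynomials (at the rescaled argument) into \eqref{eq3.D}, use the homogeneity of $P_{\nu\mid N}$ to extract the factor $\ga^{|\nu|-|\mu|}$, cancel the $(t^N;q,t)_\mu$ factors, and collect the double sum. The $N$-independence of $\pi$ then follows, as you say, from that of $\rho$ and $\si$.
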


\begin{remark}\label{rem3.A}
Rewrite \eqref{eq3.L} as
\begin{multline}\label{eq3.stability}
\frac{\varphi_{\la\mid N}(x_1,\dots,x_N;q,t;\al,\be,\ga t^{1-N},\de t^{1-N})}{(t^N;q,t)_\la}\\
=\sum_{\nu\subseteq\la}
\pi(\la,\nu;q,t;\alde) \frac{P_{\nu\mid N}(x_1,\dots,x_N;q,t)}{(t^N;q,t)_\nu}.
\end{multline}
The fact that the coefficients in this expansion do not depend on $N$ is of crucial importance for us.
\end{remark}

\begin{proof}
Combining \eqref{eq3.D} with \eqref{eq2.F} we obtain 
\begin{multline*}
\varphi_{\la\mid N}(x_1,\dots,x_N; q, t; \al,\be,\ga t^{1-N}, \de t^{1-N})\\
=\sum_{\mu\subseteq\la}\frac{(t^N;q,t)_\la}{(t^N;q,t)_\mu}\rho(\la,\mu;q,t;\alde)\frac{I_{\mu\mid N}(\ga x_1,\dots,\ga x_N; q,t)}{\ga^{|\mu|}}\\
=\sum_{\mu\subseteq\la}\frac{(t^N;q,t)_\la}{(t^N;q,t)_\mu}\rho(\la,\mu;q,t;\alde)\sum_{\nu\subseteq\mu}\frac{(t^N;q,t)_\mu}{(t^N;q,t)_\nu}\si(\mu,\nu;q,t)\frac{P_{\nu\mid N}(\ga x_1,\dots,\ga x_N; q,t)}{\ga^{|\mu|}}.
\end{multline*} 
Because $P_{\nu\mid N}(\ccdot;q,t)$ is homogeneous of degree $|\nu|$, we have
$$
\frac{P_{\nu\mid N}(\ga x_1,\dots,\ga x_N; q,t)}{\ga^{|\mu|}}=\ga^{|\nu|-|\mu|}P_{\nu\mid N}(x_1,\dots,x_N; q,t).
$$
This gives the desired expression.
\end{proof}

\begin{corollary}\label{cor3.B}
In the expansion \eqref{eq3.L}, the coefficient with $\nu=\la$ equals $1$, so that 
\begin{multline}\label{eq3.K}
\varphi_{\la\mid N}(x_1,\dots,x_N;q,t;\al,\be,\ga t^{1-N},\de t^{1-N})\\
=P_{\la}(x_1,\dots,x_N;q,t)+\text{\rm lower degree terms}.
\end{multline}
\end{corollary}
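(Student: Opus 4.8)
The plan is to read off the claim directly from the expansion \eqref{eq3.L1} for the coefficients $\pi(\la,\nu;q,t;\alde)$, specializing to $\nu=\la$. When $\nu=\la$, the summation range $\nu\subseteq\mu\subseteq\la$ collapses to the single term $\mu=\la$, so that
\begin{equation*}
\pi(\la,\la;q,t;\alde)=\rho(\la,\la;q,t;\alde)\,\si(\la,\la;q,t)\,\ga^{|\la|-|\la|}=\rho(\la,\la;q,t;\alde)\,\si(\la,\la;q,t).
\end{equation*}
By \eqref{eq2.E} we already have $\si(\la,\la;q,t)=1$, so the entire task reduces to checking that $\rho(\la,\la;q,t;\alde)=1$.

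To see this, I would substitute $\mu=\la$ into the explicit formula \eqref{eq3.G}. The prefactor $\ga^{|\mu|-|\la|}t^{n(\la)-n(\mu)}$ becomes $\ga^0 t^0=1$; the binomial coefficient $\left[\begin{smallmatrix}\la\\ \la\end{smallmatrix}\right]_{q,t,s}$ equals $1$ by its definition \eqref{eq3.E}, since numerator and denominator coincide; and each of the three remaining ratios has identical numerator and denominator when $\mu=\la$ (the two $q$-Pochhammer factors $(\ga q/\al;q,t)_\la$ and $(\ga q/\be;q,t)_\la$ cancel against themselves, and likewise the two products over boxes of $\la$). Hence every factor in \eqref{eq3.G} equals $1$ and $\rho(\la,\la;q,t;\alde)=1$, giving $\pi(\la,\la;q,t;\alde)=1$.

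For the second assertion --- that the term with $\nu=\la$ is the top-degree part and all other contributions are of strictly lower degree --- I would invoke two facts already recorded in the excerpt. First, by \eqref{eq2.E} the coefficients vanish unless $\nu\subseteq\la$, and by the Macdonald-polynomial normalization $P_{\nu\mid N}(\ccdot;q,t)$ is homogeneous of degree $|\nu|$. For any $\nu\subsetneq\la$ one has $|\nu|<|\la|$, so those terms indeed contribute only in degrees below $|\la|$. Combining this with $\pi(\la,\la;q,t;\alde)=1$ and the normalization convention $(t^N;q,t)_\la/(t^N;q,t)_\la=1$ for the leading coefficient yields \eqref{eq3.K}, with the homogeneous degree-$|\la|$ component equal to $P_{\la}(x_1,\dots,x_N;q,t)$ exactly.

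This argument is essentially bookkeeping, so I do not anticipate a genuine obstacle; the only point requiring care is verifying that every factor of \eqref{eq3.G} truly reduces to $1$ at $\mu=\la$, in particular confirming from the definition \eqref{eq3.E} of the $BC$-type binomial coefficient that its diagonal value is $1$ (which follows immediately, numerator and denominator being the same quantity). The rest is the observation that the degree grading of the Macdonald polynomials separates the diagonal term from the off-diagonal ones.
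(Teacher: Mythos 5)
Your proposal is correct and follows essentially the same route as the paper: specialize \eqref{eq3.L1} at $\nu=\la$ to get $\pi(\la,\la)=\rho(\la,\la)\si(\la,\la)$, then observe $\si(\la,\la)=1$ from \eqref{eq2.E} and $\rho(\la,\la)=1$ by inspection of \eqref{eq3.G}. Your additional factor-by-factor verification of $\rho(\la,\la)=1$ and the explicit degree argument for the ``lower degree terms'' are details the paper leaves implicit, but the argument is the same.
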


\begin{proof}
By virtue of \eqref{eq3.L1},
\begin{equation}\label{eq3.K1}
\pi(\la,\la;q,t;\alde)=\rho(\la,\la;q,t;\alde)\si(\la,\la;q,t).
\end{equation}
Next, from the explicit formula \eqref{eq3.G} it is seen that the coefficient $\rho(\la,\mu;q,\alde)$ equals $1$ for $\mu=\la$. Likewise, the coefficient $\si(\mu,\nu;q,t)$ equals $1$ for $\mu=\nu$ (see \eqref{eq2.E}). This implies the desired claim. 
\end{proof} 

\begin{proposition}\label{prop3.C}
{\rm(i)} The polynomials $\varphi_{\la\mid N}(x_1,\dots,x_N;q,t;\al,\be,\ga t^{1-N},\de t^{1-N})$  are symmetric with respect to the transpositions $(\al,\be)\to(\be,\al)$ and $(\ga,\de)\to(\de,\ga)$.

{\rm(ii)} The same holds for the coefficients $\pi(\la,\nu;q,t;\alde)$ of the expansion \eqref{eq3.L}.
\end{proposition}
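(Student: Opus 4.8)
The plan is to deduce both statements from a single structural fact: the Koornwinder polynomials $K^{(N)}_\la(\ccdot;q,t;t_0,t_1,t_2,t_3)$ are invariant under arbitrary permutations of the four parameters $t_0,t_1,t_2,t_3$. This symmetry is standard --- it is immediate from the fact that the Koornwinder orthogonality weight depends on $t_0,t_1,t_2,t_3$ only through the symmetric factor $\prod_i\prod_{r=0}^3(t_r z_i;q)_\infty(t_r z_i^{-1};q)_\infty$ in its denominator, while the normalization (monic leading term) and the triangularity with respect to dominance order involve none of the $t_r$. Thus the whole burden is to track how the two transpositions in the statement act on the parameters of $K^{(N)}_\la$ in the degeneration \eqref{eq3.C}.

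First I would carry out the bookkeeping for part (i). In the limit defining $\varphi_{\la\mid N}(\ccdot;q,t;\al,\be,\ga t^{1-N},\de t^{1-N})$ one feeds into $K^{(N)}_\la$ the four parameters
\[
t_0=\ga t^{1-N}\eps,\quad t_1=\de t^{1-N}\eps,\quad t_2=\frac{q}{\al\eps},\quad t_3=\frac{q}{\be\eps}.
\]
The transposition $(\ga,\de)\to(\de,\ga)$ simply exchanges $t_0\leftrightarrow t_1$ and fixes $t_2,t_3$, while $(\al,\be)\to(\be,\al)$ exchanges $t_2\leftrightarrow t_3$ and fixes $t_0,t_1$. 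By the parameter symmetry of the Koornwinder polynomials, the pre-limit expression $\eps^{|\la|}K^{(N)}_\la(x_1\eps^{-1},\dots,x_N\eps^{-1};q,t;t_0,t_1,t_2,t_3)$ is therefore unchanged under each of these two transpositions, for every $\eps>0$. Passing to the limit $\eps\to0$ preserves this invariance, which proves (i).

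For part (ii) I would invoke the uniqueness of the expansion in the Macdonald basis. Rewrite \eqref{eq3.L} as
\[
\varphi_{\la\mid N}(\ccdot;q,t;\al,\be,\ga t^{1-N},\de t^{1-N})=\sum_{\nu\subseteq\la}\frac{(t^N;q,t)_\la}{(t^N;q,t)_\nu}\,\pi(\la,\nu;q,t;\alde)\,P_{\nu\mid N}(\ccdot;q,t).
\]
The prefactors $(t^N;q,t)_\la/(t^N;q,t)_\nu$ do not involve $\al,\be,\ga,\de$, and the Macdonald polynomials $\{P_{\nu\mid N}(\ccdot;q,t)\}$ form a basis of $\Sym(N)$, so the coefficient of each $P_{\nu\mid N}$ in $\varphi_{\la\mid N}$ is uniquely determined. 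Since by (i) the left-hand side is invariant under the two transpositions, so is every coefficient; hence $\pi(\la,\nu;q,t;\alde)$ is symmetric under $(\al,\be)\to(\be,\al)$ and $(\ga,\de)\to(\de,\ga)$.

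I do not expect a genuine obstacle here. The only points requiring care are to justify, rather than merely assert, the permutation symmetry of the Koornwinder polynomials in $t_0,t_1,t_2,t_3$ within Rains' normalization \cite{Rains}, and to confirm that the degeneration \eqref{eq3.C} routes the two transpositions exactly onto $t_0\leftrightarrow t_1$ and $t_2\leftrightarrow t_3$ without disturbing the remaining parameters or the scaling factor $\eps^{|\la|}$. Both are mechanical once the correspondence of parameters is written out, and neither interacts with the limit, so the argument goes through uniformly in $\eps$ and then in the limit.
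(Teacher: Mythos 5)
Your proposal is correct and follows essentially the same route as the paper: the paper likewise derives (i) from the invariance of the Koornwinder polynomials under permutations of $(t_0,t_1,t_2,t_3)$ (citing \cite[Theorem 5.5]{Rains}) combined with the parameter substitution in \eqref{eq3.C}, and then deduces (ii) from (i). The only difference is that you sketch a justification of the Koornwinder parameter symmetry and spell out the uniqueness-of-expansion step for (ii), where the paper simply cites Rains and calls the implication evident.
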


\begin{proof}
The Koornwinder polynomials are invariant under permutations of all four parameters $(t_0,t_1,t_2,t_3)$, see \cite[Theorem 5.5]{Rains}. By virtue of \eqref{eq3.C}, this implies claim (i). Next, claim (i) evidently implies claim (ii). 
\end{proof}

Note that the symmetry under transposition $\ga\leftrightarrow\de$ is not evident from the expansion \eqref{eq3.D} of Proposition \ref{prop3.B}.

\section{One-particle hypergeometric ensembles}\label{sect4}

Recall the definition of the two-sided $q$-lattice $\L$:
$$
\L:=\L_+\sqcup \L_-\subset\R^*, \qquad \L_\pm:=\zeta_\pm q^\Z, \quad 0<q<1,
$$
where $\zeta_+>0$ and $\zeta_-<0$ are arbitrary fixed parameters. 

We will deal with the following probability distributions on $\L$: 
\begin{equation}\label{eq4.D}
M(x;q;a,b,c,d)=\frac1{Z(q;a,b,c,d)}\, W(x;q;a,b,c,d), \qquad x\in\L,
\end{equation}
where $x\mapsto W(x;q;a,b,c,d)$ is the nonnegative summable function on $\L$ of the `hypergeometric' form
\begin{equation}\label{eq4.A}
W(x;q;a,b,c,d):=(1-q)|x|\frac{(ax;q)_\infty(bx;q)_\infty}{(cx;q)_\infty(dx;q)_\infty}, \qquad x\in\L,
\end{equation}
depending (apart from $q$) on a quadruple $(a,b,c,d)$ of complex parameters, and $Z(q;a,b,c,d)$ is the normalization constant, 
\begin{equation}\label{eq4.C}
Z(q;a,b,c,d):=\sum_{x\in\L}W(x;q;a,b,c,d).
\end{equation}

Here is a (non-complete) description of the values of $(a,b,c,d)$ for which the above definition makes sense. We divide them into three `series', which we call principal, complementary, and degenerate . 

\smallskip

(1) \emph{Principal series}: $c=\bar d\in\C\setminus\R$; $a=\bar b\in\C\setminus\R$; $ab<cdq$.

\smallskip

(2) \emph{Complementary series}: $c=\bar d\in\C\setminus\R$, both $a$ and $b$ lie in an open interval of the form $(\zeta^{-1}_+ q^n, \zeta^{-1}_+ q^{n-1})$ or $(\zeta^{-1}_- q^{n-1}, \zeta^{-1}_-q^n)$ with $n\in\Z$; $ab<cdq$.

\smallskip

(3) \emph{Degenerate series}: $c=\bar d\in\C\setminus\R$; $b<0<a$ and $a\in\zeta^{-1}_+ q^\Z$, $b\in\zeta_-^{-1}q^\Z$. 

\smallskip

The constraint imposed on $(c,d)$ guarantees that the denominator in \eqref{eq4.A} is strictly positive for all $x\in\L$. 

In the case (1), the constraint on $(a,b)$  implies that $(ax;q)_\infty$ and $(bx;q)_\infty$ are nonzero and complex conjugate for any $x\in\L$, which implies that the numerator in \eqref{eq4.A} is strictly positive for all $x\in\L$. 

In the case (2), the same positivity property holds, because then $(ax;q)_\infty$ and $(bx;q)_\infty$ are nonzero reals of the same sign. 

In the case (3),  the numerator is strictly positive if $b^{-1}q\le x\le a^{-1}q$, and vanishes otherwise.  Thus, the support of the measure is the following proper subset of $\L$:
$$
\L[b^{-1}q,  a^{-1}q]:=\{x\in\L: b^{-1}q\le x\le a^{-1}q\}.
$$

Note that the set of admissible values of parameters $c$ and $d$ can be extended --- they may be taken real and satisfying the same type of constraint as for $(a,b)$ in the case (2), but we will not strive for maximum generality. 

Due to the factor $|x|$ in \eqref{eq4.A}, the function $W(x;a,b,c,d)$ is summable near zero. In the case (3), when the support of the function is bounded away from $\pm\infty$, this is enough to conclude that the function is entirely summable.  

To verify the summability of the function $W(x;q;a,b,c,d)$ in the cases (1) and (2) we need the following lemma (cf. Groenevelt \cite[Lemma 2.4]{Gro-2009}).  

\begin{lemma}\label{lemma4.A}
As above, fix $c$ and $d$ such that $c=\bar d\in\C\setminus\R$. Next, fix arbitrary $R_1>0$,  $R_2>0$, and $\zeta\in\R^*$. Then for any complex $a$ and $b$ such that $|a|\le R_1$, $|b|\le R_2$ the following estimate holds
\begin{equation}\label{eq4.B}
\left|\frac{(a\zeta q^{-n};q)_\infty(b\zeta q^{-n};q)_\infty}{(c\zeta q^{-n};q)_\infty(d\zeta q^{-n};q)_\infty}\right|\le A(c,d, R_1,R_2, \zeta) \left(\frac{R_1R_2}{cd}\right)^n, \qquad n\in\Z_{\ge1}, 
\end{equation}
with some positive constant $A(c,d,R_1,R_2,\zeta)$ which does not depend on $a$, $b$, and $n$. 
\end{lemma}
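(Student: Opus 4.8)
The goal is to bound the ratio of infinite $q$-Pochhammer products uniformly in $a,b$ (within the given radii) and in $n$, with the bound growing like a fixed geometric rate in $n$. The key observation is that as $n\to+\infty$ the argument $\zeta q^{-n}$ blows up, so the infinite products become large, but the numerator and denominator are products of the \emph{same} analytic type; the ratio should therefore stabilize up to a controlled geometric factor. My plan is to split each infinite product $(u;q)_\infty$, with $u=a\zeta q^{-n}$ etc., into a ``head'' (the first $n$ factors, indices $i=1,\dots,n$) and a ``tail'' (indices $i>n$), and to match the heads of numerator and denominator against each other while showing the tails are uniformly bounded.

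\smallskip

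\emph{Step 1: isolate the tails.} For the factor $(u;q)_\infty$ with $u=a\zeta q^{-n}$, the tail $\prod_{i>n}(1-uq^{i-1})=\prod_{j\ge 0}(1-a\zeta q^{j})=(a\zeta;q)_\infty$ after reindexing $j=i-1-n$. Thus the tail of $(a\zeta q^{-n};q)_\infty$ is exactly $(a\zeta;q)_\infty$, which is independent of $n$ and bounded (both above and away from zero is not needed, only above for numerator factors) uniformly for $|a|\le R_1$ since $\prod_{j\ge0}(1+R_1|\zeta|q^j)<\infty$. The same applies to the $b,c,d$ factors. So after separating tails, the $n$-dependence lives entirely in the finite head products $\prod_{i=1}^{n}(1-u q^{i-1})$.

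\smallskip

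\emph{Step 2: estimate the head ratio.} Writing $u=v\zeta q^{-n}$ for $v\in\{a,b,c,d\}$, the $i$-th head factor is $1-v\zeta q^{i-1-n}$. For $1\le i\le n$ the exponent $i-1-n$ is negative, so $|q^{i-1-n}|=q^{-(n-i+1)}$ is large; hence $|1-v\zeta q^{i-1-n}|$ is comparable to $|v\zeta|\,q^{-(n-i+1)}$. Pairing the numerator head factors (with $v=a,b$) against the denominator head factors (with $v=c,d$), the powers of $q$ cancel completely between the four products, and one is left with a product over $i=1,\dots,n$ of ratios comparable to $\dfrac{|a\zeta|\,|b\zeta|}{|c\zeta|\,|d\zeta|}=\dfrac{|a||b|}{|cd|}\le\dfrac{R_1R_2}{|cd|}$ per factor, giving the claimed $\left(\dfrac{R_1R_2}{cd}\right)^n$ (note $cd=|c|^2>0$ since $c=\bar d$). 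The comparability ``$|1-w|\asymp|w|$ for large $|w|$'' must be made quantitative: since $c,d$ are nonreal, $|c\zeta q^{i-1-n}|$ is bounded away from any real value and in particular the denominator factors never vanish and satisfy $|1-c\zeta q^{i-1-n}|\ge c_0\,|c\zeta|q^{i-1-n}$ for a constant $c_0>0$ depending only on $c,d,\zeta$; for the numerator one uses the trivial upper bound $|1-a\zeta q^{i-1-n}|\le 1+|a\zeta|q^{i-1-n}\le (1+R_1|\zeta|)|a\zeta|q^{i-1-n}$ once $q^{i-1-n}\ge1$.

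\smallskip

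\emph{The main obstacle} is bounding the denominator head factors uniformly from below, i.e.\ producing the constant $c_0$ above. This is exactly where the hypothesis $c=\bar d\in\C\setminus\R$ is essential: since $c$ is nonreal, $1-c\zeta q^{i-1-n}$ stays off the origin with a margin proportional to $|\operatorname{Im}(c\zeta)|\,q^{i-1-n}$, which for $q^{i-1-n}\ge1$ is bounded below by $|\operatorname{Im}(c\zeta)|>0$ in absolute terms and by a fixed multiple of $|c\zeta|q^{i-1-n}$ in relative terms. Collecting the tail bounds from Step 1, the per-factor constants $(1+R_1|\zeta|)$, $(1+R_2|\zeta|)$, $c_0^{-2}$ from Step 2, and absorbing them into a single constant $A(c,d,R_1,R_2,\zeta)$ independent of $a,b,n$, yields \eqref{eq4.B}.
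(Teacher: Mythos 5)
Your head/tail decomposition is the same as the paper's (the paper also peels off the $n$-independent tails $(-R_i\zeta;q)_\infty$, $(c\zeta;q)_\infty$, $(d\zeta;q)_\infty$ and isolates all $n$-dependence in the length-$n$ head products), but your estimate of the head ratio has a genuine flaw that destroys the claimed geometric rate. In Step 2 you bound each of the $n$ head factors by a \emph{fixed multiplicative constant} times $|v\zeta|q^{i-1-n}$ and then propose to absorb ``the per-factor constants $(1+R_1|\zeta|)$, $(1+R_2|\zeta|)$, $c_0^{-2}$'' into $A$. But each of these constants occurs once for \emph{every} $i=1,\dots,n$, so what your argument actually yields is
\[
\prod_{i=1}^{n}\frac{|1-a\zeta q^{i-1-n}|\,|1-b\zeta q^{i-1-n}|}{|1-c\zeta q^{i-1-n}|\,|1-d\zeta q^{i-1-n}|}
\;\le\;\left(\frac{(1+R_1|\zeta|)(1+R_2|\zeta|)}{c_0^{2}}\right)^{\!n}\left(\frac{R_1R_2}{cd}\right)^{\!n},
\]
i.e.\ a rate $(CR_1R_2/(cd))^n$ with some $C>1$, not $(R_1R_2/(cd))^n$. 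This loss is not cosmetic: the sharp rate is exactly what makes the weight summable under $ab<cdq$ and what drives the uniform estimates in Section \ref{sect5}. The repair is to factor out $v\zeta q^{i-1-n}$ \emph{exactly}, writing $1-v\zeta q^{i-1-n}=-v\zeta q^{i-1-n}\bigl(1-v^{-1}\zeta^{-1}q^{\,n+1-i}\bigr)$: after the powers of $q$ cancel between numerator and denominator and the factors $v^{n}$ produce $(R_1R_2/(cd))^n$, the residual correction over $i=1,\dots,n$ is a \emph{partial product of a convergent infinite product} in the variable $v^{-1}\zeta^{-1}$ (nonvanishing in the denominator because $c,d\notin\R$), hence bounded uniformly in $n$ --- a convergent product of factors tending to $1$, not an $n$-fold product of a fixed constant. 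This is precisely the identity the paper uses when it rewrites $(-R_1\zeta q^{-n};q)_n$ in terms of $(-R_1^{-1}\zeta^{-1};q)_n$.

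A secondary slip: the inequality $1+|a\zeta|q^{i-1-n}\le(1+R_1|\zeta|)\,|a\zeta|q^{i-1-n}$ is false when $|a|$ is small (for $a=0$ the right-hand side vanishes while the left-hand side is $1$), so your numerator bound does not hold on the whole bidisc. Do what the paper does: assuming without loss of generality $\zeta>0$, first majorize termwise by $|1-a\zeta q^{k}|\le 1+R_1\zeta q^{k}$, i.e.\ replace $a$ by $-R_1$ so that $a$ disappears from the bound entirely, and only then apply the exact factorization above to $(-R_1\zeta q^{-n};q)_n$. With these two corrections your argument becomes the paper's proof.
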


Note that the estimate \eqref{eq4.B} is uniform on $(a,b)$ ranging in the bidisc $|a|\le R_1$, $|b|\le R_2$. This fact is not essential in the context of the present section, but it is important for justifying the procedure of analytic continuation in subsection \ref{sect6.3}. 

\begin{proof}
Due to the constraint on $(c,d)$ and because $\zeta$ is real, the denominator on the left-hand side is strictly positive. Without loss of generality we may assume that $\zeta>0$. Then we have
$$
\left|\frac{(a\zeta q^{-n};q)_\infty(b\zeta q^{-n};q)_\infty}{(c\zeta q^{-n};q)_\infty(d\zeta q^{-n};q)_\infty}\right| \le
\frac{(-R_1\zeta q^{-n};q)_\infty(-R_2\zeta q^{-n};q)_\infty}{(c\zeta q^{-n};q)_\infty(d\zeta q^{-n};q)_\infty}.
$$
Next, the right-hand side is further transformed to
\begin{gather*}
\frac{(-R_1\zeta;q)_\infty(-R_2\zeta;q)_\infty}{(c\zeta;q)_\infty(d\zeta;q)_\infty} \cdot
\frac{(-R_1\zeta q^{-n};q)_n(-R_2\zeta q^{-n};q)_n}{(c\zeta q^{-n};q)_n(d\zeta q^{-n};q)_n}\\
=\frac{(-R_1\zeta;q)_\infty(-R_2\zeta;q)_\infty}{(c\zeta;q)_\infty(d\zeta;q)_\infty} \cdot
\frac{(-R_1^{-1}\zeta^{-1};q)_n(-R_2^{-1}\zeta^{-1};q)_n}{(c^{-1}\zeta^{-1};q)_n(d^{-1}\zeta^{-1};q)_n}\cdot \left(\frac{R_1R_2}{cd}\right)^n.
\end{gather*}
As $n\to+\infty$, the second fraction on the last line converges and so remains bounded. This gives the desired estimate. 
\end{proof}

\begin{corollary}
In the cases of principal and complementary series the function $x\mapsto W(x;q;a,b,c,d)$ is summable.  
\end{corollary}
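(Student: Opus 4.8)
The plan is to estimate the sum \eqref{eq4.C} separately over the two branches $\L_+$ and $\L_-$, and on each branch to split the lattice into a part accumulating at $0$ and a part escaping to $\infty$. Fix one branch, say $\L_+=\zeta_+q^\Z$, and write a generic point as $x=\zeta_+q^n$, $n\in\Z$. In both series the denominator $(cx;q)_\infty(dx;q)_\infty$ is strictly positive and the numerator is nonnegative, so $W$ is a genuine nonnegative summand; it remains to control its size in the two regimes.

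First I would treat the points near $0$, i.e.\ $n\ge0$. Here $|x|=\zeta_+q^n\to0$ geometrically, while each factor $(ux;q)_\infty$ (for $u\in\{a,b,c,d\}$) tends to $1$ as $x\to0$; hence the ratio of $q$-Pochhammer symbols in \eqref{eq4.A} is continuous on the compact set $\{0\}\cup\{\zeta_+q^n:n\ge0\}$ (with value $1$ at $0$) and therefore bounded there, say by $B$. This gives $W(\zeta_+q^n;q;a,b,c,d)\le(1-q)B\,\zeta_+\,q^n$, so this part of the sum is dominated by the convergent geometric series $\sum_{n\ge0}q^n$.

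Next I would treat the points escaping to $+\infty$, i.e.\ $n=-m$ with $m\in\Z_{\ge1}$, so that $x=\zeta_+q^{-m}$ and $|x|=\zeta_+q^{-m}$. This is precisely the regime covered by Lemma \ref{lemma4.A} with $\zeta=\zeta_+$ and $R_1=|a|$, $R_2=|b|$, which yields
\begin{equation*}
\left|\frac{(ax;q)_\infty(bx;q)_\infty}{(cx;q)_\infty(dx;q)_\infty}\right|\le A\left(\frac{|a||b|}{cd}\right)^m,\qquad m\in\Z_{\ge1},
\end{equation*}
and hence $W(\zeta_+q^{-m};q;a,b,c,d)\le(1-q)\,\zeta_+\,A\,\bigl(|a||b|/(cdq)\bigr)^m$. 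The crucial point is that this geometric ratio is strictly less than $1$: in both the principal and the complementary series one has $ab>0$ and $ab=|a||b|$ (for the principal series because $a=\bar b$ gives $ab=|a|^2=|a||b|$; for the complementary series because $a,b$ are real of the same sign), while $cd=|c|^2>0$ since $c=\bar d$. Thus the standing hypothesis $ab<cdq$ is exactly the statement $|a||b|/(cdq)<1$, and the sum over $m\ge1$ is again dominated by a convergent geometric series.

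Combining the two regimes shows that the sum of $W$ over $\L_+$ is finite; the identical argument with $\zeta=\zeta_-$ (now $|x|=|\zeta_-|q^{-m}$, and Lemma \ref{lemma4.A} applies after reducing to $\zeta>0$ as in its proof) handles $\L_-$, so summability over $\L=\L_+\sqcup\L_-$ follows. The only substantive ingredient is Lemma \ref{lemma4.A}, which supplies the geometric decay of the Pochhammer ratio at infinity; once it is in place the sole thing to check is the bookkeeping identity $R_1R_2=|a||b|=ab$, which converts the series condition $ab<cdq$ into convergence of the geometric tail. I do not expect any genuine analytic obstacle beyond this identification.
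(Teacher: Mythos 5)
Your proof is correct and follows essentially the same route as the paper: summability near $0$ from the prefactor $|x|$, and summability at infinity from Lemma \ref{lemma4.A} together with the observation that the extra factor $q$ in $ab<cdq$ compensates the growth of $|x|$. The identification $R_1R_2=|a||b|=ab$ and $cd=|c|^2>0$, which you spell out for each series, is exactly the point the paper leaves implicit.
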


\begin{proof}
We have to check that the inequality $ab<cdq$ in the definition of the principal and complementary series ensures the summability  for large values of $|x|$, but this follows directly from the lemma: note that the factor $q$ attached to $cd$ is needed to compensate the growth of the prefactor $|x|$. 
\end{proof}

\begin{remark} 1. The measures $M(\ccdot; a,b,c,d)$ from the principal/complementary series appeared in the works of Askey \cite{Ask-1981}, Groenevelt \cite{Gro-2009}, \cite{Gro-2011}, Groenevelt--Koelink \cite{GroK-2011}.  In particular, Groenevelt \cite{Gro-2011} discovered a link of the measures $M(\ccdot; a,b,c,d)$ with infinite-dimensional unitarizable representations of $\mathcal U_q(su(1,1))$, the   noncompact real form of the quantized universal enveloping algebra $\mathcal U_q(sl(2,\C))$. 

2. The functions $W(\ccdot; a,b,c,d)$ from the degenerate series serve as the weight functions for the univariate big $q$-Jacobi polynomials (Andrews--Askey \cite{AA-1985}, Koekoek--Swarttouw \cite{KS}, Koornwinder \cite{Koo-2014}). 

3. There exists a closed formula for the combinatorial sum \eqref{eq4.C} giving the normalization constant:
\begin{equation}\label{eq4.Z}
Z(q; a,b,c,d)=(1-q)\z_+\, \frac{(q,a/c,a/d,b/c,b/d;q)_\infty}{(ab/(cdq);q)_\infty}\,
\frac{\theta_q(\z_-/\z_+,cd\z_-\z_+)}{\theta_q(c\z_-,d\z_-,c\z_+,d\z_+)}\,,
\end{equation}
where the theta function $\theta_q(u)$ of a single variable $u$ is defined by
$$
\theta_q(u)=(u, q/x;q)_\infty
$$
and we use the shorthand notation
$$
(u_1,\dots,u_m;q)_n=(u_1;q)_n\dots(u_m;q)_n, \qquad \theta_q(u_1,\dots,u_m):=\theta_q(u_1)\cdots\theta_q(u_m).
$$ 
See Koornwinder  \cite{Koo-2014}, formula (139) and subsequent comments. The sum \eqref{eq4.C} can be interpreted as a $q$-integral; in this form it was evaluated by Askey \cite[(3.13)]{Ask-1981} in the case $\zeta_+=1$, $\zeta_-=-1$.

\end{remark}

\section{$N$-particle hypergeometric ensembles}\label{sect5}

Throughout this section $q$ and $t$ are fixed parameters in $(0,1)$ and $\tau:=\log_q t$. We also fix two extra parameters $\zeta_+>0$, $\zeta_-<0$. 

\subsection{Particle configurations}\label{sect5.1}

Let $\R_{>0}$ and $\R_{<0}$ be the sets of strictly positive and strictly negative reals, respectively, and $\R^*:=\R_{>0}\sqcup\R_{<0}$. Next, $\Z_{\ge0}$ and $\Z_{\ge1}$ is our notation for the nonnegative integers and strictly positive integers, respectively. 

Let $\wt\Om$ denote the set of finite or countable subsets $X\subset\R^*$ of the form $X=X^+\cup X^-$ with 
\begin{equation}\label{eq5.A1}
 X^+:=X\cap\R_{>0}=\{x^+_1,x^+_2,\dots\}, \quad X^-:=X\cap\R_{<0}=\{x^-_1,x^-_2,\dots\},
\end{equation}
where each of the sequences $\{x^+_i\}$, $\{x^-_i\}$ may be infinite, finite or void, and
$$
x^\pm_1\in \zeta_{\pm} q^\Z, \qquad x^\pm_{i+1}\in x^\pm_{i} t q^{\Z_{\ge0}} \quad \text{for $i=1,2,\dots$}\,. 
$$
This means, in particular, that
$$
x^-_1<x^-_2<\dots <0<\dots <x^+_2<x^+_1
$$
and 
$$
\log_q x^+_{i+1}-\log_q x^+_{i}\ge\tau, \quad \log_q |x^-_{i+1}|-\log_q |x^-_{i}|\ge\tau, \qquad i=1,2,\dots\,.
$$
Note that in the special case of $\tau\in\Z_{\ge1}$ this definition agrees with that in subsection \ref{sect1.1.1}.  

The topology in $\wt\Om$ is introduced through a uniform structure. The latter is defined by proclaiming two configurations to be \emph{$\eps$-close}  if they coincide outside the interval $(-\eps,\eps)$. In this topology, $\wt\Om$ is a locally compact space with the stratification
$$
\wt\Om=\Om_\infty \sqcup \bigsqcup_{N=0}^\infty \Om_N.
$$
where 
$$
\Om_N:=\{X\in\wt\Om: |X|=N\}, \quad \Om_\infty:=\{X\in\wt\Om: |X|=\infty\}.
$$

\subsection{Degenerate series of hypergeometric measures}

By definition, these are the orthogonality measures of $N$-variate big $q$-Jacobi polynomials. Those measures were found by Stokman \cite{St}, and below we restate his definition in our notation. 

\subsubsection{A $(q,t)$-version of $\prod\limits_{1\le i<j\le N}(x_i-x_j)^\be$}

For $X\in\Om_N$ we set
\begin{equation}\label{eq5.D}
V_{q,t}(X)=\prod_{\substack{x,y\in X\\ y<x}}(x-y)\cdot \prod_{\substack{x\in X^+\\ y<x}}x^{2\tau-1}\frac{(qt^{-1}yx^{-1};q)_\infty}{(tyx^{-1};q)_\infty} \cdot \prod_{\substack{x\in X^-\\ y<x}}|y|^{2\tau-1}\frac{(qt^{-1}xy^{-1};q)_\infty}{(txy^{-1};q)_\infty},
\end{equation}
where $X^\pm$ are defined by \eqref{eq5.A1} (cf. \cite[(5.4)]{St}). 

From the definition of $\Om_N$ it is seen that all $(\ccdot;q)_\infty$ factors on the right-hand side are of the form  $(u;q)_\infty$ with $u<1$ and hence are strictly positive. Therefore, $V_{q,t}(X)>0$ for all $X\in\Om_N$.  
In the special case $\tau\in\Z_{\ge1}$ the expression on the right-hand side simplifies and reduces to \eqref{eq1.E}.

\subsubsection{The function $W(x;q; a,b,c,d)$}\label{sect5.1.2}

Let us fix an additional quadruple of nonzero parameters $(a,b,c,d)$ such that
\begin{equation}\label{eq5.C}
a\in \zeta_+^{-1} q^\Z, \quad b\in\zeta_-^{-1} q^\Z, \quad c=\bar d\in\C\setminus\R.
\end{equation}
Note that $b<0<a$. Recall that our notation for these parameters differs from that of \cite{St}, see \eqref{eq3.Stokman} above.

The expression \eqref{eq4.A} for the function $W(x;q;a,b,c,d)$ is well defined for any real $x$. It is strictly positive if $b^{-1}q \le x \le a^{-1}q$ and vanishes at the points 
$$
a^{-1}, a^{-1}q^{-1}, a^{-1}q^{-2}, \dots; \quad b^{-1}, b^{-1}q^{-1}, b^{-1}q^{-2}, \dots\,.
$$

 \subsubsection{Some constant factors}
Following  \cite[(5.5)]{St} we set
\begin{equation}\label{eq5.A}
C_N(k;\zeta_+,\zeta_-):=\prod_{\substack{1\le i<j\le N\\ i\le k}}\psi_\tau(\zeta_-\zeta_+^{-1}t^{N-i-j+1}), \qquad 0\le k\le N,
\end{equation}
where
\begin{equation}\label{eq5.L}
\psi_\tau(u):=(-u)^{2\tau-1}\frac{\th_q(tu)}{\th_q(t u^{-1})}, \qquad u<0,
\end{equation}
and 
\begin{equation}\label{eq5.M}
\th_q(v):=(v;q)_\infty(qv^{-1};q)_\infty.
\end{equation}
An important property of $\psi_\tau(u)$ is that it is a \emph{quasi-constant function} meaning that
\begin{equation}\label{eq5.N}
\psi_\tau(qu)=\psi_\tau(u).
\end{equation}

Because $\zeta_-\zeta_+^{-1}<0$, the expression on the right-hand side of \eqref{eq5.A} makes sense and is strictly positive.  

In the special case of $\tau\in\Z_{\ge1}$ the function $\psi_\tau(u)$ on $\R_{<0}$ is identically equal to $1$, so that the constants \eqref{eq5.A} also equal $1$. 

\subsubsection{The degenerate series measures $M_N(\ccdot;q,t;a,b,c,d)$}

Let $\Om_N[b^{-1}q,a^{-1}q]$ denote the subset of $\Om_N$ formed by the configurations contained in the closed interval $[b^{-1}q,a^{-1}q]$. Observe that the product $\prod_{x\in X} W(x;q; a,b,c,d)$ vanishes unless $X\in\Om_N[b^{-1}q,a^{-1}q]$. Indeed, if $X\notin\Om_N[b^{-1}q,a^{-1}q]$, then at least one of the two inclusions 
$$
x^+_1\in \{a^{-1}, a^{-1}q^{-1}, a^{-1}q^{-2}, \dots\}, \quad x^-_1\in\{b^{-1}, b^{-1}q^{-1}, b^{-1}q^{-2}, \dots\}
$$
holds, and then the corresponding factor $W(x^\pm_1;q;a,b,c,d)$ vanishes. 

In accordance with \cite[sect. 5]{St} we set for $X\in\Om_N$
\begin{equation}\label{eq5.B}
\begin{gathered}
\wt M_N(X;q,t;a,b,c,d)=C_N(|X^+|;\zeta_+,\zeta_-)\,V_{q,t}(X)\prod_{x\in X} W(x; q; a,b,c,d), \\
M_N(X;q,t;a,b,c,d):=(Z_N(q,t;a,b,c,d))^{-1} \wt M_N(X;q,t;a,b,c,d),
\end{gathered}
\end{equation}
where $Z_N(q,t;a,b,c,d)$ is the normalization factor, 
\begin{equation}\label{eq5.Z_N}
Z_N(q,t;a,b,c,d):=\sum_{X\in\Om_N}\wt M_N(X;q,t;a,b,c,d).
\end{equation}
The  summation here is taken in fact over $X\in\Om_N[b^{-1}q,a^{-1}q]$. From this and the bound on $V_{q,t}(X)$ established below in Lemma \ref{lemma5.A} it follows that the series converges. 

Thus, $M_N(\ccdot;q,t;a,b,c,d)$ is a probability measure on $\Om_N[b^{-1}q,a^{-1}q]$.

\begin{remark}\label{rem5.A}
The combinatorial sum on the right-hand side of \eqref{eq5.Z_N} can be written as a linear combination of $N$-fold $q$-integrals. A remarkable fact is that its value $Z_N(q,t;a,b,c,d)$ is given by a closed multiplicative formula. This result is contained in the works of  Stokman \cite[Corollary 7.6]{St-2000}, Tarasov--Varchenko \cite[Theorem (E.10)]{TV},  and Ito--Forrester \cite[Corollary 4.3]{IF}. It is a nontrivial generalization of Evans' result mentioned in the introduction (Remark \ref{rem1.A}). 
\end{remark}

\subsubsection{Orthogonality of multivariable big $q$-Jacobi polynomials}

We consider the polynomials $\varphi_{\la\mid N}(\ccdot; q,t;a,b,c,d)$ defined in \eqref{eq3.C}.

\begin{theorem}\label{thm5.A}
For each $N\in\Z_{\ge1}$, the measure $M_N(\ccdot;q,t;a,b,c,d)$ defined by \eqref{eq5.B} serves as an orthogonality measure for  the polynomials $\varphi_{\la\mid N}(\ccdot; q,t;a,b,c,d)$. 
\end{theorem}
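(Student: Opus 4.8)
The plan is to deduce the statement from the orthogonality of Stokman's multivariable big $q$-Jacobi polynomials \cite{St}, matching his setup to ours through the parameter dictionary \eqref{eq3.Stokman}. The key point is that both objects in the theorem --- the polynomials $\varphi_{\la\mid N}$ and the measure $M_N(\ccdot;q,t;a,b,c,d)$ --- have been defined precisely so as to coincide with Stokman's, so the substance of the proof lies in verifying that the two translations are mutually compatible, after which orthogonality is automatic.

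First I would recall, via Proposition \ref{prop3.A} together with its agreement with \cite[Theorem 5.1(1)]{StK}, that $\varphi_{\la\mid N}(\ccdot;q,t;a,b,c,d)$ is exactly Stokman's big $q$-Jacobi polynomial in the parameters $(A,B,C,D)$ supplied by \eqref{eq3.Stokman}. Next I would invoke Stokman's orthogonality result from \cite[sect. 5]{St}: in his variables these polynomials are pairwise orthogonal with respect to an explicit positive measure, supported on a discrete subset of $\R^*$, whose weight is a product of a two-particle interaction factor, single-variable weights, and certain constant prefactors. The central step is then to check that, after the substitution \eqref{eq3.Stokman}, Stokman's measure becomes exactly $M_N$: his support set is our $\Om_N[b^{-1}q,a^{-1}q]$; his interaction factor is $V_{q,t}(X)$ from \eqref{eq5.D} (this is \cite[(5.4)]{St}); his single-variable weight is $W(x;q;a,b,c,d)$ from \eqref{eq4.A}; and his constant prefactor is $C_N(|X^+|;\zeta_+,\zeta_-)$ from \eqref{eq5.A} (this is \cite[(5.5)]{St}). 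Once these four identifications are in place, the orthogonality of $\varphi_{\la\mid N}$ with respect to $M_N$ follows at once.

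The main obstacle is precisely this matching step, which is bookkeeping but genuinely delicate: Stokman normalizes his polynomials and weights differently, works with the rescaled variables and the parameters $(A,B,C,D)$, and describes his support in his own coordinates, so one must carefully propagate all the $\eps$-rescalings implicit in \eqref{eq3.C} and confirm that the normalization constant $Z_N$ in \eqref{eq5.Z_N} is finite --- which follows once the support has been identified as bounded and the bound on $V_{q,t}$ from Lemma \ref{lemma5.A} below is available. A more self-contained alternative would be to derive the orthogonality directly from that of the Koornwinder polynomials on the torus by passing to the limit $\eps\to0$ in \eqref{eq3.C}; I expect this route to be harder, since it requires controlling the degeneration of the Koornwinder orthogonality density into a discrete measure, a limit considerably more delicate to justify than the notational translation above.
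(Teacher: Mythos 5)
Your proposal is correct and follows essentially the same route as the paper: the paper's proof likewise reduces the statement to Stokman's Theorem 5.7(1), using the fact (via the main result of Stokman--Koornwinder) that the limit definition \eqref{eq3.C} of $\varphi_{\la\mid N}$ agrees with Stokman's polynomials, the measure $M_N$ having been defined in \eqref{eq5.B} as a direct transcription of Stokman's orthogonality measure. The bookkeeping you flag as the main obstacle is exactly what the paper handles by the parameter dictionary \eqref{eq3.Stokman} and the definitions of subsection 5.2.
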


This is a reformulation of a claim contained in Stokman \cite[Theorem 5.7 (1)]{St}.
In more detail, the claim is that  for any partitions $\la,\mu\in\Y(N)$
\begin{multline}
\sum_{X\in\Om_N[b^{-1}q, a^{-1}q]}\varphi_{\la\mid N}(X; q,t;a,b,c,d)\varphi_{\mu\mid N}(X; q,t;a,b,c,d)M_N(X;q,t;a,b,c,d)\\
=\begin{cases} \text{\rm a positive constant $h_{\la\mid N}(q,t;a,b,c,d)$}, & \mu=\la,\\
0, & \mu\ne\la.\end{cases}
\end{multline}

\begin{proof}
The limit regime used in our formula \eqref{eq3.C} is taken from \cite{StK}. By virtue of the main result of \cite{StK}, our definition of the multivariable big $q$-Jacobi polynomials agrees with that of \cite{St} (up to notation). Then the desired orthogonality property follows from \cite[Theorem 5.7, item (1)]{St}.  
\end{proof}

As shown in Stokman \cite{St-2000}, the orthogonality measure can also be obtained in another way, via the degeneration of Koornwinder polynomials. 

The core of Stokman's argument in \cite{St} is the claim that $M_N(\ccdot;a,b,c,d)$ serves as the symmetrising measure for a certain second order partial $q$-difference operator (of which the big $q$-Jacobi polynomials are eigenfunctions). The proof of this claim, given in section 6 of \cite{St}, is rather laborious. In the special case of $\tau\in\Z_{\ge1}$ the argument can be simplified.

The following proposition is a complement to Theorem \ref{thm5.A}. 

\begin{proposition}\label{prop5.A}
$M_N(\ccdot;q,t;a,b,c,d)$  is a unique probability measure on $\Om_N[b^{-1}q, a^{-1}q]$ which is orthogonal to polynomials $\varphi_{\la\mid N}(\ccdot;q,t;\abcd)$ with nonzero index $\la$. 
\end{proposition}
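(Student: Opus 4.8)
The plan is to prove the uniqueness claim by exploiting the fact, recorded in Corollary \ref{cor3.B}, that the polynomials $\varphi_{\la\mid N}(\ccdot;q,t;a,b,c,d)$ with $\la\in\Y(N)$ form a triangular basis of the algebra $\Sym(N)$: each $\varphi_{\la\mid N}$ equals $P_{\la\mid N}+(\text{lower degree terms})$, and since the Macdonald polynomials $\{P_{\la\mid N}\}_{\la\in\Y(N)}$ are a linear basis of $\Sym(N)$, so are the $\varphi_{\la\mid N}$. The polynomial $\varphi_{\varnothing\mid N}$ indexed by the empty partition is the constant $1$. Hence the linear span of the $\varphi_{\la\mid N}$ with $\la\ne\varnothing$ is exactly the subspace of $\Sym(N)$ complementary to the constants, i.e. the polynomials that are killed by a nonzero partition index are precisely those functions $F$ on $\Om_N[b^{-1}q,a^{-1}q]$ whose expansion in the $\varphi$-basis has no constant term.

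First I would argue that a probability measure $M$ on $\Om_N[b^{-1}q,a^{-1}q]$ is determined by its integrals against all of $\Sym(N)$, because $\Om_N[b^{-1}q,a^{-1}q]$ is a set of $N$-point configurations inside a bounded interval of $\L$ and the symmetric polynomials in the particle coordinates separate such configurations (for instance, the power sums $p_k(X)=\sum_{x\in X}x^k$ for $k=1,\dots,N$ already determine the multiset $X$). Thus two probability measures with the same moments $\int F\,dM$ for all $F\in\Sym(N)$ coincide. Then I would observe that the orthogonality condition in the statement fixes $\int\varphi_{\la\mid N}\,dM=0$ for every $\la\ne\varnothing$, while the normalization $\int 1\,dM=\int\varphi_{\varnothing\mid N}\,dM=1$ is forced by $M$ being a probability measure. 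Because the $\varphi_{\la\mid N}$ are a basis of $\Sym(N)$, these two families of conditions pin down $\int F\,dM$ for every $F\in\Sym(N)$, hence determine $M$ uniquely.

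The one genuine content-bearing step, and the part I would expect to be the main obstacle, is the claim that symmetric polynomials separate points of $\Om_N[b^{-1}q,a^{-1}q]$ and, more importantly, that every probability measure there is characterized by its $\Sym(N)$-moments. Points on a bounded $q$-lattice interval do not run off to infinity, so there is no moment-problem indeterminacy of the classical Hamburger type, but one must still confirm that no delicate issue arises from configurations accumulating at $0$ or from the infinitely many lattice points in $[b^{-1}q,a^{-1}q]$. Here I would use that the interval is bounded and that the values $p_1(X),\dots,p_N(X)$ already recover the unordered $N$-tuple $X$ (the elementary symmetric functions in $N$ variables determine the roots), so finitely many moments suffice and no closure or density argument about unbounded moments is needed; a probability measure on a space of $N$-point configurations in a fixed bounded interval is a measure on a compact (or at least separable locally compact) space, and the finite moment determinacy is immediate.

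Assembling these pieces: since Theorem \ref{thm5.A} already exhibits $M_N(\ccdot;q,t;a,b,c,d)$ as one probability measure satisfying the orthogonality relations, and since we have shown such a measure is unique, the proposition follows. I would write the argument so that the triangularity from Corollary \ref{cor3.B} does the algebraic work (identifying the span of the nonconstant $\varphi_{\la\mid N}$ with the codimension-one subspace of $\Sym(N)$), and the boundedness of the support does the analytic work (finite moments determine the measure), with the combination giving uniqueness in one line.
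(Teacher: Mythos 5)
Your argument is correct and follows essentially the same route as the paper: reduce uniqueness to the fact that a probability measure on $\Om_N[b^{-1}q,a^{-1}q]$ is determined by its integrals against $\Sym(N)$, using the triangularity of Corollary \ref{cor3.B} to see that the orthogonality relations plus the normalization $\int 1\,dM=1$ fix all such integrals. The only place you gloss is the step you call ``immediate'': the paper makes it precise by passing to the compact closure $\wt\Om_N[b^{-1}q,a^{-1}q]=\bigsqcup_{n=0}^{N}\Om_n[b^{-1}q,a^{-1}q]$ (which picks up the lower strata where particles vanish into $0$), extending polynomials continuously there by padding coordinates with zeros, and invoking Stone--Weierstrass for the separating unital subalgebra $\Sym(N)\subset C\bigl(\wt\Om_N[b^{-1}q,a^{-1}q]\bigr)$.
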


\begin{proof}
The closure of $\Om_N[b^{-1}q, a^{-1}q]$ in $\wt\Om$ is the compact set
\begin{equation}
\wt\Om_N[b^{-1}q, a^{-1}q]:=\bigsqcup_{n=0}^N \Om_n[b^{-1}q, a^{-1}q].
\end{equation}
Any polynomial $F\in\Sym(N)$ may be treated as a continuous function on $\wt\Om_N[b^{-1}q, a^{-1}q]$: if necessary, we complement the coordinates by $0$'s. In this way we realize $\Sym(N)$ as a subalgebra in the algebra of real-valued continuous functions on $\wt\Om_N[b^{-1}q, a^{-1}q]$. This subalgebra separates points and contains the constants, hence it is dense in the topology defined by the supremum norm.

Therefore, any finite measure on $\wt\Om_N[b^{-1}q, a^{-1}q]$ (a fortiori on $\Om_N[b^{-1}q, a^{-1}q]$) is uniquely determined by its values on the polynomials from $\Sym(N)$. This implies the desired uniqueness claim. \end{proof}

\subsection{Some estimates}\label{sect5.3}

We use Stokman's formulas from the previous subsection but change the range of the parameters. Our aim is to extend  the definition of the principal series of section \ref{sect4} to the $N$-dimensional case. In this subsection we establish necessary estimates.

\begin{proposition}\label{prop5.B}
We remove the constraints on $(a,b)$ imposed in \eqref{eq5.C} but still assume that $c=\bar d\in\C\setminus\R$. Let $R:=|c|=|d|$.  Let $\wt M_N(X;q,t;\abcd)$ be still defined by formula \eqref{eq5.B}. 

{\rm(i)} The series $\sum_{X\in\Om_N} |\wt M_N(X;q,t;\abcd)|$ 
converges for any $(a,b)\in\C^2$ such that $|ab|<R^2q^{2\tau(N-1)+1}$.

{\rm(ii)} Moreover, for $(c,d)$ fixed, the convergence is uniform on $(a,b)$ ranging in any closed bidisc of the form $|a|\le R_1$, $|b|\le R_2$ such that $R_1>0$, $R_2>0$ and $R_1R_2<R^2q^{2\tau(N-1)+1}$. 
\end{proposition}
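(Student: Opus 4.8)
The plan is to absorb the harmless positive factors and reduce the full $N$-particle sum to a product of $N$ one-particle sums, each controlled by Lemma~\ref{lemma4.A}. Since $a,b$ are now unconstrained, $\wt M_N(X;q,t;\abcd)$ may be complex, but the complex part sits entirely in $W$, because $C_N(\ccdot;\zeta_+,\zeta_-)$ takes only the finitely many positive values $C_N(0),\dots,C_N(N)$ and $V_{q,t}(X)>0$. Hence
\[
\sum_{X\in\Om_N}|\wt M_N(X;q,t;\abcd)|\le\Big(\max_{0\le k\le N}C_N(k)\Big)\sum_{X\in\Om_N}V_{q,t}(X)\prod_{x\in X}|W(x;q;\abcd)|,
\]
and it suffices to majorize the last sum. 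Writing the particles of $X$ as $z_1,\dots,z_N$ ordered so that $|z_1|\ge\dots\ge|z_N|$, I would first prove a purely multiplicative bound
\[
V_{q,t}(X)\le\const(q,t,N)\prod_{1\le i<j\le N}\max(|z_i|,|z_j|)^{2\tau}=\const(q,t,N)\prod_{i=1}^N|z_i|^{2\tau(N-i)}.
\]
Granting this, since all summands are nonnegative and the ordered region $\{|z_1|\ge\dots\ge|z_N|\}$ lies inside $\L^N$, the sum is dominated by $\prod_{i=1}^N S_i$, where $S_i:=\sum_{z\in\L}|z|^{2\tau(N-i)}\,|W(z;q;\abcd)|$. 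Everything then reduces to the (uniform) finiteness of each $S_i$.

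For the bound on $V_{q,t}$ I read off from \eqref{eq5.D} that each unordered pair $\{z,z'\}$, $z'<z$, contributes exactly one factor $(z-z')$ and exactly one weight factor: namely $z^{2\tau-1}(qt^{-1}z'z^{-1};q)_\infty/(tz'z^{-1};q)_\infty$ when the larger point $z$ is positive, and $|z'|^{2\tau-1}(qt^{-1}zz'^{-1};q)_\infty/(tzz'^{-1};q)_\infty$ when both points are negative. Using $|z-z'|\le 2\max(|z|,|z'|)$, it remains to show each weight factor is at most $\const\cdot\max(|z|,|z'|)^{2\tau-1}$. When the two points have the same sign, $\tau$-sparsity forces the argument $w$ of the Pochhammer ratio to lie in $(0,t]$, so the ratio stays in a compact subset of $(0,\infty)$ and is bounded, while the power prefactor is exactly $\max(|z|,|z'|)^{2\tau-1}$. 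The delicate case is a positive $z$ paired with a negative $z'$, where $w=z'z^{-1}<0$ may be large in modulus; here I must control $g(w):=(qt^{-1}w;q)_\infty/(tw;q)_\infty$ as $w\to-\infty$, for which I would use the quasi-periodicity
\[
\frac{(q^2t^{-1}w;q)_\infty}{(qtw;q)_\infty}=\frac{1-tw}{1-qt^{-1}w}\cdot\frac{(qt^{-1}w;q)_\infty}{(tw;q)_\infty},
\]
whose left-hand side is $g(qw)$ and whose multiplier tends to $t^2/q=q^{2\tau-1}$ as $w\to-\infty$. This forces $g(w)$ to grow like $|w|^{2\tau-1}$ times a bounded factor, so that $z^{2\tau-1}\cdot g(w)\le\const\cdot\max(z,|z'|)^{2\tau-1}$. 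I expect this $q$-asymptotic estimate to be the main (though elementary) technical obstacle.

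It then remains to analyze $S_i$. Splitting $\L=\L_+\sqcup\L_-$, a point is $z=\zeta_\pm q^m$ with $m\in\Z$. Near $0$ (i.e.\ $m\to+\infty$) the factor $|z|^{2\tau(N-i)+1}$ in $|z|^{2\tau(N-i)}|W(z)|$ decays geometrically while the hypergeometric ratio tends to $1$, so this part of $S_i$ converges. For the large-$|z|$ tail ($n:=-m\to+\infty$) I would apply Lemma~\ref{lemma4.A} with the given $R_1,R_2$ and $R=|c|=|d|$: the ratio is $O\big((R_1R_2/R^2)^n\big)$, and since $|z|^{2\tau(N-i)+1}=\const\cdot q^{-n(2\tau(N-i)+1)}$, the general term is $O\big((R_1R_2/(R^2q^{2\tau(N-i)+1}))^n\big)$. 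This geometric series converges precisely when $R_1R_2<R^2q^{2\tau(N-i)+1}$; the binding constraint is $i=1$, i.e.\ $R_1R_2<R^2q^{2\tau(N-1)+1}$, which is exactly the hypothesis. Taking $R_1=|a|$, $R_2=|b|$ yields assertion (i). For (ii), the constant in Lemma~\ref{lemma4.A} is uniform over $|a|\le R_1$, $|b|\le R_2$, and the near-$0$ estimate is uniform by compactness; hence the majorant $\const(q,t,N)\prod_i S_i$ may be chosen independent of $(a,b)$ in the bidisc, and uniform convergence follows from the Weierstrass $M$-test.
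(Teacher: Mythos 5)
Your proposal is correct and follows essentially the same route as the paper: a polynomial bound on $V_{q,t}(X)$ in the moduli of the particles (the paper's Lemma~\ref{lemma5.A} gives $\const\prod_i(\max(1,|x_i|))^{2\tau(N-1)}$, while your rank-dependent exponents $2\tau(N-i)$ are a sharper variant leading to the same binding constraint at $i=1$), followed by factorization of the configuration sum into one-particle sums controlled by Lemma~\ref{lemma4.A}, with uniformity on the bidisc inherited from that lemma. The only step to tighten is the mixed-pair estimate: from $g(qw)/g(w)\to q^{2\tau-1}$ alone one cannot conclude $g(w)=O(|w|^{2\tau-1})$; you need that the correction factors are $q^{2\tau-1}(1+O(q^n))$ along each $q$-orbit, so that the infinite product of corrections converges (which it does, geometrically). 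The paper avoids this entirely by observing that the ratio of the two competing expressions equals $\psi_\tau(-v)$ of \eqref{eq5.L}, which is \emph{exactly} $q$-periodic by \eqref{eq5.N} and hence takes only finitely many values on the relevant orbits; this is the cleaner way to justify switching to whichever prefactor ($x^{2\tau-1}$ or $|y|^{2\tau-1}$) is convenient.
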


First we establish a lemma.    

\begin{lemma}\label{lemma5.A}
The following estimate holds
\begin{equation}\label{eq5.K}
V_{q,t}(X)\le\const \prod_{i=1}^N(\max(1,|x_i|))^{2\tau(N-1)}, \qquad X\in\Om_N.
\end{equation}
\end{lemma}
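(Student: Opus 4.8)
The plan is to estimate $V_{q,t}(X)$ by controlling each of the three products in the defining formula \eqref{eq5.D} separately. Recall that
$$
V_{q,t}(X)=\prod_{\substack{x,y\in X\\ y<x}}(x-y)\cdot \prod_{\substack{x\in X^+\\ y<x}}x^{2\tau-1}\frac{(qt^{-1}yx^{-1};q)_\infty}{(tyx^{-1};q)_\infty} \cdot \prod_{\substack{x\in X^-\\ y<x}}|y|^{2\tau-1}\frac{(qt^{-1}xy^{-1};q)_\infty}{(txy^{-1};q)_\infty}.
$$
The first (Vandermonde-type) product $\prod_{y<x}(x-y)$ over all $\binom{N}{2}$ ordered pairs is bounded above by $\prod_{y<x}(|x|+|y|)\le\prod_{y<x}2\max(|x|,|y|)$, which is a constant times a product of the $|x_i|$ raised to powers summing correctly; each coordinate $x_i$ appears in at most $N-1$ pairs, so this contributes at most $\const\prod_i(\max(1,|x_i|))^{N-1}$. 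The power-of-$x$ prefactors $x^{2\tau-1}$ and $|y|^{2\tau-1}$ attached to the second and third products are where the bulk of the exponent $2\tau(N-1)$ will come from: each point of $X$ appears with exponent $2\tau-1$ in these factors a number of times bounded by $N-1$.

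The key point to isolate is that the infinite $q$-Pochhammer ratios are \emph{uniformly bounded} on $\Om_N$. First I would observe, as already noted in the text right after \eqref{eq5.D}, that for $X\in\Om_N$ each argument $u$ appearing in a $(u;q)_\infty$ factor satisfies $u<1$, so every factor is strictly positive. The ratios have the form $\dfrac{(qt^{-1}r;q)_\infty}{(tr;q)_\infty}$ with $r=yx^{-1}$ (or $xy^{-1}$) a ratio of two same-sign lattice points with $|r|\le t$ by the $\tau$-sparseness/ordering constraint $\log_q|x_{i+1}^{\pm}|-\log_q|x_i^{\pm}|\ge\tau$. Since $r$ ranges over a set bounded away from the poles and the two Pochhammer symbols have a convergent ratio as $r\to 0$, each such ratio lies in a fixed compact interval $[\const^{-1},\const]$ depending only on $q,t$. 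As there are at most $\binom{N}{2}$ such ratios, their total product is bounded by $\const$ (with a constant depending on $q,t,N$ only), which is absorbed into the $\const$ on the right-hand side of \eqref{eq5.K}.

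Combining these, the whole of $V_{q,t}(X)$ is bounded by $\const$ times a product in which each coordinate $x_i$ carries a power of $\max(1,|x_i|)$ with exponent at most $(N-1)+2(2\tau-1)\cdot\frac{N-1}{?}$; I would carefully count so that the total exponent of each $\max(1,|x_i|)$ comes out to exactly $2\tau(N-1)$. The cleanest bookkeeping is to note that each point $x\in X$ participates as the \emph{larger} element in fewer than $N$ pairs and as the \emph{smaller} element in fewer than $N$ pairs, and in each of these the combined contribution of the Vandermonde factor and the $|\cdot|^{2\tau-1}$ prefactor is at most $\const\cdot(\max(1,|x|))^{2\tau}$ after pairing up; summing over the at most $N-1$ relevant partners gives exponent $2\tau(N-1)$ for each $i$.

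The main obstacle I anticipate is precisely this exponent bookkeeping: one must match the $|\cdot|^{2\tau-1}$ prefactors with the Vandermonde factors so that, after bounding $(x-y)$ by $\const\max(1,|x|)\max(1,|y|)/\min(\dots)$ or similar, the degree in each variable lands on the nose at $2\tau(N-1)$ rather than something slightly larger. The factor $|x|^{2\tau-1}$ appears once per partner $y<x$, contributing $(2\tau-1)(\#\{y<x\})$, while the Vandermonde contributes one more power of $\max(1,|x|)$ per partner, and a careful count of how many times each coordinate sits on the "large" versus "small" side of a pair yields the stated uniform exponent; the genuine analytic content (uniform boundedness of the Pochhammer ratios) is straightforward once the sparseness constraint $|r|\le t<1$ is invoked.
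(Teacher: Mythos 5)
There is a genuine gap, and it sits exactly where you declare the argument ``straightforward.'' In the formula \eqref{eq5.D} that you quote, the second product runs over $x\in X^+$ and \emph{all} $y\in X$ with $y<x$; this includes the mixed-sign pairs $x\in X^+$, $y\in X^-$. For those pairs the ratio $r=yx^{-1}$ is negative and $|r|$ is \emph{not} bounded by $t$ --- it can be arbitrarily large (take $x$ close to $0$ and $y$ a negative point of large absolute value). The quantity $\frac{(qt^{-1}r;q)_\infty}{(tr;q)_\infty}$ then behaves like $|r|^{2\tau-1}$ up to bounded factors (roughly $\log_{1/q}|r|$ of the factors in the infinite products contribute a ratio close to $qt^{-2}$ each), so it does not lie in a fixed compact interval $[\const^{-1},\const]$. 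Your uniform-boundedness claim is therefore false, and the resulting bound would actually be violated: for $N=2$, $X=\{x,-y\}$ with $0<x\le 1$ and $y$ large, one has $V_{q,t}(X)\asymp y\cdot x^{2\tau-1}(y/x)^{2\tau-1}\asymp y^{2\tau}$, whereas your estimate would give only $\const\cdot y\cdot x^{2\tau-1}\le\const\cdot y$, which is smaller than the true value when $\tau>\tfrac12$. (The lemma's stated bound $\max(1,|x|)^{2\tau}\max(1,y)^{2\tau}$ is still satisfied, but not by the route you propose.) Your same-sign analysis and the $|r|\le t$ observation are fine for pairs within $X^+$ or within $X^-$; the problem is only the cross terms, which is precisely the case the paper singles out.

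The paper's proof handles this by splitting off the cross-product over $x\in X^+$, $-y\in X^-$ as a separate factor and observing that the ratio of the two symmetric expressions (one with $x$ as base, one with $y$ as base) equals $\psi_\tau(-xy^{-1})$, where $\psi_\tau(u)=(-u)^{2\tau-1}\th_q(tu)/\th_q(tu^{-1})$ is quasi-constant, $\psi_\tau(qu)=\psi_\tau(u)$; since the possible arguments lie in finitely many $q$-orbits, $\psi_\tau$ takes only finitely many values on them. This licenses replacing the prefactor $x^{2\tau-1}$ by $\max(x,|y|)^{2\tau-1}$ up to a bounded factor, after which the exponent count (each letter occurs at most $N-1$ times among the $(2\tau-1)$-power prefactors, plus $N-1$ from the Vandermonde, giving $2\tau(N-1)$) goes through as you sketched. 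So the bookkeeping you flag as the main obstacle is in fact routine; the missing idea is the theta-function quasi-constancy needed to tame the unbounded mixed-sign Pochhammer ratios.
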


Here and throughout the proof of the lemma we denote by `$\const$' a varying positive factor which does not depend on $X$, but may depend on $N$, $q$, $t$, $\zeta_+$, and $\zeta_-$. 

\begin{proof}
In the case of $\tau\in\Z_{\ge1}$, when we can use formula \eqref{eq1.E}, the proof is easy. In the general case it is a bit more complicated. 

Recall that $V_{q,t}(X)$ is defined by formula \eqref{eq5.D}. We rewrite its right-hand side as follows
\begin{gather}
\prod_{\substack{x,y\in X\\ y<x}}(x-y)\cdot 
\prod_{\substack{x,y\in X^+\\ y<x}}x^{2\tau-1}\frac{(qt^{-1}yx^{-1};q)_\infty}{(tyx^{-1};q)_\infty} \cdot 
\prod_{\substack{x,y\in X^-\\ y<x}}|y|^{2\tau-1}\frac{(qt^{-1}xy^{-1};q)_\infty}{(txy^{-1};q)_\infty}  \label{eq5.E}   \\
\times \prod_{\substack{x\in X^+\\ -y\in X^-}}x^{2\tau-1}\frac{(-qt^{-1}yx^{-1};q)_\infty}{(-tyx^{-1};q)_\infty}. \label{eq5.F}
\end{gather}

The first product in \eqref{eq5.E}, the Vandermonde, is an alternating sum of monomials in $x_1,\dots,x_N$, of degree at most $N-1$ in each variable. This gives the bound
\begin{equation}\label{eq5.J}
\prod_{\substack{x,y\in X\\ y<x}}(x-y)\le \const \prod_{x\in X}(\max(1,|x|)^{N-1}.
\end{equation}

In the second and third products in \eqref{eq5.E}, each fraction is of the form
\begin{equation}
\frac{(u;q)_\infty}{(v;q)_\infty}, \qquad 0<u\le q, \quad 0<v\le t^2,
\end{equation}
and hence is bounded by a constant. This implies that these two products together are bounded from above by
\begin{equation}\label{eq5.H}
\const \prod_{\substack{x,y\in X^+\\ y<x}}x^{2\tau-1}\cdot 
\prod_{\substack{x,y\in X^-\\ y<x}}|y|^{2\tau-1}.
\end{equation}

We turn now to the last product, that in \eqref{eq5.F}. The previous argument does not work because here we cannot control the size of $yx^{-1}$ --- it may be arbitrarily large.  This difficulty is overcome in the following way: we will  show that the ratio
\begin{equation}\label{eq5.G}
\left(x^{2\tau-1}\frac{(-qt^{-1}yx^{-1};q)_\infty}{(-tyx^{-1};q)_\infty}\right)/
 \left(y^{2\tau-1}\frac{(-qt^{-1}xy^{-1};q)_\infty}{(-txy^{-1};q)_\infty}\right), \quad x\in\ X^+,  -y\in X^-,
\end{equation}
takes only finally many values. 

Suppose this is done (the proof is postponed to the very end). Then we have the freedom to use any of the two expressions in \eqref{eq5.G}, depending on whether $x\ge y$ or $y\ge x$. This implies that the last product is bounded from above by (we again change the notation)
\begin{equation}\label{eq5.I}
\const \prod_{\substack{x\in X^+\\ y\in X^-}}\max(x, |y|)^{2\tau-1}.
\end{equation}

Next, combine the bounds \eqref{eq5.H} and \eqref{eq5.I} together and observe that each letter $x\in X$ occurs there at most $N-1$ times. This shows that all the products together, except the Vandermonde, are bounded by
$$
\const \prod_{x\in X}(\max(1,|x|)^{(2\tau-1)(N-1)}.
$$
Then, taking account of the bound \eqref{eq5.J} for the Vandermonde we finally obtain the desired bound \eqref{eq5.K}. 

It remains to handle the ratio \eqref{eq5.G}. Setting $v=xy^{-1}$ we write it as
$$
v^{2\tau-1}\frac{\th_q(-tv)}{\th_q(-tv^{-1})}=\psi_\tau(-v),
$$
in the notation of \eqref{eq5.L}, \eqref{eq5.M}. Observe now that the possible values of the argument $-v$ are restricted by a subset of the form $\zeta_+\zeta_-^{-1} \{t^m\} q^\Z\subset\R_{<0}$, where $m$ takes only a finite number of values. Because $\psi_\tau$ is a quasi-constant function (see \eqref{eq5.N}), this completes the proof. 
\end{proof}

\begin{proof}[Proof of Proposition \ref{prop5.B}]

For $k=0,\dots,N$, let $\Om_{N,k}\subset\Om_N$ denote the subset of configurations $X$ with $|X^+|=k$. The set $\Om_N$ is the disjoint union of these subsets. Therefore it is enough to establish a uniform estimate for the series
$$
\sum_{X\in\Om_{N,k}} V_{q,t}(X) \prod_{x\in X}|W(x; q;a,b,c,d)|
$$
with $k$ fixed (the constant $C_N(k;\zeta_+,\zeta_-)$ can be ignored). Then $X=(x_1>\dots>x_N)$, where $x_i\in \zeta_i q^\Z$ with some fixed $\zeta_i\in\R^*$. Writing 
$$
X=(x_1,\dots,x_N)=(\zeta_1 q^{n_1},\dots, \zeta_N q^{n_N})
$$
we obtain a bijection between $\Om_{N,k}$ and a subset  $S\subset\Z^N$.  

Suppose now that $(a,b)$ ranges over the bidisk $|a|\le R_1$, $|b|\le R_2$. Taking into account the factor $|x|$ in the definition of the weight function (see \eqref{eq5.A}) we obtain from Lemma \ref{lemma4.A} and Lemma \ref{lemma5.A} the bound
$$
V_{q,t}(X) \prod_{x\in X}|W(x;q; a,b,c,d)|\le \const \prod_{i=1}^N f(n_i), \qquad X\leftrightarrow (n_1,\dots,n_N),
$$
where $f(n)$ is the function on $\Z$ defined by
$$
f(n)=\begin{cases} q^n, & n\ge0\\ \left(\dfrac{R_1R_2}{R^2q^{2\tau(N-1)+1}}\right)^{|n|}, & n<0.\end{cases}
$$
For $R_1, R_2$ fixed the bound is uniform on $(a,b)$ ranging over the corresponding bidisk. 

Since 
$$
\sum_{(n_1,\dots,n_N)\in\Z^N} \prod_{i=1}^Nf(n_i)<\infty \qquad \text{if $R_1R_2<R^2 q^{2\tau(N-1)+1}$},
$$
this proves both claims of the proposition, (i), and (ii). 
\end{proof}

\subsection{Principal series of hypergeometric measures}\label{sect5.4}

Now we can extend the definition of the principal series from section \ref{sect4} to the $N$-dimensional case. We impose the following conditions on the parameters $a,b,c,d$:
\begin{equation}\label{eq5.O}
c=\bar d\in\C\setminus\R, \quad a=\bar b\in\C\setminus\R, \quad ab<cdq^{2\tau(N-1)+1}.
\end{equation}

\begin{proposition}
Under these conditions, the expression \eqref{eq5.B} still makes sense and defines a probability measure $M_N(\ccdot;q,t;a,b,c,d)$ on $\Om_N$ with strictly positive weights.
\end{proposition}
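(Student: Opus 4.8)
The plan is to check three things in the order in which they settle the claim: first, that every weight $\wt M_N(X;q,t;\abcd)$ given by \eqref{eq5.B} is strictly positive; second, that the normalizing series $Z_N(q,t;\abcd)$ converges to a finite value (this is what ``makes sense'' means here, since under \eqref{eq5.O} the weight function no longer has bounded support, unlike the degenerate series); and third, that $Z_N>0$, so that dividing yields a genuine probability measure. All the hard analytic estimates have already been isolated into Proposition \ref{prop5.B} and Lemmas \ref{lemma4.A}, \ref{lemma5.A}, so the argument should be short.

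For the positivity, I would argue exactly as for the one-particle principal series in section \ref{sect4}. By \eqref{eq5.B} the weight $\wt M_N(X)$ is a product of three factors: the constant $C_N(|X^+|;\zeta_+,\zeta_-)$, which is strictly positive (as noted after \eqref{eq5.A}, because $\zeta_-\zeta_+^{-1}<0$); the factor $V_{q,t}(X)$, which is strictly positive for all $X\in\Om_N$ (established after \eqref{eq5.D}); and $\prod_{x\in X}W(x;q;\abcd)$. For the last factor I use that every $x\in\L$ is real while $a=\bar b$ and $c=\bar d$ are non-real, so $ax$ and $cx$ are non-real; hence no factor $1-axq^{i-1}$ or $1-cxq^{i-1}$ can vanish, and $(bx;q)_\infty=\overline{(ax;q)_\infty}$, $(dx;q)_\infty=\overline{(cx;q)_\infty}$. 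Therefore
$$
W(x;q;\abcd)=(1-q)|x|\frac{|(ax;q)_\infty|^2}{|(cx;q)_\infty|^2}>0,
$$
and consequently $\wt M_N(X)>0$ for every $X\in\Om_N$.

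For the convergence I would invoke Proposition \ref{prop5.B}(i). Its hypothesis is $c=\bar d\in\C\setminus\R$ (which holds) together with $|ab|<R^2q^{2\tau(N-1)+1}$, where $R:=|c|=|d|$. The one elementary point to verify by hand is that the principal-series inequality $ab<cdq^{2\tau(N-1)+1}$ in \eqref{eq5.O} is precisely this hypothesis: under $a=\bar b$, $c=\bar d$ one has $cd=c\bar c=|c|^2=R^2$ and $ab=a\bar a=|a|^2$, both being positive reals, so $ab<cdq^{2\tau(N-1)+1}$ reads $|ab|=|a|^2<R^2q^{2\tau(N-1)+1}$. Hence Proposition \ref{prop5.B}(i) applies and gives $\sum_{X\in\Om_N}|\wt M_N(X)|<\infty$; since the summands are positive, this is exactly the convergence of $Z_N=\sum_{X\in\Om_N}\wt M_N(X)$ to a finite number.

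Finally, $Z_N>0$ because $\Om_N$ is nonempty (there exist $N$-particle $\tau$-sparse configurations on $\L$) and each summand is strictly positive. Dividing the finite positive weights $\wt M_N(X)$ by $Z_N$ then produces strictly positive numbers summing to $1$, i.e. the probability measure $M_N(\ccdot;q,t;\abcd)$ on $\Om_N$. I do not expect a genuine obstacle here: the only substantive input is the decay estimate packaged in Proposition \ref{prop5.B}, and the proof reduces to the bookkeeping of matching the principal-series inequality to its hypothesis and to the reality/conjugacy computation showing $W>0$.
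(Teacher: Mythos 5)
Your proof is correct and follows essentially the same route as the paper: positivity of the unnormalized weights from the conjugacy constraints $a=\bar b$, $c=\bar d\in\C\setminus\R$ (exactly as in the $N=1$ case), and summability from Proposition \ref{prop5.B}(i) after observing that $ab=|a|^2$ and $cd=R^2$ turn the principal-series inequality into the hypothesis of that proposition. The extra details you supply (the squared-modulus form of $W$, the nonemptiness of $\Om_N$) are correct and merely make explicit what the paper leaves implicit.
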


\begin{proof}
As in the case $N=1$, the first two constraints in \eqref{eq5.O} ensure the strict positivity of the unnormalized weights
$$
\wt M_N(X;q,t;\abcd):=C_N(|X^+|;\zeta_+,\zeta_-)\,V_{q,t}(X)\prod_{x\in X} W(x; q; a,b,c,d), \qquad X\in\Om_N.
$$
Next, by virtue of the estimate in Proposition \ref{prop5.B} (i), the third constraint guarantees that the unnormalized weights are summable over $\Om_N$. 
\end{proof}

\begin{remark}\label{rem5.B}
As in the case $N=1$, one can also introduce the complementary series of multidimensional hypergeometric measures. For instance, one can allow the parameters $(a,b)$ to take real values subject to some constraints depending on $q$, $t$, $\zeta_+$, $\zeta_-$, and $N$. However, in the context of Theorem \ref{thm6.A}, one cannot satisfy these constraints for all $N$, unless $\tau$ is rational, see Remark \ref{rem6.A}. 
\end{remark}

\section{Main result: large-$N$ limit transition}\label{sect6}

\subsection{Preliminaries}\label{sect6.1}
We partly repeat and extend the definitions of subsection \ref{sect1.1.1}.

\subsubsection{Stochastic links}
Given $X\in\wt\Om$, we denote by $[X]$ the smallest closed interval of $\R$ containing $X$. 

According to \cite[Theorem A]{Ols-MacdonaldOne}, for each $N=2,3,\dots$ there exists a stochastic matrix $\LaN$ of format\/ $\Om_N\times \Om_{N-1}$,  with the following properties (below we denote by $\LaN(X,Y;q,t)$ the $(X,Y)$-entry of $\LaN$):

(1) the entry $\LaN(X,Y;q,t)$ vanishes whenever $Y$ is not contained in $[X]$;

(2)  for any $X\in\Om_N$ and any partition $\nu$ with $\ell(\nu)\le N-1$,
\begin{equation}\label{eq6.A1}
\sum_{Y \in\Om_{N-1}}\LaN(X,Y;q,t) \frac{P_{\nu\mid N-1}(Y;q, t)}{(t^{N-1};q, t)_\nu}=\frac{P_{\nu\mid N}(X;q,t)}{(t^N;q, t)_\nu}.
\end{equation}

An explicit expression for the entries $\LaN(X,Y;q,t)$ is given in \cite{Ols-MacdonaldOne} but we do not need it. We will use only the properties (1) and (2), which determine $\LaN$ uniquely.

\subsubsection{Coherent systems of measures}\label{sect6.1.2}

Recall that the symbol $\P(\ccdot)$  denotes the space of probability Borel measures on a given Borel (=measurable) space. We say that an infinite sequence $\{M_N\in\P(\Om_N): N\in\Z_{\ge1}\}$ forms a \emph{coherent system} with respect to the stochastic links $\LaN$ if  
$$
\sum_{X\in\Om_N}M_N(X)\LaN(X,Y;q,t)=M_{N-1}(Y) \qquad \text{for all $N\ge2$ and $Y\in\Om_{N-1}$}.
$$

By \cite[Theorem B]{Ols-MacdonaldOne}, there exists a one-to-one correspondence between coherent systems $\{M_N\}$ and probability measures $M_\infty\in\P(\Om_\infty)$. Under this correspondence one has
$$
M_\infty\La^\infty_N=M_N \qquad \text{for all $N\in\Z_{\ge1}$},
$$
where $\La^\infty_N: \Om_\infty\to\P(\Om_N)$ is a certain stochastic link (Markov kernel) between $\Om_\infty$ and $\Om_N$. In more detail, this relation reads as
$$
\int_{X\in\Om_N}M_\infty(dX)\La^\infty_N(X,Y;q,t)=M_{N-1}(Y) \qquad \text{for all $N\in\Z_{\ge1}$ and $Y\in\Om_N$}.
$$
The links $\La^\infty_N$ are characterized by the properties:

(1) $\La^\infty_N(X,Y;q,t)$ vanishes whenever $Y$ is not contained in $[X]$;

(2)  for any $X\in\Om_\infty$ and any partition $\nu\in\Y(N)$, one has 
\begin{equation}\label{eq6.B1}
\sum_{Y \in\Om_N}\La^\infty_N(X,Y;q,t) \frac{P_{\nu\mid N}(Y;q, t)}{(t^{N};q, t)_\nu}=P_\nu(X;q,t),
\end{equation}
where $P_\nu(X;q,t)$ is the Macdonald symmetric function with index $\nu$, evaluated at $X$.

We call $M_\infty$ the \emph{boundary measure} of $\{M_N\}$. 

\subsubsection{Large-$N$ limit transition}\label{sect6.1.3}

Let $\{M_N\in\P(\Om_N): N\in\Z_{\ge1}\}$ be a coherent system and $M_\infty\in\P(\Om_\infty)$ be the corresponding boundary measure. Since both $\Om_\infty$ and all $\Om_N$ are subsets of $\wt\Om$, we may put all the measures in the ambient space $\wt\Om$. Recall that it has a topology and $\P(\wt\Om)$ is equipped with the corresponding weak topology.

Now Theorem C from \cite{Ols-MacdonaldOne} tells that $M_N\to M_\infty$ as $N\to\infty$, in the weak topology of $\P(\wt\Om)$. This is an abstract result guaranteeing the existence of a large-$N$ limit, which  we will apply to concrete examples.

\subsection{The measures $\MM^{q,t;\alde}_N$}\label{sect6.2}

As before, we fix the parameters $q\in(0,1)$ and $t=q^\tau$, where $\tau>0$, and also  $\zeta_+>0$ and $\zeta_-<0$. 

So far $N$ was fixed, but now it will vary, and the quadruple $(\abcd)$  will vary together with it. Namely, we set
\begin{equation}\label{eq6.A}
(\abcd):=(\al,\be,\ga t^{1-N}, \de t^{1-N}),
\end{equation}
where $(\alde)$ is a new \emph{fixed} quadruple of parameters satisfying one of the following two collections of constraints:

\emph{Degenerate series}: 
\begin{equation*}
\al\in \zeta_+^{-1} q^\Z, \quad \be\in\zeta_-^{-1} q^\Z, \quad \ga=\bar\de\in\C\setminus\R.
\end{equation*}

\emph{Principal series}:
\begin{equation}\label{eq6.princ}
\al=\bar\be\in\C\setminus\R, \quad \ga=\bar\de\in\C\setminus\R, \quad \al\be<\ga\de q.
\end{equation}

\begin{definition}
For each $N\in\Z_{\ge1}$ we define a probability measure on $\Om_N$ by 
\begin{equation}\label{eq6.N-1}
\MM^{q,t;\alde}_N(X):=M_N(X;q,t;\al,\be,\ga t^{1-N}, \de t^{1-N}), \quad X\in\Om_N.
\end{equation}
In more detail (cf. \eqref{eq5.B}), 
\begin{equation}
\MM^{q,t;\alde}_N(X):=(Z^{q,t;\alde})^{-1} \wt \MM^{q,t;\alde}_N(X),
\end{equation}
where
\begin{equation}
\wt \MM^{q,t;\alde}_N(X):=C_N(|X^+|;\zeta_+,\zeta_-)\,V_{q,t}(X)\prod_{x\in X} W(x; q; \al,\be, \ga t^{1-N}, \de t^{1-N})
\end{equation}
and 
\begin{equation}\label{eq6.Z_N}
Z^{q,t;\alde}_N(q,t;a,b,c,d):=\sum_{X\in\Om_N}\wt \MM^{q,t;\alde}_N(X).
\end{equation}
\end{definition}

The definition makes sense, because for both series, degenerate  and principal, the varying parameters $(\abcd)$ given by \eqref{eq6.A} satisfy the necessary constraints for all $N$. 

\begin{remark}\label{rem6.A}
For rational $\tau$, one can extend the range of quadruples $(\alde)$ by allowing some of them to be real, so that the corresponding measures $\MM^{q,t;\alde}_N$ be in the complementary series for each $N=1,2,3,\dots$ (but for irrational $\tau$ this is impossible). 
\end{remark}

\subsection{The coherency relation}\label{sect6.3}

Below we use the results from \cite{Ols-MacdonaldOne} summarized in subsection \ref{sect6.1}. We only abbreviate the notation  $\LaN(X,Y;q,t)$ to $\LaN(X.Y)$.  

\begin{theorem}\label{thm6.A}
Let $(\alde)$ be a fixed quadruple of parameters from the degenerate  or principal series, and let $N=1,2,\dots$\,. The measures $\MM^{q,t;\alde}_N$ form a coherent system.
\end{theorem}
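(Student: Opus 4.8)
The goal is the entrywise coherency relation
$$
(\MM^{q,t;\alde}_N\LaN)(Y):=\sum_{X\in\Om_N}\MM^{q,t;\alde}_N(X)\,\LaN(X,Y)=\MM^{q,t;\alde}_{N-1}(Y),\qquad N\ge2,\ Y\in\Om_{N-1}.
$$
I would prove it first for the degenerate series by an orthogonality argument, and then transport it to the principal series by analytic continuation in $(\al,\be)$, matching the two-step scheme announced in subsection \ref{sect1.3.3}.

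\emph{Degenerate series.} Here $\MM^{q,t;\alde}_{N}=M_N(\ccdot;q,t;\al,\be,\ga t^{1-N},\de t^{1-N})$, and since $a=\al$, $b=\be$ carry no power of $t$, every $M_N$ is supported on the \emph{same} interval $\Om_N[\be^{-1}q,\al^{-1}q]$. By Theorem \ref{thm5.A} and the uniqueness statement of Proposition \ref{prop5.A}, $\MM^{q,t;\alde}_{N-1}$ is the unique probability measure on $\Om_{N-1}[\be^{-1}q,\al^{-1}q]$ that is orthogonal to all $\varphi_{\la\mid N-1}(\ccdot;q,t;\al,\be,\ga t^{2-N},\de t^{2-N})$ with $|\la|>0$. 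Thus it suffices to verify these properties for the push-forward $\MM^{q,t;\alde}_N\LaN$. It is automatically a probability measure, and since $\LaN(X,Y)=0$ unless $Y\subseteq[X]$, its support lies in $\Om_{N-1}[\be^{-1}q,\al^{-1}q]$. For orthogonality, fix $\la\in\Y(N-1)$ with $|\la|>0$ and compute
\begin{align*}
\sum_{Y\in\Om_{N-1}}(\MM^{q,t;\alde}_N\LaN)(Y)\,\frac{\varphi_{\la\mid N-1}(Y)}{(t^{N-1};q,t)_\la}
&=\sum_{\nu\subseteq\la}\pi(\la,\nu;q,t;\alde)\sum_{Y}(\MM^{q,t;\alde}_N\LaN)(Y)\,\frac{P_{\nu\mid N-1}(Y;q,t)}{(t^{N-1};q,t)_\nu}\\
&=\sum_{\nu\subseteq\la}\pi(\la,\nu;q,t;\alde)\sum_{X}\MM^{q,t;\alde}_N(X)\,\frac{P_{\nu\mid N}(X;q,t)}{(t^{N};q,t)_\nu}\\
&=\frac{1}{(t^N;q,t)_\la}\sum_{X\in\Om_N}\MM^{q,t;\alde}_N(X)\,\varphi_{\la\mid N}(X).
\end{align*}
The first equality is the expansion \eqref{eq3.stability} at level $N-1$; the second uses the definition of the push-forward together with the defining property \eqref{eq6.A1} of the links; the third is \eqref{eq3.stability} at level $N$, and it is here that the $N$-independence of the coefficients $\pi(\la,\nu;q,t;\alde)$ (Corollary \ref{cor3.A}, Remark \ref{rem3.A}) is decisive, since the two sums over $\nu$ coincide term by term. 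By Theorem \ref{thm5.A} the final sum vanishes for $|\la|>0$ (it is the pairing of $\varphi_{\la\mid N}$ against $\varphi_{\varnothing\mid N}\equiv1$ in $\MM^{q,t;\alde}_N$). Hence $\MM^{q,t;\alde}_N\LaN=\MM^{q,t;\alde}_{N-1}$ by Proposition \ref{prop5.A}.

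\emph{Principal series.} Fix $(\ga,\de)$ with $\ga=\bar\de$ and regard the unnormalized weights $\wt\MM^{q,t;\alde}_N(X)$, the partition functions $Z^{q,t;\alde}_N$, $Z^{q,t;\alde}_{N-1}$, and the sum $\sum_X\wt\MM^{q,t;\alde}_N(X)\LaN(X,Y)$ as functions of $(\al,\be)\in\C^2$. Each weight is entire in $(\al,\be)$, only the factors $(\al x;q)_\infty$, $(\be x;q)_\infty$ depending on $\al,\be$. By Proposition \ref{prop5.B} applied to the substitution \eqref{eq6.A} (where $a=\al$, $b=\be$ carry no power of $t$, so the $N$-dependent factor in the threshold cancels), the relevant series converge uniformly on a bidisc around each point of the $N$-independent domain $\{|\al\be|<|\ga|^2q\}$; hence all these quantities are holomorphic there. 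Clearing normalizations, the coherency relation is equivalent to
$$
Z^{q,t;\alde}_{N-1}\sum_{X\in\Om_N}\wt\MM^{q,t;\alde}_N(X)\,\LaN(X,Y)=Z^{q,t;\alde}_N\,\wt\MM^{q,t;\alde}_{N-1}(Y),\qquad Y\in\Om_{N-1},
$$
an identity between two functions holomorphic on $\{|\al\be|<|\ga|^2q\}$. We have just established it on the degenerate lattice $\al\in\zeta_+^{-1}q^\Z$, $\be\in\zeta_-^{-1}q^\Z$, which lies in the domain and accumulates at $\al=0$ and at $\be=0$ separately. Freezing $\be$ at a lattice value and continuing in $\al$, then freezing $\al$ at an arbitrary point and continuing in $\be$, the identity propagates to the whole domain, in particular to the principal locus $\al=\bar\be$, $|\al|^2<|\ga|^2q$ of \eqref{eq6.princ}. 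Dividing by $Z^{q,t;\alde}_{N-1}Z^{q,t;\alde}_N>0$ there yields the coherency relation for the principal series.

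\emph{Main obstacle.} The substantive input, namely the $N$-independence of the coefficients $\pi$, is already supplied by Corollary \ref{cor3.A}, so the chief difficulty here is organizational: one must ensure that the infinite sums over $\Om_N$ converge uniformly enough to be holomorphic in $(\al,\be)$, and that the degenerate parameter lattice both lies in and accumulates coordinatewise (at $0$) within the domain of holomorphy, so that the one-variable-at-a-time continuation is legitimate. The estimates of Section \ref{sect5}, and Lemma \ref{lemma4.A}, are tailored precisely to this end.
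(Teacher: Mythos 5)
Your proposal is correct and follows essentially the same route as the paper: for the degenerate series, the push-forward $\MM^{q,t;\alde}_N\LaN$ is identified with $\MM^{q,t;\alde}_{N-1}$ via the uniqueness of the orthogonality measure (Proposition \ref{prop5.A}), with the key input being the $N$-independence of the coefficients $\pi(\la,\nu;q,t;\alde)$ combined with \eqref{eq6.A1}; for the principal series, the relation is transported by analytic continuation in $(\al,\be)$ over the bidisc $\{|\al\be|<\ga\de q\}$ using the uniform estimates of Proposition \ref{prop5.B}. Your variable-at-a-time continuation is just an explicit unwinding of the paper's assertion that the degenerate lattice is a uniqueness set for holomorphic functions on that domain.
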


Recall that by definition this means that the following coherency relations hold
\begin{equation}\label{eq6.B}
\sum_{X\in\Om_N}\MM^{q,t;\alde}_N(X)\LaN(X,Y)=\MM^{q,t;\alde}_{N-1}(Y), \qquad N\ge2, \quad Y\in\Om_{N-1}.
\end{equation}

Note that for rational $\tau$, the theorem holds true for sequences $\{\MM^{q,t;\alde}_N\}$ from the complementary series (see Remark \ref{rem6.A}); the proof remains the same. 

\begin{proof} 
(i) We examine first the degenerate series case. It is convenient to write \eqref{eq6.B} in the condensed form
\begin{equation}\label{eq6.C}
\MM^{q,t;\alde}_N\LaN=\MM^{q,t;\alde}_{N-1}, \quad N\ge2,
\end{equation}
where we treat $\MM^{q,t;\alde}_N$ and $\MM^{q,t;\alde}_{N-1}$ as row vectors with coordinates indexed by $\Om_N$ and $\Om_{N-1}$, respectively. 
We also use the shorthand notation for the $N$-variate big $q$-Jacobi polynomials (see \eqref{eq3.C})
\begin{equation}\label{eq6.PhiN}
\Phi_{\la\mid N}:=\varphi_{\la\mid N}(\ccdot;q,t; \al,\be,\ga q^{\tau(1-N)}, \de q^{\tau(1-N)}).
\end{equation}

Because $\LaN$ is a stochastic matrix, $\MM^{q,t;\alde}_N\LaN$ \footnote{There is a typo in this expression in the published version.} is a probability measure. Next, we claim that it is concentrated on the subset $\Om_{N-1}[\be^{-1}q,\al^{-1}q]\subset\Om_N$, meaning that $\MM^{q,t;\alde}_N\LaN(Y)$ vanishes unless $Y$ is contained in $[\be^{-1}q,\al^{-1}q]$. 

Indeed, we know that $\MM^{q,t;\alde}_N$ \footnote{There is a typo in this expression in the published version.} is concentrated on $\Om_N[\be^{-1}q,\al^{-1}q]$. On the other hand, as pointed out above, $\LaN(X,Y)$ vanishes unless $Y\subset [X]$, which proves the claim. 

On the other hand, we know that $\MM^{q,t;\alde}_{N-1}$ is an orthogonality measure for the polynomials $\Phi_{\la\mid N-1}$ with $\la\in\Y(N-1)$ (Theorem \ref{thm5.A}); moreover, it is a unique probability measure on $\Om_{N-1}[\be^{-1}q,\al^{-1}q]$ which is orthogonal to all polynomials $\Phi_{\la\mid N-1}$ with nonzero index  (Proposition \ref{prop5.A}). 

Therefore, to prove the equality \eqref{eq6.C} it suffices to show that $\MM^{q,t;\alde}_N\LaN$ has the same orthogonality property, which can be written as 
\begin{equation}\label{eq6.D}
\langle \MM^{q,t;\alde}_N\LaN, \; \Phi_{\la\mid N-1}\rangle=0, \qquad \forall \la\in\Y(N-1), \quad \la\ne0,
\end{equation}
where the angular brackets denote the canonical pairing between measures and functions. 

It is convenient to regard $\Phi_{\la\mid N-1}$ as a column vector with the coordinates indexed by the set $\Om_{N-1}$, and then treat $\langle\ccdot,\ccdot\rangle$ as the canonical pairing between row and column vectors. Then we can rewrite \eqref{eq6.D} as 
\begin{equation}\label{eq6.E}
\langle \MM^{q,t;\alde}_N, \; \LaN \Phi_{\la\mid N-1}\rangle=0, \qquad \forall \la\in\Y(N-1), \quad \la\ne0,
\end{equation}
We will show that $\LaN \Phi_{\la\mid N-1}$ is proportional to $\Phi_{\la\mid N}$, and then \eqref{eq6.E} will follow, due to the orthogonality property of $\MM^{q,t;\alde}_N$.

More precisely, we are going to show that the following equality holds
\begin{equation}
\LaN \, \frac{\Phi_{\la\mid N-1}}{(t^{N-1};q,t)_\la}=\frac{\Phi_{\la\mid N}}{(t^N;q)_\la}, \qquad \la\in\Y(N-1).
\end{equation}
Indeed, Corollary \ref{cor3.A} implies (see Remark \ref{rem3.A}) that in the expansions
$$
\frac{\Phi_{\la\mid N}}{(t^N;q,t)_\la}=\sum_{\nu\subseteq\la} (\ccdot)\frac{P_{\nu\mid N}(\ccdot;q,t)}{(t^N;q,t)_\nu}
$$
and
$$
\frac{\Phi_{\la\mid N-1}}{(t^{N-1};q,t)_\la}=\sum_{\nu\subseteq\la} (\ccdot)\frac{P_{\nu\mid N-1}(\ccdot;q,t)}{(t^{N-1};q,t)_\nu}
$$
the coefficients are the same stable quantities $\pi(\la,\nu;q,t;\alde)$ that do not depend on the number of variables. 

On the other hand, we have
$$
\LaN \, \frac{P_{\nu\mid N-1}(\ccdot;q,t)}{(t^{N-1};q,t)_\nu}=\frac{P_{\nu\mid N}(\ccdot;q,t)}{(t^{N};q,t)_\nu}, \qquad \nu\in\Y(N-1),
$$
see  \eqref{eq6.A1}.

Combining these two facts we obtain the desired equality \eqref{eq6.E}. This completes the proof for the degenerate series.

(ii) Now we turn to the principal series. The desired coherency relation can be written in the form 
\begin{equation}\label{eq6.F}
Z^{q,t;\alde}_{N-1}\sum_{X\in\Om_N} \wt \MM^{q,t;\alde}_N(X)\LaN(X,Y)=Z^{q,t;\alde}_N \wt \MM^{q,t;\alde}_{N-1}(Y), \quad Y\in\Om_{N-1}.
\end{equation}

We are going to derive \eqref{eq6.F} from item (i) via analytic continuation. To do this we fix $(\ga,\de)$ keeping the  condition $\ga=\bar\de\in\C\setminus\R$, but relax the constraints on $(\al,\be)$. Namely, we will assume that $(\al,\be)$ ranges over the bidisc
$$
D:=\{(\al,\be)\in\C^2: |\al\be|<\ga\de q\}.
$$
Our aim is to show that both sides of \eqref{eq6.F} are well defined and are holomorphic functions in $D$. Once this is done, we can conclude that \eqref{eq6.F} holds as an equality of holomorphic functions in $D$. Indeed, we know from item (i) that this holds true on the infinite subset
$$
\{(\al,\be)\in\ D: \al\in\zeta_+^{-1}q^\Z,  \;\be\in\zeta_-^{-1}q^\Z\}.
$$
which is a uniqueness set for holomorphic functions in $D$.

The final steps are the following:

\smallskip

(1) For $X$ and $Y$ fixed, the quantities $\wt \MM^{q,t;\alde}_N(X)$ and $\wt \MM^{q,t;\alde}_{N-1}(Y)$ are well defined and are holomorphic functions of the variables $(\al,\be)$ ranging over $D$: this is clear from the very definition. 

(2) Note that $0\le \LaN(X,Y)\le1$. 

(3) The series 
$$
\sum_{X\in\Om_N} \wt \MM^{q,t;\alde}_N(X) \quad \text{and} \quad \sum_{Y\in\Om_{N-1}} \wt \MM^{q,t;\alde}_{N-1}(Y)
$$
converge absolutely and uniformly on $(\al,\be)$ ranging over an arbitrary compact subset of the domain $D$ --- here we use claim (ii) of Proposition \ref{prop5.B}. 

\smallskip
 
From (1) and (3) we see that the normalization constants are holomorphic functions with respect to $(\al,\be)\in D$. Next, from (1) and (2) we see that the same holds for the two series in \eqref{eq6.F}. We conclude that the both sides of \eqref{eq6.F} are holomorphic functions in $(\al,\be)\in D$. This completes the proof.  
\end{proof}

\begin{remark}
The above argument shows that the value of the normalization constant for the principal series measures can be obtained from the one for the degenerate series measures (Remark \ref{rem5.A}) via analytic continuation. 
\end{remark}

\subsection{Main result}\label{sect6.4}
The next result is a direct consequence of Theorem \ref{thm6.A} and \cite[Theorem C]{Ols-MacdonaldOne} (see subsection \ref{sect6.1.3} above).

\begin{theorem}\label{thm6.B}
Let, as above, $(\alde)$ be a fixed quadruple of parameters from the degenerate  or principal series.  As $N\to\infty$, the measures $\MM^{q,t;\alde}_N$ converge to a probability measure $\MM^{q,t;\alde}_\infty$ on $\Om$, in the weak topology of measures on the ambient space $\wt\Om$.
\end{theorem}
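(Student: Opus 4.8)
The plan is to obtain Theorem~\ref{thm6.B} as a direct application of the boundary formalism recalled in subsection~\ref{sect6.1}, the one substantial ingredient being the coherency relation already supplied by Theorem~\ref{thm6.A}. Concretely I would proceed in three steps. First, invoke Theorem~\ref{thm6.A}: for a fixed quadruple $(\alde)$ in the degenerate or principal series, the sequence $\{\MM^{q,t;\alde}_N : N\in\Z_{\ge1}\}$ satisfies
\[
\MM^{q,t;\alde}_N\LaN=\MM^{q,t;\alde}_{N-1},\qquad N\ge2,
\]
so it is a coherent system for the projective chain \eqref{eq1.F} with the stochastic links $\LaN$.

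Second, I would apply \cite[Theorem B]{Ols-MacdonaldOne} (see subsection~\ref{sect6.1.2}): the boundary of this chain is the space $\Om_\infty$, and coherent systems correspond bijectively to probability measures on the boundary through the relations $M_N=M_\infty\La^\infty_N$. Feeding our coherent system into this correspondence produces a probability measure on $\Om_\infty$, which we name $\MM^{q,t;\alde}_\infty$. Third, to promote this identification of the boundary measure into an actual weak limit, I would appeal to \cite[Theorem C]{Ols-MacdonaldOne} (subsection~\ref{sect6.1.3}): for \emph{any} coherent system the measures $M_N$, regarded as elements of $\P(\wt\Om)$ via the inclusions $\Om_N\subset\wt\Om$, converge weakly to their boundary measure. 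Specializing to our system yields $\MM^{q,t;\alde}_N\to\MM^{q,t;\alde}_\infty$ in the weak topology of $\P(\wt\Om)$, with the limit supported by $\Om_\infty$, which is exactly the assertion.

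I would emphasize that, once Theorem~\ref{thm6.A} is granted, there is essentially no remaining obstacle: the entire analytic content of the result sits in the coherency relation, whose proof in turn rests on the $N$-independence of the expansion coefficients $\pi(\la,\nu;q,t;\alde)$ (Corollary~\ref{cor3.A} and Remark~\ref{rem3.A}) together with the compatibility \eqref{eq6.A1} of the links with Macdonald polynomials. The only point deserving a sentence in the write-up is that the limiting measure charges the infinite stratum $\Om_\infty$ rather than the finite strata $\bigsqcup_{N}\Om_N$; this is not verified by a direct computation but is built into the identification of the boundary in \cite[Theorem B]{Ols-MacdonaldOne}. The whole proof therefore reduces to citing Theorem~\ref{thm6.A} and the two abstract transfer results of the companion paper.
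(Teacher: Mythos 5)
Your proposal is correct and follows exactly the paper's own route: Theorem \ref{thm6.B} is deduced by combining the coherency relation of Theorem \ref{thm6.A} with the abstract transfer results (Theorems B and C) of the companion paper, as summarized in subsection \ref{sect6.1}. Nothing further is needed.
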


\section{Concluding remarks}\label{sect7}

\subsection{Big $q$-Jacobi symmetric functions}

The main theorem together with the computation of section \ref{sect3} allow us to extend the results of the author's paper \cite{Ols-FAA-2017} dealing with the $t=q$ case to the case of general parameters $q,t\in(0,1)$. We do not give proofs as they are similar to the arguments of that paper.  

Let $\Sym$ denote the algebra of symmetric functions. The Macdonald symmetric functions $P_\nu=P_\nu(\ccdot;q,t)$, indexed by partitions $\nu\in\Y$, form a homogeneous basis of $\Sym$. Fix a quadruple $(\alde)$ from the degenerate series. 

The \emph{big $q$-Jacobi symmetric functions} are defined by 
\begin{equation}
\Phi_\la^{q,t;\alde}(\ccdot):=\sum_{\nu\subseteq\la}
\pi(\la,\nu;q,t;\alde) P_\nu(\ccdot;q,t), \qquad \la\in\Y,
\end{equation}
where the coefficients $\pi(\la,\nu;q,t;\alde)$ are the same as in \eqref{eq3.stability}.  For $t=q$ this definition coincides with that of \cite{Ols-FAA-2017}.

These symmetric functions are large-$N$-limits of big $q$-Jacobi polynomials,
$$
\Phi_\la^{q,t;\alde}(\ccdot)=\lim_{N\to\infty}\varphi_{\la\mid N}(\ccdot;q,t;\al,\be, \ga t^{1-N},\de t^{1-N}), 
$$
where the limit is understood as in \cite[Definition 2.4]{Ols-FAA-2017}. 

By virtue of Corollary \ref{cor3.B},
$$
\Phi_\la^{q,t;\alde}(\ccdot)=P_\la(\ccdot;q,t)+\text{lower degree terms},
$$
so that the elements $\Phi_\la^{q,t;\alde}$ form an inhomogeneous  basis in $\Sym$.

Observe that elements of $\Sym$ can be realized as continuous functions on the space $\wt\Om$, and they are bounded on the set
$$
\{X\in\Om_\infty: X\subset[\be^{-1}q, \al^{-1}q]\},
$$
on which the limit measure $\MM^{q,t;\alde}_\infty$ is concentrated. 

\begin{theorem}\label{thm7.A}
The big $q$-Jacobi symmetric functions $\Phi^{q,t;\alde}_\nu$ form an orthogonal basis of the Hilbert space $L^2(\Om_\infty, \MM^{q,t;\alde}_\infty)$.
\end{theorem}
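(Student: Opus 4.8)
The plan is to split the statement into two independent parts: \textbf{orthogonality}, i.e. $\langle\Phi^{q,t;\alde}_\la,\Phi^{q,t;\alde}_\mu\rangle_{\MM^{q,t;\alde}_\infty}=0$ for $\la\ne\mu$, and \textbf{completeness}, i.e. that the closed span of $\{\Phi^{q,t;\alde}_\la\}$ exhausts $L^2(\Om_\infty,\MM^{q,t;\alde}_\infty)$. I would dispatch completeness first, as the soft part. By Corollary~\ref{cor3.B} the change of basis $\{P_\nu\}\leftrightarrow\{\Phi^{q,t;\alde}_\la\}$ is unitriangular for the containment order, so the two families span the same subspace of $\Sym$; hence it suffices to show that $\Sym$, realized as continuous functions on $\wt\Om$, is dense in $L^2$. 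Every measure here is carried by the compact set $K:=\{X\in\wt\Om:\ X\subseteq[\be^{-1}q,\al^{-1}q]\}$, on which $\Sym$ separates points (distinct configurations are told apart by some power sum), contains the constants, and is uniformly bounded; Stone--Weierstrass then makes $\Sym$ dense in $C(K)$, and $C(K)$ is dense in $L^2(K,\MM^{q,t;\alde}_\infty)$.

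For orthogonality I would transport the finite-level orthogonality of Theorem~\ref{thm5.A} to the limit through the weak convergence of Theorem~\ref{thm6.B}. Fix $\la\ne\mu$ and take $N\ge\max(\ell(\la),\ell(\mu))$. Since $\Phi^{q,t;\alde}_\la\Phi^{q,t;\alde}_\mu$ is continuous and bounded on the compact set $K$ that carries every $\MM^{q,t;\alde}_N$ and $\MM^{q,t;\alde}_\infty$, weak convergence yields
\[
\int_{\Om_\infty}\Phi^{q,t;\alde}_\la\,\Phi^{q,t;\alde}_\mu\,d\MM^{q,t;\alde}_\infty=\lim_{N\to\infty}\int_{\Om_N}\Phi^{q,t;\alde}_\la\,\Phi^{q,t;\alde}_\mu\,d\MM^{q,t;\alde}_N.
\]
The decisive step is then to compare, on $\Om_N$, the symmetric function $\Phi^{q,t;\alde}_\la$ with the genuine $N$-variate big $q$-Jacobi polynomial $\Phi_{\la\mid N}$ of \eqref{eq6.PhiN}. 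Using \eqref{eq3.L} together with the definition of $\Phi^{q,t;\alde}_\la$, for $X\in\Om_N$ their difference is
\[
E^{(N)}_\la(X):=\Phi^{q,t;\alde}_\la(X)-\Phi_{\la\mid N}(X)=\sum_{\nu\subseteq\la}\pi(\la,\nu;q,t;\alde)\Bigl(1-\tfrac{(t^N;q,t)_\la}{(t^N;q,t)_\nu}\Bigr)P_{\nu\mid N}(X).
\]

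Because $t^N\to0$ forces $(t^N;q,t)_\la$ and $(t^N;q,t)_\nu$ both to tend to $1$, the scalar coefficients above tend to $0$; and since each $P_{\nu\mid N}$ is a fixed polynomial in the power sums $p_k(X)=\sum_{x\in X}x^k$, which stay bounded on $K$ uniformly in $N$ (the $\tau$-sparse points of a configuration in $K$ decay geometrically toward $0$), I obtain $\sup_K|E^{(N)}_\la|\to0$ and a uniform bound $\sup_K|\Phi_{\la\mid N}|\le\const$. Expanding the integrand as $(\Phi_{\la\mid N}+E^{(N)}_\la)(\Phi_{\mu\mid N}+E^{(N)}_\mu)$, the leading term integrates to $0$ by Theorem~\ref{thm5.A} (here $\la\ne\mu$ lie in $\Y(N)$), while the three error terms are each dominated by a product of a uniformly bounded sup-norm with one tending to $0$, times the total mass $1$, so they vanish in the limit. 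One should also record that the coefficients $\pi(\la,\nu;q,t;\alde)$ are real: for the degenerate series $\al,\be$ are real and $\ga=\bar\de$, so complex conjugation merely swaps $\ga\leftrightarrow\de$, under which $\pi$ is invariant by Proposition~\ref{prop3.C}; hence the $\Phi^{q,t;\alde}_\la$ are real-valued and the pairing above is the genuine $L^2$ inner product.

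I expect the main obstacle to be exactly this orthogonality step, and specifically the mismatch it conceals: Theorem~\ref{thm5.A} concerns the honest $N$-variable polynomials $\Phi_{\la\mid N}$ on $\Om_N$, whereas the limiting objects are symmetric \emph{functions} on $\Om_\infty$, and on $\Om_N$ the two differ precisely by the normalization factors $(t^N;q,t)_\nu$. The heart of the argument is to show that this discrepancy is annihilated in the weak limit, which needs both $(t^N;q,t)_\nu\to1$ and a uniform-in-$N$ control of the Macdonald polynomials on the compact support $K$; securing that uniform bound through the power sums, along with the compactness of $K$, is the technical crux that legitimizes interchanging the limit with the integral. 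Combining orthogonality with the completeness from the first paragraph then exhibits $\{\Phi^{q,t;\alde}_\nu\}_{\nu\in\Y}$ as a complete orthogonal system.
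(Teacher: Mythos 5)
Your argument is correct and is exactly the route the paper intends: it omits the proof of Theorem \ref{thm7.A} with a pointer to the $t=q$ case in \cite{Ols-FAA-2017}, and the proof there is assembled from the same ingredients you use --- Stone--Weierstrass on the compact set of configurations in $[\be^{-1}q,\al^{-1}q]$ (as in Proposition \ref{prop5.A}) for completeness, and the transport of the finite-$N$ orthogonality of Theorem \ref{thm5.A} through the weak convergence of Theorem \ref{thm6.B}, with the stability of the coefficients $\pi(\la,\nu;q,t;\alde)$ (Remark \ref{rem3.A}) and $(t^N;q,t)_\nu\to1$ controlling the discrepancy between $\Phi^{q,t;\alde}_\la$ and $\Phi_{\la\mid N}$. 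You correctly identify the technical crux (the uniform-in-$N$ bound on $P_{\nu\mid N}$ over the common compact support, via boundedness of the power sums), and the observation that $\ga\leftrightarrow\de$ symmetry forces the $\pi$'s to be real is a nice touch.

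One loose end: to get an orthogonal \emph{basis}, and not merely a complete orthogonal system, you also need $\|\Phi^{q,t;\alde}_\la\|_{L^2(\MM^{q,t;\alde}_\infty)}>0$ for every $\la$. Your own computation gives $\|\Phi^{q,t;\alde}_\la\|^2=\lim_{N\to\infty}h_{\la\mid N}(q,t;\al,\be,\ga t^{1-N},\de t^{1-N})$, but positivity of each $h_{\la\mid N}$ does not by itself prevent the limit from vanishing; you should invoke the explicit product formula for the squared norms of the multivariable big $q$-Jacobi polynomials (Stokman \cite{St}), whose factors converge as $t^{1-N}\to\infty$-compensated products to nonzero limits, to conclude that the limiting norm is strictly positive. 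This is a minor addition, but without it the word ``basis'' is not fully justified.
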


\subsection{Degeneration to discrete beta-ensembles}\label{sect7.2}
As was shown by Borodin and the author \cite{Ols-2003a}, \cite{BO-2005}, the problem of harmonic analysis on the infinite-dimensional unitary group $U(\infty)$ leads to a concrete model of discrete beta-ensembles with $\tau=1$. A more general model with arbitrary $\tau>0$ was studied by the author in \cite{Ols-2003b}. Following the notation of \eqref{eq1.Vdiscr1}, \eqref{eq1.Vdiscr2}, \eqref{eq1.Vdiscr} it can be defined as follows
\begin{multline}\label{eq7.A}
\PP_N^{\tau;\zw}(\nu):=\frac1{Z_N(\tau;\zw)}\prod_{1\le i<j\le N}\frac{\Ga(n_i-n_j+1)\Ga(n_i-n_j+\tau)}{\Ga(n_i-n_j)\Ga(n_i-n_j+1-\tau)}\\
\times \prod_{i=1}^N\frac{\Ga(-z+(1-N)\tau+n_i)\Ga(-z'+(1-N)\tau+n_i)}{\Ga(w+1+n_i)\Ga(w'+1+n_i)},
\end{multline} 
where 
$$
\nu=(\nu_1\ge\dots\ge\nu_N)\in\Z^N, \quad n_i:=\nu_i+(N-i)\tau,
$$
$(\zw)$ is a quadruple of parameters, and $Z_N(\tau;\zw)$ is the normalization factor. 

There are again the principal, degenerate, and (in the case of rational $\tau$) complementary series of parameters for which $\PP^{\tau;\zw}_N$ is a probability measure. For instance, the principal series is defined by the conditions
\begin{equation}\label{eq7.princ}
z=\overline{z'}\in\C\setminus\R, \quad w=\overline{w'}, \quad \Re(z+w)>-\tfrac12. 
\end{equation}
Note that there exists an explicit multiplicative expression for the normalization factor.   

The measures \eqref{eq7.A} satisfy the coherency relations
\begin{equation}\label{eq7.B}
\PP_N^{\tau;\zw} L^N_{N-1}=\PP_{N-1}^{\tau;\zw}, \qquad N\ge2,
\end{equation}
with the stochastic links \cite[(4.8)]{Ols-2003b}
\begin{multline}\label{eq7.C}
L^N_{N-1}(\nu,\mu):=
\prod_{1\le i<j\le N}\frac{\Ga(\mu_i-\nu_j+(j-i)\tau)}{\Ga(\mu_i-\nu_j+(j-i-1)\tau+1)}\, \cdot
\prod_{1\le i\le j\le N-1}\frac{\Ga(\nu_i-\mu_j+(j-i+1)\tau)}{\Ga(\nu_i-\mu_j+(j-i)+1)}\\
\times \prod_{1\le i<j\le N}\frac{\Ga(\nu_i-\nu_j+(j-i-1)\tau+1)}{\Ga(\nu_i-\nu_j+(j-i+1)\tau)} \,\cdot
\prod_{1\le i<j\le N-1}(\mu_i-\mu_j+(j-i)\tau) \,\cdot \frac{\Ga(N\tau)}{(\Ga(\tau))^N}\,\cdot\one_{\mu\prec\nu},
\end{multline}
where $\mu=(\mu_1\ge\dots\ge\mu_{N-1})\in\Z^{N-1}$ and the symbol $\one_{\mu\prec\nu}$ equals $1$ or $0$ depending on whether $\mu$ and $\nu$ interlace (i.e. $\la_i\ge\mu_i\ge\la_{i+1}$ for $i=1,\dots,N-1$)  or not. 

In the case $\tau=1$, the right-hand side of  \eqref{eq7.C} substantially simplifies and reduces to
$$
(N-1)!\,\dfrac{\prod\limits_{1\le i<j\le N-1}(\mu_i-\mu_j+j-i)}{\prod\limits_{1\le i<j\le N}(\nu_i-\nu_j+j-i)}\, \cdot \one_{\mu\prec\nu}.
$$
In this case, the coherency relation \eqref{eq7.B} was first obtained in \cite{Ols-2003a} from a representation-theoretic construction. 

In the case of arbitrary $\tau>0$, the coherency relation was proved in \cite{Ols-2003b} by combinatorial methods, in three different ways. Some ideas of \cite{Ols-2003b} are used in the present paper.  

Note that the fact that \eqref{eq7.C} is a stochastic matrix is also nontrivial.  

\begin{proposition}\label{prop7.A}
Assume for definiteness that $(\zw)$ satisfies the constraints \eqref{eq7.princ}. Fix $\tau>0$ and set, for $q\in(0,1)$,  
\begin{equation}\label{eq7.abcd}
t=q^\tau, \quad \al=q^{w+1}, \quad \be=q^{w'+1}, \quad \ga=q^{-z}, \quad \de=q^{-z'}.
\end{equation}

For each $N$ and each $N$-tuple of integers $\nu=(\nu_1\ge\dots\ge\nu_N)$  the following limit relation holds
\begin{equation}\label{eq7.DegOne}
\lim_{q\to1}\MM^{q,t;\alde}_N(q^{\nu_N}, q^{\nu_{N-1}+\tau}, \dots,q^{\nu_1+(N-1)\tau})=\PP^{\tau; \, z,z',w,w'}_N(\nu).
\end{equation}
\end{proposition}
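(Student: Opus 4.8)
The plan is to compute the $q\to1$ asymptotics of the unnormalized weight $\wt\MM^{q,t;\alde}_N(X(\nu))$ factor by factor, where $X(\nu)=(q^{n_N},\dots,q^{n_1})$ with $n_i:=\nu_i+(N-i)\tau$ (so the configuration lies in $\L_+$, and we take $\z_+=1$), and then to pass from unnormalized weights to the normalized probabilities. The one tool needed throughout is the $q$-gamma function $\Ga_q(x):=(1-q)^{1-x}(q;q)_\infty/(q^x;q)_\infty$, which satisfies $\Ga_q(x)\to\Ga(x)$ as $q\to1$ away from the poles of $\Ga$; here all arguments that occur are of the form $w+1+n_i$, $-z+(1-N)\tau+n_i$, $n_i-n_j\pm\tau$, etc., which stay off the nonpositive integers because $z,w$ are non-real. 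The idea is that every $q$-Pochhammer $(u;q)_\infty$ appearing has base $u=q^{s}$ with $s\to s$ fixed, so it is governed by $\Ga_q$, whereas the overall powers of $(1-q)$ that are generated turn out to be $\nu$-independent.

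First I would treat the one-particle weight. Substituting $x=q^{n_i}$ and the parameters \eqref{eq7.abcd} into \eqref{eq4.A}, the four $q$-Pochhammer symbols become $(q^{s};q)_\infty$ with $s\in\{w+1+n_i,\,w'+1+n_i,\,-z+(1-N)\tau+n_i,\,-z'+(1-N)\tau+n_i\}$. Rewriting each through $\Ga_q$, the four factors $(q;q)_\infty$ cancel, and the ratio of the remaining $\Ga_q$'s converges to exactly the $i$-th one-particle factor of \eqref{eq7.A}, namely $\Ga(-z+(1-N)\tau+n_i)\Ga(-z'+(1-N)\tau+n_i)/[\Ga(w+1+n_i)\Ga(w'+1+n_i)]$. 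A short bookkeeping of the accompanying powers of $(1-q)$ shows that the terms proportional to $n_i$ cancel, so the total exponent is independent of $n_i$; together with the prefactor $(1-q)|x|=(1-q)q^{n_i}$ (and $q^{n_i}\to1$) this gives $W(q^{n_i};q;\al,\be,\ga t^{1-N},\de t^{1-N})=(1-q)^{\kappa}(1+o(1))\,g_i(\nu)$, where $\kappa$ does not depend on $\nu$ and $g_i(\nu)$ is that one-particle factor.

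Next, the interaction $V_{q,t}(X(\nu))$. Since $X(\nu)$ is entirely positive, formula \eqref{eq5.D} collapses to $\prod_{y<x}(x-y)\cdot\prod_{y<x}x^{2\tau-1}(qt^{-1}yx^{-1};q)_\infty/(tyx^{-1};q)_\infty$. For a pair $y=q^{n_i}<x=q^{n_j}$ with $i<j$ one has $n_i-n_j=\nu_i-\nu_j+(j-i)\tau>0$ and $yx^{-1}=q^{n_i-n_j}$, whence $(x-y)/(1-q)\to n_i-n_j=\Ga(n_i-n_j+1)/\Ga(n_i-n_j)$, while the $q$-Pochhammer ratio, again rewritten through $\Ga_q$, contributes $(1-q)^{2\tau-1}(1+o(1))\,\Ga(n_i-n_j+\tau)/\Ga(n_i-n_j+1-\tau)$, and $x^{2\tau-1}=q^{(2\tau-1)n_j}\to1$. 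Multiplying over the $\binom{N}{2}$ pairs yields $V_{q,t}(X(\nu))=(1-q)^{2\tau\binom{N}{2}}(1+o(1))\,V^{discr}_\tau(\mathbf n)$, again with a $\nu$-independent power of $(1-q)$. Since $C_N(N;\z_+,\z_-)$ is a positive $\nu$-independent constant, assembling the three pieces gives $\wt\MM^{q,t;\alde}_N(X(\nu))=c_N(q)\,(1+o(1))\,\wt\PP(\nu)$, where $\wt\PP(\nu)$ is the unnormalized weight of \eqref{eq7.A} and $c_N(q)>0$ is independent of $\nu$.

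Finally I would convert this into the statement for the normalized measures. Writing $\MM^{q,t;\alde}_N(X(\nu))=\wt\MM^{q,t;\alde}_N(X(\nu))/Z^{q,t;\alde}_N$, it remains to show $Z^{q,t;\alde}_N/c_N(q)\to Z_N(\tau;\zw)$; granting this, the pointwise asymptotics of the previous paragraph give $\MM^{q,t;\alde}_N(X(\nu))\to\wt\PP(\nu)/Z_N(\tau;\zw)=\PP^{\tau;\zw}_N(\nu)$, which is \eqref{eq7.DegOne}. This normalization step is the main obstacle, for two reasons: the sum defining $Z^{q,t;\alde}_N$ runs over all of $\Om_N$, including configurations with particles on $\L_-$, whose weights carry factors $(u;q)_\infty$ with $u$ bounded away from $1$ and hence growing or decaying like $\exp(O((1-q)^{-1}))$; and one must justify interchanging $q\to1$ with the infinite summation. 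I would bypass both difficulties by invoking the explicit multiplicative evaluations of the two normalization constants --- for $Z^{q,t;\alde}_N$ the product formula of Remark \ref{rem5.A}, extended to the principal series by the analytic continuation of subsection \ref{sect6.3}, and for $Z_N(\tau;\zw)$ the product formula quoted after \eqref{eq7.A} --- and computing the $q\to1$ limit of the former by the same $\Ga_q$ mechanism. This simultaneously shows that the $\L_-$-contributions are asymptotically negligible and yields $Z^{q,t;\alde}_N/c_N(q)\to Z_N(\tau;\zw)$ directly, without any tightness estimate. (Alternatively, the explicit-formula step can be replaced by a dominated-convergence argument based on the uniform bounds of Lemma \ref{lemma4.A}, Lemma \ref{lemma5.A}, and Proposition \ref{prop5.B}, upgraded to hold uniformly for $q$ near $1$.)
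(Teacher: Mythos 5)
The paper states Proposition \ref{prop7.A} without proof --- section \ref{sect7} consists, by the author's own announcement, of ``concluding remarks without proofs'' --- so there is no argument in the text to compare yours against; I can only assess your proposal on its own terms. Its computational core is correct. With $\zeta_+=1$ the configuration $X(\nu)=(q^{n_N},\dots,q^{n_1})$, $n_i=\nu_i+(N-i)\tau$, does lie in $\Om_N$, every Pochhammer symbol in $\prod_i W(q^{n_i};q;\al,\be,\ga t^{1-N},\de t^{1-N})$ and in $V_{q,t}(X(\nu))$ has base $q^{s}$ with $s$ fixed and off the poles of $\Ga$ (note that \eqref{eq7.princ} together with \eqref{eq6.princ} forces $w,z\notin\R$), the $(1-q)$-exponents you extract are indeed independent of $\nu$ ($s_3+s_4-s_1-s_2$ per particle, $2\tau$ per pair), and the surviving $\Ga_q$-ratios converge to exactly the integrand of \eqref{eq7.A}. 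So $\wt\MM^{q,t;\alde}_N(X(\nu))=c_N(q)(1+o(1))\wt\PP(\nu)$ is established.

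The genuine gap is the normalization step, which you flag but do not close, and neither of your two proposed routes works as stated. Via Fatou, your pointwise asymptotics already give $\liminf_{q\to1} Z^{q,t;\alde}_N/c_N(q)\ge Z_N(\tau;\zw)$; the substantive half is the reverse inequality, i.e.\ that no mass escapes to $\L_-$ or to spatial infinity relative to $c_N(q)\to\infty$. (i) The closed formula of Remark \ref{rem5.A} (continued to the principal series) is \emph{not} amenable to the ``$\Ga_q$ mechanism'': already for $N=1$, formula \eqref{eq4.Z} contains theta factors $\th_q(\zeta_-/\zeta_+)$, $\th_q(c\zeta_\pm)$, $\th_q(d\zeta_\pm)$ whose arguments are bounded away from $1$ and not of the form $q^{s}$ with fixed $s$; each such factor is of size $\exp(\const/(1-q))$ as $q\to1$, so one needs the modular transformation of $\th_q$ and a verification that the exponential singularities cancel, plus the explicit product formula for $Z_N(\tau;\zw)$ (not recorded in this paper) to match the answer. (ii) The dominated-convergence route cannot simply ``upgrade'' Lemma \ref{lemma4.A} and Proposition \ref{prop5.B}: under \eqref{eq7.abcd} the geometric ratio governing the tail is $q^{2\Re(z+w)+1}$, which tends to $1$ as $q\to1$, so the dominating function of Proposition \ref{prop5.B} has total mass of order $(1-q)^{-1}$ per particle and is not uniformly summable; a genuinely new estimate, uniform in $q$ near $1$ and strong enough to beat $c_N(q)$, is required, both for the far tails on $\L_+$ and for configurations meeting $\L_-$ (where numerator and denominator of the weight are separately of size $\exp(O(1/(1-q)))$). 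Until $Z^{q,t;\alde}_N/c_N(q)\to Z_N(\tau;\zw)$ is actually proved by one of these means, the passage from unnormalized to normalized weights --- and hence \eqref{eq7.DegOne} itself --- remains open.
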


Note that the principal series measures $\MM^{q,t;\alde}_N$ with parameters \eqref{eq7.abcd} are well defined, at least for $q$ close enough to $1$: indeed, from \eqref{eq7.princ} it is seen that the quadruple $(\alde)$ defined by  \eqref{eq7.abcd} satisfies the necessary constraints \eqref{eq6.princ}. 

Because both $\PP^{\tau; \, z,z',w,w'}_N$ and $\MM^{q,t;\alde}_N$ are probability measures, the limit relation \eqref{eq7.DegOne} implies that,  for an arbitrarily small fixed interval $(1-\eps,1+\eps)$ around the point $1$, 
$$
\lim_{q\to1}\sum_{\{X\in\Om_N: \; X\subset (1-\eps,1+\eps)\}}\MM^{q,t;\alde}(X)=1.
$$
In particular, the configurations containing particles on the left of zero are asymptotically negligible.  

Thus, $\PP^{\tau;s}_N$ can be obtained from $\MM^{q,t;\alde}$ via a scaling limit transition (we have to focus on a small vicinity of $1$ and zoom in with a large scale factor of order $\log(q^{-1})$). 

\subsection{Degeneration to continuous beta-ensembles}\label{sect7.3}

Fix $s\in\C$ with $\Re(s)>\frac12$ and $\tau>0$. We define the $N$-dimensional \emph{s-measure} (or \emph{generalized Cauchy distribution})  as the following probability measure on the domain $\mathcal D_N:=\{u\in\R^N: u_1>\dots>u_N\}$ in $\R^N$:
\begin{equation}\label{eq7.D}
P^{\tau;s}_N(du)=\frac{1}{Z_N(s,\tau)} \prod_{k=1}^N\frac{du_k}{(1+iu_k)^{s+(N-1)\tau}(1-iu_k)^{\bar s+(N-1)\tau}}\cdot\prod_{1\le k<\ell\le N}(u_k-u_\ell)^{2\tau}, \quad u\in\mathcal D_N,
\end{equation}
where $Z_N(s,\tau)$ is the normalization constant and $du=du_1\dots du_N$. The s-measures were examined in Borodin--Olshanski \cite{BO-2001} and Witte--Forrester \cite{WF} in the case $\tau=1$, and in Neretin \cite{Ner-2003} and Assiotis--Najnudel \cite{AN} for arbitrary $\tau>0$. 

The s-measures satisfy the coherency relations of the form
\begin{equation}\label{eq7.E}
\int_{u\in\mathcal D_N} P^{\tau;s}_N(du) \mathcal L(u,dv)=P^{\tau;s}_{N-1}(dv), \qquad N\ge2,
\end{equation}
with the Markov kernels $\mathcal L^N_{N-1}(u,dv)$ on $\mathcal D_N\times\mathcal D_{N-1}$ coming from the integrand of the  \emph{Dixon--Anderson integral} (Forrester--Warnaar \cite[section 2.1]{ForrWar}):
\begin{multline}\label{eq7.F}
\mathcal L(u,dv):=\frac{\Ga(N\tau)}{(\Ga(\tau))^N}\prod_{1\le i<j\le N}(u_i-u_j)^{1-2\tau}\,\cdot \prod_{1\le i<j\le N-1}(v_i-v_j)\\
\times \prod_{i=1}^{N-1}\prod_{j=1}^N |v_i-u_j|^{\tau-1}\,\cdot\one_{v\prec u}\,\cdot dv,
\end{multline}
where 
$$
v=(v_1>\dots>v_{N-1})\in\mathcal D_{N-1}, \qquad dv:=dv_1\dots dv_{N-1},
$$
and $\one_{v\prec u}$ expresses the interlacement condition $u_i>v_i>u_{i+1}$. 

The coherency relation \eqref{eq7.E} follows from a more general integral evaluated in Lemma 2.2 of Neretin's paper \cite{Ner-2003}. 

Now we fix the parameters $s$ and $\tau$, set
$$
\al=-iq^{s}, \quad \be=iq^{\bar s}, \quad \ga=-i, \quad \de=i,
$$
and consider the corresponding measures $\MM^{q,t;\alde}_N$, $N=1,2,\dots$\,. Again, they are well defined, at least for $q$ close to $1$: indeed, then the constraints \eqref{eq6.princ} are satisfied due to the condition $\Re s>\frac12$. 

\begin{proposition}\label{prop7.B}
For each $N$, as $q\to1$, the measures\/ $\MM^{q,t;\alde}_N$ weakly converge to the s-measure $P^{\tau;s}_N$. 
\end{proposition}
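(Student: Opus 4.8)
The plan is to realize each measure $\MM^{q,t;\alde}_N$ as a discrete measure on the ordered domain $\mathcal D_N$, identifying a configuration $X\in\Om_N$ with the tuple $(x_1>\dots>x_N)$, and then to show that as $q\to1$ its density against Lebesgue measure converges to the density of $P^{\tau;s}_N$. The crucial structural observation is that the lattice $\L=\zeta_\pm q^\Z$ becomes dense in $\R^*$ and that the prefactor $(1-q)|x|$ in \eqref{eq4.A} is exactly the spacing $|x-qx|$ of $\L$ at the point $x$; hence for a continuous test function $g$ the sum $\sum_{x\in\L}(1-q)|x|\,g(x)$ is a Riemann sum converging to $\int_{\R^*}g(x)\,dx$. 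Thus it suffices to prove pointwise convergence of the unnormalized density, secure a $q$-uniform integrable majorant, and then divide by the normalizations.

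First I would prove pointwise convergence at a fixed interior point $u=(u_1>\dots>u_N)\in\mathcal D_N$, approximated by lattice tuples $x_i\to u_i$. The only analytic input is the elementary limit $(q^\gamma y;q)_\infty/(y;q)_\infty\to(1-y)^{-\gamma}$ as $q\to1$. Applied to the two Pochhammer ratios in the weight with $\al=-iq^{s}$, $\be=iq^{\bar s}$, $\ga t^{1-N}=-iq^{\tau(1-N)}$, $\de t^{1-N}=iq^{\tau(1-N)}$ (so the ratio of arguments in each pair is the fixed power $q^{s+\tau(N-1)}$, resp. $q^{\bar s+\tau(N-1)}$) it gives
$$\frac{W(x;q;\al,\be,\ga t^{1-N},\de t^{1-N})}{(1-q)|x|}\ \longrightarrow\ (1+ix)^{-(s+(N-1)\tau)}(1-ix)^{-(\bar s+(N-1)\tau)},$$
which is precisely the one-particle factor of \eqref{eq7.D}. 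For the interaction I would use the general formula \eqref{eq5.D}: the same limit with $\gamma=1-2\tau$ turns each ratio $x^{2\tau-1}(q^{1-\tau}yx^{-1};q)_\infty/(q^\tau yx^{-1};q)_\infty$ (and its $X^-$-counterpart, and the cross terms $x\in X^+$, $y\in X^-$) into $(x-y)^{2\tau-1}$, so that combined with the Vandermonde prefactor $\prod_{y<x}(x-y)$ one obtains $V_{q,t}(X)\to\prod_{1\le k<\ell\le N}(u_k-u_\ell)^{2\tau}$. Finally the constants \eqref{eq5.A} disappear: a short computation with the theta ratio in \eqref{eq5.L}--\eqref{eq5.M}, again via the same Pochhammer limit, yields $\psi_\tau(u)\to1$ for each fixed $u<0$, hence $C_N(k;\zeta_+,\zeta_-)\to1$ for every $k$. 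Collecting these three limits shows that the unnormalized density of $\MM^{q,t;\alde}_N$ converges pointwise on $\mathcal D_N$ to the unnormalized density of $P^{\tau;s}_N$. For $\tau\in\Z_{\ge1}$ one may instead use the simpler formula \eqref{eq1.E}, where $q^r\to1$ makes the limit immediate.

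To upgrade pointwise convergence to weak convergence I would establish a $q$-uniform dominating function on $\mathcal D_N$, Lebesgue-integrable, bounding the unnormalized densities for all $q$ in a left neighbourhood of $1$. This requires strengthening Lemma \ref{lemma4.A} and Lemma \ref{lemma5.A} so that the implied constants stay bounded as $q\to1$; the tail decay of the resulting bound in each variable is $|x|^{-2\Re s}$ (the growth $|x|^{2\tau(N-1)}$ of the Vandermonde power against the decay $|x|^{-2\Re s-2\tau(N-1)}$ of the Cauchy factor), which is integrable near infinity precisely because $\Re s>\tfrac12$, while near $0$ and near the diagonals the majorant is locally bounded and integrable. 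With such a majorant, dominated convergence for Riemann sums gives $\sum_{X\in\Om_N}f(X)\,\wt\MM^{q,t;\alde}_N(X)\to\int_{\mathcal D_N}f(u)\,[\text{unnorm. }P^{\tau;s}_N](du)$ for every bounded continuous $f$; taking $f\equiv1$ shows $Z^{q,t;\alde}_N\to Z_N(s,\tau)$, and dividing yields the asserted weak convergence.

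The main obstacle is this $q$-uniform domination: producing a single integrable majorant valid for all $q$ near $1$. The pointwise limits are routine once the Pochhammer asymptotics are in hand, but the estimates in Lemmas \ref{lemma4.A} and \ref{lemma5.A} are stated for fixed $q$, and one must show their constants remain bounded as $q\to1$, controlling the $q$-Pochhammer ratios uniformly in $q$ rather than only in $a,b$ --- especially in the general non-integer $\tau$ regime, where \eqref{eq5.D} carries the infinite products and the quasi-constant $\psi_\tau$. The borderline role of $\Re s>\tfrac12$ in securing tail integrability signals that this uniform tail control is exactly where the real work lies.
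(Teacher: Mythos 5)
Your proposal is correct and follows exactly the route the paper indicates: the paper states Proposition \ref{prop7.B} without proof, remarking only that it rests on the Gasper--Rahman limit $\lim_{q\to1}(vq^A;q)_\infty/(vq^B;q)_\infty=(1-v)^{B-A}$, and your argument is precisely an elaboration of that — the pointwise limits of the weight, of $V_{q,t}$ via \eqref{eq5.D}, and of $\psi_\tau\to1$ all check out, and the identification of $\Re s>\tfrac12$ as what makes the tail majorant integrable is the right diagnosis. The one piece you flag but do not execute, a $q$-uniform integrable majorant (the constants in Lemmas \ref{lemma4.A} and \ref{lemma5.A} as written do \emph{not} stay bounded as $q\to1$, since individual $(u;q)_\infty$ factors degenerate and only ratios survive), is indeed where the remaining work lies, but your plan for it is sound.
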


Note that in this limit transition (in contrast to that of Proposition \ref{prop7.A}) there is no scaling.

The proof is based on the limit relation (Gasper--Rahman \cite[ch. 1, (3.19)]{GR})
$$
\lim_{q\to1}\frac{(vq^A;q)_\infty}{(vq^B;q)_\infty}={(1-v)^{B-A}}.
$$

Finally, note that the Markov kernels $\mathcal L^N_{N-1}$ are readily obtained from the stochastic matrices $L^N_{N-1}$ (formula \eqref{eq7.C}) by a scaling limit transition which turns the lattice $\Z$ into the real line $\R$. In this way one can also degenerate the measures $\PP^{\tau; \, z,z',w,w'}_N$ into the s-measures  $P^{\tau;s}_N$ and obtain the coherency relation \eqref{eq7.E} from the coherency relation \eqref{eq7.B}.

\bigskip

\noindent \textbf{Funding}

\medskip

\noindent This work was supported by the Russian Science Foundation, project 20-41-09009.

\bigskip

\noindent\textbf{Acknowledgments}
\medskip

\noindent I am grateful to Cesar Cuenca and the anonymous referee for valuable comments.


\begin{thebibliography}{AA}

\bibitem{AGZ}
Anderson, G. W.,  Guionnet,A., Zeitouni, O.: An introduction to random matrices.  Cambridge Studies in Advanced Mathematics 118. Cambridge Univ. Press (2010).

\bibitem{AA-1985}
Andrews, G. E., Askey, R.: Classical orthogonal polynomials. In: Polyn\^omes orthogonaux. Lectures Notes in Math. 1171. Springer,  pp. 36--62 (1985). 

\bibitem{Ask-1981}
Askey, R.: A $q$-extension of Cauchy's form of the beta integral. Quarterly Journal of Mathematics  Oxford Oxford (2), 32 255--266  (1981).

\bibitem{Ask-SIAM}
Askey, R.: Some basic hypergeometric extensions of integrals of Selberg and Andrews. SIAM J. Math. Anal. 11, 938--951  (1980).

\bibitem{AN}
Assiotis, T., Najnudel, J.: The boundary of the orbital beta process, arXiv:1905.08684. 

\bibitem{BC}
Borodin, A., Corwin, I.: Macdonald processes. Probability Theory and Related Fields 158, 225--400  (2014). 

\bibitem{BGG-2017}
Borodin, A.,  Gorin, V., Guionnet, A.: Gaussian asymptotics of discrete $\be$-ensembles. Publ. Math. IHES 125, 1--78  (2017). 

\bibitem{BO-2001}
Borodin, A.,  Olshanski, G.: Infinite random matrices and ergodic measures. Communications
in Mathematical Physics 223, no. 1, 87--123  (2001).

\bibitem{BO-2005}
Borodin, A.,  Olshanski, G.: Harmonic analysis on the infinite-dimensional unitary group and determinantal point processes. Annals of Mathematics 161, no. 3, 1319--1422  (2005).

\bibitem{BO-2013}
Borodin, A.,  Olshanski, G.: The Young bouquet and its boundary. Moscow Mathematical Journal 13:2, 193--232  (2013). 

\bibitem{BO-Book}
Borodin, A.,  Olshanski, G.: Representations of the infinite symmetric group. Cambridge University Press, Cambridge (2017).

\bibitem{CGO}
Cuenca, C., Gorin, V., Olshanski, G.: The elliptic tail kernel. Intern. Math. Research. Notices (IMRN), online publication March 3, 2020, https://doi.org/10.1093/imrn/rnaa038, arXiv:1907.11841. 

\bibitem{CO}
Cuenca, C., Olshanski, G.: Elements of the $q$-Askey scheme in the algebra of symmetric functions. Moscow Math. J., to appear, arXiv:1808.06179. 

\bibitem{DG}
Deift, P., Gioev, D.: Random matrix theory: Invariant ensembles and universality. Courant Lecture Notes in Mathematics 18. Amer. Math. Society, Providence, RI (2009). 

\bibitem{DK}
Dimitrov, E., Knizel, A.: Log-gases on quadratic lattices via discrete loop equations and q-boxed plane partitions. J. Funct. Anal. 276, 3067--3169  (2019).

\bibitem{DE}
Dumitriu, I., Edelman, A.: Matrix models for beta ensembles. J. Math. Phys. 43, 5830--5847  (2002). 

\bibitem{Dyn-1971}
Dynkin, E. B.: Initial and final behavior of trajectories of Markov processes. Russian Mathematical Surveys 26:4, 165--185  (1971).

\bibitem{Dyn-1978}
Dynkin, E. B.: Sufficient statistics and extreme points. Annals of Probability 6, 705--730  (1978).

\bibitem{Evans}
Evans, R. J.: Multidimensional beta and gamma integrals. Contemp. Math. 166 (1994), 341--357. 

\bibitem{F-2010}
Forrester, P. J.: Log-gases and random matrices. Princeton Univ. Press, Princeton and Oxford (2010).

\bibitem{ForrWar}
Forrester, P. J., Warnaar, S. O.: The importance of Selberg integral. Bull. Amer. Math. Soc. 48, 489--534  (2008). 

\bibitem{GR}
Gasper, G., Rahman, V.: Basic hypergeometric series. Encyclopedia of Mathematics and its Applications 96. Second edition. Cambridge Univ. Press. Cambridge (2004).

\bibitem{GO-2016}
Gorin, V., Olshanski, G.: A quantization of the harmonic analysis on the infinite-dimensional unitary group. Journal of Functional Analysis 270, 375--418  (2016). 

\bibitem{Gro-2009}
Groenevelt, W.:  The vector-valued big q-Jacobi transform. Constructive Approximation 29, 85--127  (2009).

\bibitem{Gro-2011}
Groenevelt, W.: Quantum analogs of tensor product representations of su(1, 1), SIGMA 7, 077, 17pp  (2011).

\bibitem{GroK-2011}
Groenevelt, W., Koelink, E.: The indeterminate moment problem for the q-Meixner polynomials, J. Approx. Theory 163, 836--863  (2011).

\bibitem{IF}
Ito, M., Forrester, P. J.: A bilateral extension of the $q$-Selberg  integral. Trans. Amer. Math. Soc. 369, 2843--2878  (2017). 

\bibitem{Knop-CMH}
Knop, F.: Symmetric and non-symmetric quantum Capelli polynomials. Comment.
Math. Helv., 72, no. 1, 84--100  (1997).

\bibitem{KS}
Koekoek, R., Swarttouw, R. F.: The Askey--scheme of hypergeometric orthogonal polynomials and its q-analogue, arXiv:math/9602214.

\bibitem{Koo}
Koornwinder, T. H.: Askey--Wilson polynomials for root systems of type BC, in:
D. St. P. Richards, editor, Hypergeometric Functions on Domains of Positivity, Jack polynomials,
and Applications (Tampa, FL, 1991), Contemp. Math., Vol. 138, AMS, Providence,
RI, 189--204 (1992).

\bibitem{Koo-2014}
Koornwinder, T. H.:  Additions to the formula lists in ``Hypergeometric orthogonal polynomials and their q-analogues” by Koekoek, Lesky and Swarttouw'', arXiv:1401.0815.

\bibitem{Mac-1995}
Macdonald, I. G.: Symmetric functions and Hall polynomials. 2nd ed., Oxford Univ. Press (1995).

\bibitem{Mehta}
Mehta, M. L.: Random matrices. 3d edition. Elsevier (2004).

\bibitem{Mey-1966}
Meyer, P.-A.: Probability and potentials. Blaisdell (1966).

\bibitem{Ner-2003}
Neretin, Yu. A.: Rayleigh triangles and non-matrix interpolation of matrix beta integrals. Sbornik Math. 194, no. 4, 515--540  (2003). 

\bibitem{Ok-MRL}
Okounkov, A.: Binomial formula for Macdonald polynomials and applications, Math. Res. Lett. 4, 533--553  (1997).

\bibitem{Ok-CM}
Okounkov, A.: (Shifted) Macdonald polynomials: q-Integral representation and combinatorial formula. Comp. Math. 112, 147--182  (1998).

\bibitem{Ok-AAM}
Okounkov, A.: On Newton interpolation of symmetric functions: A characterization of interpolation
Macdonald polynomials. Adv. Appl. Math. 20, 395--428  (1998).

\bibitem{Ok-TG-1998}
Okounkov, A.: $BC$-type interpolation Macdonald polynomials and binomial formula for Koornwinder polynomials. Transf. Groups 3, no. 2, 181--207  (1998).

\bibitem{Ok-TG-2003}
Okounkov, A.: Combinatorial formula for Macdonald polynomials and generic Macdonald polynomials. Transf. Groups 8, no. 3, 293--305  (2003).

\bibitem{Ols-2003a}
Olshanski, G.: The problem of harmonic analysis on the infinite-dimensional unitary group, Journal of Functional Analysis 205, 464--524 (2003).

\bibitem{Ols-2003b}
Olshanski, G.: Probability measures on dual objects to compact symmetric spaces and hypergeometric Identities.  Functional Analysis and its Applications  37, no. 4, 281--301  (2003).

\bibitem{Ols-FAA-2017}
Olshanski, G.: An analogue of big $q$-Jacobi polynomials in the algebra of symmetric functions. Functional Analysis and its Applications 51, no. 3, 204--220  (2017).

\bibitem{Ols-2019}
Olshanski, G.: Interpolation Macdonald polynomials and Cauchy-type identities. J. Comb. Theory, Series A 162, 65--117  (2019). 

\bibitem{Ols-MacdonaldOne}
Olshanski, G.: Macdonald polynomials and extended Gelfand--Tsetlin graph, arXiv:2007.06261. 

\bibitem{Rains}
Rains, E. M.: $BC_n$-symmetric polynomials. Transformation Groups 10:1, 63--102  (2005).

\bibitem{Sahi-PM}
Sahi, S.: The spectrum of certain invariant differential operators associated to a Hermitian
symmetric space. In Lie theory and Geometry. Progr. Math. 123, Birkha\"user, Boston, 
569--576 (1994).

\bibitem{Sahi-IMRN}
Sahi, S.: Interpolation, integrality, and a generalization of Macdonald’s polynomials. International Mathematics Research Letters, no. 10, 457--471 (1996).  

\bibitem{St}
Stokman, J. V.: Multivariable big and little $q$-Jacobi polynomials. SIAM J. Math. Anal. 28, No. 2, pp. 452--480  (1997). 

\bibitem{St-2000}
Stokman, J. V.: On BC type basic hypergeometric orthogonal polynomials. Trans. Amer. Math. Soc. 352, 1527--1579  (2000).

\bibitem{StK}
Stokman, J. V., Koornwinder, T. H.: Limit transitions for BC type multivariable orthogonal polynomials. Canad. J. Math. 49, 374--405 (1997).

\bibitem{TV}
Tarasov, V.,  Varchenko, A.:  Geometry of $q$-hypergeometric functions, quantum affine algebras and elliptic quantum groups. Ast\'erisque 246 (1997). 

\bibitem{Win-1985}
Winkler, G.: Choquet order and simplices. Springer Lect. Notes Math. 1145 (1985).

\bibitem{WF}
Witte, N. S.,  Forrester, P. J.: Gap probabilities in the finite and scaled Cauchy random matrix ensembles. Nonlinearity 13, no. 6, 1965-1986  (2000). 





\end{thebibliography}
\end{document}